\PassOptionsToPackage{final}{graphicx}
\PassOptionsToPackage{nosumlimits,nonamelimits}{amsmath}
\PassOptionsToPackage{colorlinks,linkcolor={blue},citecolor={blue},urlcolor={red},breaklinks=true,final}{hyperref}

\documentclass[a4paper,UKenglish,cleveref, autoref, thm-restate, numberwithinsect,final]{lipics-v2021}
\bibliographystyle{plainurl} %

\nolinenumbers

\hideLIPIcs

\sloppy
\usepackage{booktabs}   %
\usepackage{subcaption} %

\providecommand{\comma}{,\operatorname{}\linebreak[1]}           %

\usepackage[utf8]{inputenc}

\usepackage{hypcap}
\setcounter{tocdepth}{3}

\usepackage{ifdraft}
\ifdraft{
  \usepackage[draft]{showlabels}
  \usepackage[layout=footnote,marginclue,draft]{fixme}
  \FXRegisterAuthor{sg}{asg}{SG}	%
  \FXRegisterAuthor{pn}{apn}{PN}	%
  \FXRegisterAuthor{ls}{als}{LS}	%
  \FXRegisterAuthor{pw}{apw}{PW}	%
  
  }{
  \usepackage[final]{showlabels}
  
  \usepackage[layout=footnote,final]{fixme}
  \FXRegisterAuthor{sg}{asg}{SG}	%
  \FXRegisterAuthor{pn}{apn}{PN}	%
  \FXRegisterAuthor{ls}{als}{LS}	%
  \FXRegisterAuthor{pw}{apw}{PW}	%
}

\usepackage{cancel}
\usepackage{bbm}
\usepackage{dsfont}
\usepackage{mathtools}
\usepackage[utf8]{inputenc}

\usepackage{tikz-cd}
\usetikzlibrary{decorations.markings}
\usetikzlibrary{arrows.meta}
\usepackage{pedromacros-clean}
\usepackage{centernot}

\tikzcdset{
  only tips/.style={/pgf/tips=true, /tikz/draw=none},
      Corner/.tip={Straight Barb[angle'=90, round, length=4.8pt,width=10pt]},
      Corner Tail/.tip={Corner[reversed, sep=+3pt +1.5]},
         rightcorner/.style={only tips, /tikz/arrows=-Corner},
          leftcorner/.style={only tips, /tikz/arrows=Corner Tail-},
     leftrightcorner/.style={only tips, /tikz/arrows=Corner-Corner},
}

\tikzstyle{shiftarr}=[
rounded corners,%
to path={--([#1]\tikztostart.center)
  -- ([#1]\tikztotarget.center) \tikztonodes
  -- (\tikztotarget)},
]

\usetikzlibrary{
  fit,%
  calc,%
  arrows,%
  arrows.meta,%
  intersections,%
  shapes.misc,%
  shapes.arrows,%
  patterns,%
  automata,%
  chains,%
  matrix,%
  positioning,%
  scopes,%
  decorations.markings,%
  decorations.pathmorphing,%
  external,
  backgrounds
}

\tikzset{
  commutative diagrams/.cd,
  arrow style=tikz,
  diagrams={>={Straight Barb[length=1.75pt,width=3.85pt,inset=1.95pt]}}, %
  row sep=large,
  column sep = huge
}

\tikzset{cong/.style={draw=none,edge node={node [sloped, allow upside down, auto=false]{$\cong$}}},
  iso/.style={draw=none,every to/.append style={edge node={node [sloped, allow upside down, auto=false]{$\cong$}}}}}

\usetikzlibrary{decorations.pathmorphing}

\tikzcdset{scale cd/.style={every label/.append style={scale=#1},
    cells={nodes={scale=#1}}}}

\usepackage{stackengine}
\stackMath
\newcommand\tsup[2][2]{%
  \def\useanchorwidth{T}%
  \ifnum#1>1%
    \stackon[-1.05ex]{\tsup[\numexpr#1-1\relax]{#2}}{\scalebox{2}[1]{$\mathchar"307E$}\kern-.5pt}%
  \else%
    \stackon[-.9ex]{#2}{\scalebox{2}[1]{$\mathchar"307E$}\kern-.5pt}%
  \fi%
}

\usepackage{caption}
\usepackage{subcaption}

\usepackage{xspace}

\usepackage{etoolbox} %

\providecommand{\oname}[1]{{\operatorname{\mathsf{#1}}}}

\newcommand{\nat}{\mathbb{N}}

\newcommand{\dom}{\mathop{\oname{dom}}}
\newcommand{\cod}{\mathop{\oname{cod}}}

\newcommand{\domr}[1]{\lceil\dom #1 \rceil} %
\newcommand{\codr}[1]{\lceil\cod #1 \rceil} %

\providecommand{\ito}{\rightarrowtail}                                       %

\providecommand{\xto}[1]{\,\xrightarrow{#1}\,}
\providecommand{\xtolong}[1]{\,\xlongrightarrow{#1}\,}

\renewcommand{\xto}[1]{\mathrel{\raisebox{-1.15pt}{$\xrightarrow{\hspace{2.5pt}\smash{\raisebox{-2.5pt}{\makebox(3,0)[b]{\scriptsize $#1$}}\hspace{2.5pt}}}$}}}

\providecommand{\xfrom}[1]{\,\xleftarrow{\;#1}\,}
\providecommand{\xfromlong}[1]{\,\xleftarrow{\;#1}\,}

\renewcommand{\xfrom}[1]{\mathrel{\raisebox{-1.15pt}{$\xleftarrow{\smash{\hspace{3.5pt}\raisebox{-2.5pt}{\makebox(3,0)[b]{\scriptsize $#1$}}\hspace{1.5pt}}}$}}}
\providecommand{\mone}{{\text{\kern.5pt\rmfamily-}\mathsf{\kern-.5pt1}}}

\ifdraft{}
{
   \renewcommand{\todo}[1]{}
}

\usepackage{amsthm}

\EventEditors{Olaf Beyersdorff, Micha\l{} Pilipczuk, Elaine Pimentel, and Nguyen Kim Thang}
\EventNoEds{4}
\EventLongTitle{42nd International Symposium on Theoretical Aspects of Computer Science (STACS 2025)}
\EventShortTitle{STACS 2025}
\EventAcronym{STACS}
\EventYear{2025}
\EventDate{March 4--7, 2025}
\EventLocation{Jena, Germany}
\EventLogo{}
\SeriesVolume{327}
\ArticleNo{71}

\title{Identity-Preserving Lax Extensions\\ and Where to Find Them}
\titlerunning{Identity-Preserving Lax Extensions and Where to Find Them}

\author{Sergey Goncharov}{University of Birmingham, UK}{s.goncharov@bham.ac.uk}{https://orcid.org/0000-0001-6924-8766}{Funded by the Deutsche Forschungsgemeinschaft (DFG, German Research Foundation) -- project number 501369690}%

\author{Dirk Hofmann}{CIDMA, University of Aveiro, Portugal}{dirk@ua.pt}{https://orcid.org/0000-0002-1082-6135}{This work is supported by CIDMA under the FCT (Portuguese Foundation for Science and Technology)
Multi-Annual Financing Program for R\&D Units.}

\author{Pedro Nora}{Radboud Universiteit,
  Netherlands}{pedro.nora@ru.nl}{https://orcid.org/0000-0001-8581-0675}{} %

\author{Lutz Schr{\"o}der}{Friedrich-Alexander-Universität Erlangen-Nürnberg,
  Germany}{lutz.schroeder@fau.de}{https://orcid.org/0000-0002-3146-5906}{Funded by the Deutsche Forschungsgemeinschaft (DFG, German Research Foundation) -- project number 531706730}%

\author{Paul Wild}{Friedrich-Alexander-Universität Erlangen-Nürnberg,
  Germany}{paul.wild@fau.de}{https://orcid.org/0000-0001-9796-9675}{Funded by the Deutsche Forschungsgemeinschaft (DFG, German Research Foundation) -- project number 434050016}%

\authorrunning{S. Goncharov, D. Hofmann, P. Nora, L. Schr{\"o}der, P. Wild} %

\ccsdesc{Theory of computation~Modal and temporal logics} %
\ccsdesc{Theory of computation~Concurrency}

\keywords{(Bi-)simulations, lax extensions, modal logics, coalgebra}


%
%

\Copyright{Sergey Goncharov, Dirk Hofmann, Pedro Nora, Lutz Schröder, and Paul Wild}
\begin{document}

\maketitle

\begin{abstract}
  Generic notions of bisimulation for various types of systems
  (nondeterministic, probabilistic, weighted etc.) rely on
  identity-preserving (\emph{normal}) lax extensions of the functor
  encapsulating the system type, in the paradigm of universal
  coalgebra. It is known that preservation of weak pullbacks is a
  sufficient condition for a functor to admit a normal lax extension
  (the Barr extension, which in fact is then even strict); in the
  converse direction, nothing is currently known about necessary
  (weak) pullback preservation conditions for the existence of normal
  lax extensions. In the present work, we narrow this gap by showing
  on the one hand that functors admitting a normal lax extension
  preserve \emph{1/4-iso} pullbacks, i.e.\ pullbacks in which
  at least one of the projections is an isomorphism.  On the other
  hand, we give sufficient conditions, showing that a functor admits a
  normal lax extension if it weakly preserves either 1/4-iso pullbacks
  and 4/4-epi pullbacks (i.e.\ pullbacks in which all morphisms are
  epic) or inverse images.
   We apply these criteria to concrete examples, in
  particular to functors modelling neighbourhood systems and weighted
  systems.
\end{abstract}

\newpage

\section{Introduction}

Branching-time notions of behavioural equivalence of reactive systems are
typically cast as notions of \emph{bisimilarity}, which in turn are based on
notions of \emph{bisimulation}, the paradigmatic example being Park-Milner
bisimilarity on labelled transition systems~\cite{Milner89}. A key point about
this setup is that while bisimilarity is an equivalence on states, individual
bisimulations can be much smaller than the full bisimilarity relation, and in
particular need not themselves be equivalence relations. In a perspective where
one views bisimulations as certificates for bisimilarity, this feature enables
smaller certificates.

The concept of bisimilarity via bisimulations can be transferred to
many system types beyond basic labelled transition systems, such as
monotone neighbourhood systems~\cite{HansenKupke04}, probabilistic
transition systems, or weighted transition systems. In fact, such
systems can be treated uniformly within the framework of universal
coalgebra~\cite{Rutten00}, in which the system type is encapsulated in
the choice of a set functor (the powerset functor for
non-deterministic branching, the distribution functor for
probabilistic branching etc.). Coalgebraic notions of bisimulation
were originally limited to functors that preserve weak
pullbacks~\cite{Rutten00}, equivalently admit a strictly functorial
extension to the category of relations~\cite{Barr70,Trnkova80}. They
were later extended to functors admitting an
\emph{identity-preserving} or \emph{normal lax
  extension}~\cite{MartiVenema12,MV15} to the category of relations
(this is essentially equivalent to notions of bisimilarity based on
modal logic~\cite{GorinSchroder13}). While there is currently no
formal general definition of what a notion of bisimulation constitutes except
via normal lax extensions, there is a reasonable
claim~\cite{MartiVenema12,MV15} that notions of bisimulation in the
proper sense, in particular with bisimulations not required to be
equivalence relations but stable under key operations such as
relational composition, will not go beyond functors admitting a normal
lax extension.

The \emph{Barr extension} that underlies the original notion of
coalgebraic bisimulation for weak-pullback-preserving
functors~\cite{Rutten00} is, in particular, a normal lax extension;
that is, preservation of weak pullbacks is sufficient for existence of
a normal lax extension. However, this condition is far from being
necessary; there are numerous functors that fail to preserve weak
pullbacks but do admit a normal lax extension, such as the monotone
neighbourhood functor~\cite{MartiVenema12,MV15}. Using the latter
fact, it has been shown that a finitary functor admits a normal lax
extension if and only if it admits a separating set of finitary
monotone modalities~\cite{MartiVenema12,MV15}, cast as monotone
predicate liftings in the paradigm of coalgebraic
logic~\cite{Pattinson04,Schroder08} (a similar result holds for
unrestricted functors if one considers class-sized collections of
infinitary modalities~\cite{GoncharovEA23}).
The latter condition amounts to existence of an expressive modal logic
that has monotone modalities~\cite{Pattinson04,Sch08}, and as such
admits $\mu$-calculus-style fixpoint extensions~\cite{CirsteaEA11}. In
a nutshell, a system type admits a good notion of bisimulation if and
only if it admits an expressive temporal logic. Both sides of this
equivalence, however, need to be witnessed by the construction of a
fairly complicated object; what is missing is a characterization via
\emph{properties} of the underlying functor, rather than via the
existence of extra structure.

In the present work, we narrow the gap between weak pullback
preservation as a sufficient condition for admitting a normal lax
extension, and no known necessary pullback preservation condition. On
the one hand, we establish a necessary preservation condition, showing
that functors admitting a normal lax extension (weakly) preserve
\emph{1/4-iso pullbacks}, i.e.\ pullbacks in which at least one of the
projections is isomorphic. (We often put `weakly' in brackets because
for many of the pullback types we consider, notably for inverse images
and 1/4-iso pullbacks, weak preservation coincides with
preservation.) This is a quite natural condition: A key role in the
field is played by \emph{difunctional relations}~\cite{Rig48}, which
may be thought of as relations obtained by chopping the domain of
an equivalence in half; for instance, given labelled transition
systems~$X$,~$Y$, the bisimilarity relation from~$X$ to~$Y$ is
difunctional. In a nutshell, we show that a functor preserves 1/4-iso
pullbacks iff it acts in a well-defined and monotone manner on
difunctional relations. %
A first application of this necessary condition is a very quick proof
of the known fact that the neighbourhood functor does not admit a
normal lax extension~\cite{MV15}.

We then go on to establish two separate sets of sufficient conditions:
We show that a functor admits a normal lax extension if it (weakly)
preserves either inverse images or both 1/4-iso pullbacks and
\emph{4/4-epi pullbacks}, i.e.\ pullbacks in which all morphisms are
epi (these are also known as \emph{surjective
  pullbacks}~\cite{SeifanEA17}, and weak preservation of 4/4-epi
pullbacks is equivalent to weak preservation of kernel
pairs~\cite{Gumm20}). These sufficient conditions are technically
substantially more involved. As indicated above, they imply that
finitary functors (weakly) preserving either inverse images or 1/4-iso
pullbacks and 4/4-epi pullbacks admit a separating set of finitary
monotone modalities; this generalizes a previous result showing the
same for functors preserving all weak pullbacks~\cite{KurzLeal12}. We
summarize our main contributions in \autoref{fig:contributions}.

\begin{figure}[ht]
\begin{center}
  \begin{tikzpicture}[ node distance=5mm and 44mm, box/.style = {draw,
      minimum height=12mm, inner xsep=3mm, align=center}, sy+/.style =
    {yshift= 0mm}, sy-/.style = {yshift=-0mm}, every edge quotes/.style
    = {align=center} ]

    \node (n1) [box] {weakly\\ preserves\\ pullbacks}; %
    \node (n2) [box,above right=5ex and 4ex of n1]
    {preserves\\inverse images}; %
    \node (n3) [box,below right = 5ex and -2ex of n1] {weakly preserves
      1/4-iso\\ and 4/4-epi pullbacks}; %
    \node (n4) [box,below right = 5ex and 4ex of n2] {admits\\normal
      lax\\extensions}; %
    \node (n5) [box,right = 22ex of n4] {preserves\\ 1/4-iso\\
      pullbacks};
    \draw[dashed, -{Stealth[length=2.3mm]}] (n1) -- (n2); %
    \draw[dashed, -{Stealth[length=2.3mm]}] (n1) -- (n3); %
    \draw[-{Stealth[length=2.3mm]}] (n2) -- (n4) node [midway, above right=-0.3mm and 2mm] {\autoref{thm:main}}; %
    \draw[-{Stealth[length=2.3mm]}] (n3) -- (n4) node [midway, below right=-0.3mm and 2mm] {\autoref{p:3}}; %
    \draw[-{Stealth[length=2.3mm]}] (n4.10) -- (n5.170) node [midway, above] {\autoref{p:450}}; %
    \draw[-{Stealth[length=2.3mm]}] (n5.190) -- node [strike out,draw,-]{} (n4.-10) node [midway, below] {\raisebox{-2ex}{\autoref{p:920}}}; %
  \end{tikzpicture}
  \caption{Summary of main results. Solid arrows are present contributions, dashed arrows are trivial. All implications indicated by arrows are non-reversible; in particular,~\autoref{p:920} shows this for~\autoref{p:450}.}
  \label{fig:contributions}
\end{center}
\end{figure}
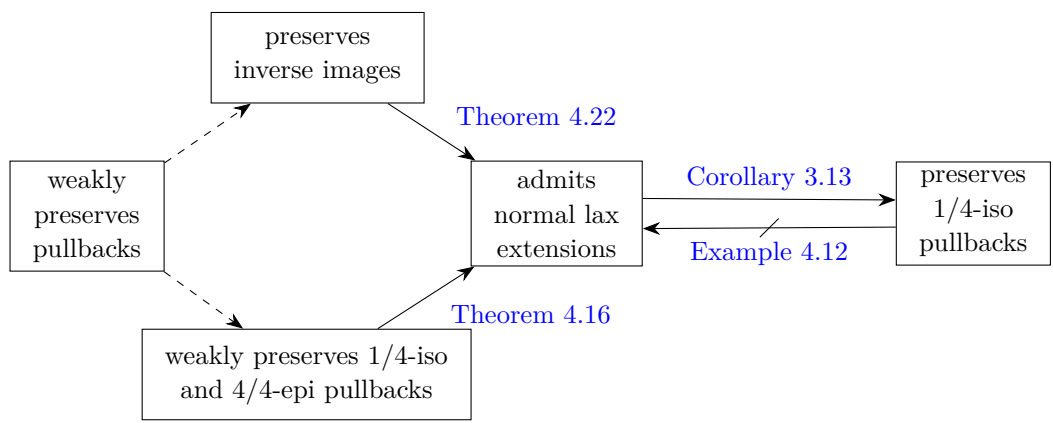
\noindent As per the preceding discussion, these necessary and
sufficient criteria essentially determine (when applicable) whether or
not a given type of systems admits a good notion of bisimulation.

The criterion of weak preservation of 1/4-iso pullbacks and 4/4-epi
pullbacks is satisfied by the monotone neighbourhood functor and
generalizations thereof (e.g.~\cite{SeifanEA17}), and thus in
particular reproves the above-mentioned known fact that functors
admitting separating sets of monotone modalities have normal lax
extensions. The criterion of (weak) preservation of inverse images, in
connection with the necessary criterion, implies that a monoid-valued
functor for a commutative monoid~$M$ (whose coalgebras are
$M$-weighted transition systems) admits a normal lax extension if and
only if~$M$ is positive (which in turn is equivalent to the functor
preserving inverse images~\cite{GummSchroder01}).

\subparagraph*{Related work} With variations in the axiomatics and
terminology, lax extensions go back to an extended strand of work on
relation liftings
(e.g.~\cite{BackhouseEA91,ThijsThesis,HT00,Lev11,Sea05,SS08}). %
We have already mentioned work by Marti and Venema relating lax
extensions to modal logic~\cite{MartiVenema12,MV15}; at the same time,
Marti and Venema prove that the notion of bisimulation induced by a
normal lax extension captures the standard notion of behavioural
equivalence.  \emph{Lax relation liftings}, constructed for functors
carrying a coherent order structure~\cite{HughesJacobs04}, also serve
the study of coalgebraic simulation but obey a different axiomatics
than lax extensions~\cite[Remark~4]{MV15}). Strictly functorial (and
converse-preserving) extensions of set functors to the category of
sets and relations are known to be unique when they exist, and exist
if and only if the functor preserves weak
pullbacks~\cite{Barr93,Trnkova80}; this has been extended to other
base categories~\cite{BackhouseEA91,BirdDeMoor97}. There has been both
longstanding and recent interest in quantitative notions of relation
liftings and lax extensions that act on relations taking values in a
quantale, such as the unit interval, in particular with a view to
obtaining notions of quantitative
bisimulation~\cite{Rutten98,Worrell00,HST14,Gavazzo18,WildSchroder20,WildSchroder22,GoncharovEA23}
that witness low behavioural distance (the latter having first been
treated in coalgebraic generality by Baldan et
al.~\cite{BaldanEA18}). The correspondence between normal lax
extensions and separating sets of modalities generalizes to the
quantitative
setting~\cite{WildSchroder20,WildSchroder22,GoncharovEA23}.

\subparagraph*{Organization} We review material on relations, in particular difunctional relations, and lax
extensions in \autoref{sec:prelims}. In \autoref{sec:f-actions}, we introduce our
necessary pullback preservation condition and show that it characterizes well-definedness
of the natural functor action on difunctional relations. We prove our main
results in \autoref{sec:lax-existence}.
In \autoref{ssec:4/4-epi} we show that a functor that weakly preserves 1/4-iso pullbacks and 4/4-epi pullbacks admits a normal lax extension, and in \autoref{ssec:1/4-mono} we show the same for functors that preserve 1/4-mono pullbacks.

\section{Preliminaries: Relations and Lax Extensions}
\label{sec:prelims}
We work in the category $\SET$ of sets and functions throughout. We
assume basic familiarity with category theory (e.g.~\cite{AHS90}). A
central role in the development is played by (weak) pullbacks: A
commutative square $f\cdot p=g\cdot q$ is a pullback (of $f,g$) if for
every competing square $f\cdot p'=g\cdot q'$, there exists a unique
morphism~$k$ such that $p\cdot k=p'$ and $q\cdot k= q'$; the notion of
weak pullback is defined in the same way except that~$k$ is not
required to be unique. A functor $\ftF$ \emph{weakly preserves} a
given pullback if it maps the pullback to a weak pullback; it is known
that weak preservation of pullbacks of a given type is equivalent to
preservation of weak pullbacks of the same type~\cite[Corollary
4.4]{GS00}. Our interest in functors $\ftF\colon\SET\to\SET$ is driven
mainly by their role as encapsulating types of transition systems in
the paradigm of \emph{universal coalgebra}~\cite{Rutten00}: An
\df{$\ftF$-coalgebra} $(X,\alpha)$ consists of a set~$X$ of
\df{states} and a \df{transition map} $\alpha\colon X\to\ftF X$
assigning to each state $x\in X$ a collection $\alpha(x)$ of
successors, structured according to~$\ftF$. For instance, coalgebras
for the \df{powerset functor~$\calP$} assign to each state a
\emph{set} of successors, and hence are just standard relational
transition systems, while coalgebras for the \df{distribution
  functor\/~$\calD$} (which maps a set~$X$ to the set of discrete
probability distributions on~$X$) assign to each state a distribution
on successor states, and are thus probabilistic transition systems.

A \df{morphism} $f\colon (X,\alpha)\to(Y,\beta)$ of $\ftF$-coalgebras is a map
$f\colon X\to Y$ for which $\beta\cdot f={\ftF f\cdot\alpha}$. Such morphisms are
thought of as preserving the behaviour of states, and correspondingly,
states~$x$ and~$y$ in coalgebras $(X,\alpha)$ and $(Y,\beta)$, respectively,
are \df{behaviourally equivalent} if there exist a coalgebra $(Z,\gamma)$ and
morphisms $f\colon(X,\alpha)\to(Z,\gamma)$, $g\colon(Y,\beta)\to(Z,\gamma)$
such that $f(x)=g(y)$.
\begin{example}
  On relational transition systems, i.e.\ coalgebras for the powerset
  functor~$\calP$, behavioural equivalence instantiates to the usual
  notion of bisimilarity. More generally, labelled transition systems
  with labels taken from a set~$\mathcal A$ are coalgebras for the
  functor $\calP(\mathcal A\times(-))$, and behavioural equivalence
  instantiates to Park-Milner bisimilarity~\cite{AM89}. On Markov
  chains, understood as $\calD$-coalgebras, all states are
  behaviourally equivalent, as all states are identified in the final
  coalgebra $1\to\calD 1\cong 1$. This triviality is removed in
  various forms of probabilistic \emph{labelled} transition systems,
  for instance in $\calD(\mathcal A\times(-))$-coalgebras, on which
  behavioural equivalence instantiates to standard notions of
  probabilistic bisimilarity~\cite{Klin09}.
\end{example}

One is then interested in notions of bisimulation relation that
characterize behavioural equivalence in the sense that two states are
behaviourally equivalent iff they are related by some
bisimulation~\cite{Rutten00,MV15}; this motivates the detailed study
of relations and of extensions of~$\ftF$ that act on relations. We
write $r\colon X \relto Y$ to indicate that~$r$ is a relation from the
set $X$ to the set $Y$ (i.e.\ $r\subseteq X\times Y$), and we write
$x\mathrel{r}y$ when $(x,y)\in r$. Both for functions and for
relations, we use \emph{applicative} composition, i.e.\ given
$r\colon X\relto Y$ and ${s \colon Y\relto Z}$, their composite is
$s\cdot r\colon X\relto Z$ (defined as
$s\cdot r=\{(x,z)\mid\exists y\in
Y.\,x\mathrel{r}y\mathrel{s}z\}$). We say that~$r,s$ of type
$r\colon X\relto Y$ and ${s \colon Y\relto Z}$ are \df{composable},
and we extend this terminology to sequences of relations in the
obvious manner.  Relations between the same sets are ordered by
inclusion, that is \(r \leq r' \iff r \subseteq r'\). We denote by
$1_X\colon X\to X$ the identity map (hence relation) on~$X$, and we
say that a relation \(r \colon X \relto X\) is a \df{subidentity} if
\(r \leq 1_X\). Given a relation $r\colon X \relto Y$,
$r^\circ\colon Y \relto X$ denotes the corresponding converse
relation; in particular, if $f\colon X\to Y$ is a function, then
$f^\circ\colon Y\relto X$ denotes the converse of the corresponding
relation. For a relation $r\colon X\relto Y$, we denote by
$\dom r\subseteq X$ and $\cod r\subseteq Y$ the respective domain and
codomain (i.e.\ $\dom r=\{x\in X\mid\exists y\in Y.\,x\mathrel{r}y\}$
and $\cod r=\{y\in Y\mid\exists x\in X.\,x\mathrel{r}y\}$). A special
class of relations of interest are \df{difunctional relations}
\cite{Rig48}, which are relations factorizable as $g^\circ\cdot f$ for
some functions $f\colon X\to Z$ and $g\colon Y\to Z$, i.e.\ \(x~r~y\)
iff \(f(x) = g(y)\).  In the following we record some folklore facts
about difunctional relations.

\begin{lemma}
	\label{p:48}
	Let \(r \colon X \relto Y\) be a relation.
	Then the following  are equivalent:
	\begin{enumerate}
		\item \label{p:51} \(r\) is difunctional;
		\item \label{p:53} for all \(x_1, x_2\) in \(X\) and~$y_1,y_2\in Y$, if $x_1\mathrel{r} y_1\mathrel{r^\circ} x_2\mathrel{r} y_2$, then $x_1\mathrel{r}y_2$.
		\item \label{p:54} for every span $X\xfrom{\pi_1} R\xto{\pi_2} Y$
		such that \(r = \pi_2 \cdot \pi_1^\circ\), the pushout square
		\begin{center}
			\begin{tikzcd}
				R & Y \\
				X & O
				\ar[from=1-1, to=1-2, "\pi_2"]\
				\ar[from=1-1, to=2-1, "\pi_1"']
				\ar[from=1-1, to=2-2, phantom, very near end, "\ulcorner"]
				\ar[from=2-1, to=2-2, "p_1"']
				\ar[from=1-2, to=2-2, "p_2"]
			\end{tikzcd}
		\end{center}
		is a weak pullback.
	\end{enumerate}
\end{lemma}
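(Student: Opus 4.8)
The plan is to prove the cycle of implications \ref{p:51}$\Rightarrow$\ref{p:53}$\Rightarrow$\ref{p:54}$\Rightarrow$\ref{p:51}. Before starting, I would recast \ref{p:53} in relation-algebraic form: unfolding the applicative composite shows that $x_1\mathrel{(r\cdot r^\circ\cdot r)}y_2$ holds iff there exist $y_1,x_2$ with $x_1\mathrel{r}y_1\mathrel{r^\circ}x_2\mathrel{r}y_2$, so \ref{p:53} says exactly $r\cdot r^\circ\cdot r\le r$. Since the reverse inequality $r\le r\cdot r^\circ\cdot r$ is automatic (if $x_1\mathrel{r}y_2$, take $y_1=y_2$ and $x_2=x_1$), condition \ref{p:53} is equivalent to $r=r\cdot r^\circ\cdot r$. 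This reformulation is what makes the remaining steps short.

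For \ref{p:51}$\Rightarrow$\ref{p:53} I would argue purely algebraically: if $r=g^\circ\cdot f$ with functions $f\colon X\to Z$, $g\colon Y\to Z$, then $r^\circ=f^\circ\cdot g$, and using associativity together with the standard facts $f\cdot f^\circ\le 1_Z$ and $g\cdot g^\circ\le 1_Z$ for functions, I compute $r\cdot r^\circ\cdot r=g^\circ\cdot(f\cdot f^\circ)\cdot(g\cdot g^\circ)\cdot f\le g^\circ\cdot f=r$. For \ref{p:54}$\Rightarrow$\ref{p:51} I would apply \ref{p:54} to the tabulation span, i.e.\ take $R=r$ as a subset of $X\times Y$ with $\pi_1,\pi_2$ the projections, so that $r=\pi_2\cdot\pi_1^\circ$. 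Commutativity of the pushout square, $p_1\cdot\pi_1=p_2\cdot\pi_2$, already gives $r\le p_2^\circ\cdot p_1$, and the weak pullback property supplies the converse: whenever $p_1(x)=p_2(y)$ it forces some $\rho\in R$ with $\pi_1(\rho)=x$, $\pi_2(\rho)=y$, i.e.\ $x\mathrel{r}y$. Hence $r=p_2^\circ\cdot p_1$, which exhibits $r$ as difunctional via $f=p_1$, $g=p_2$.

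The implication \ref{p:53}$\Rightarrow$\ref{p:54} is the technical heart, where I would fix an \emph{arbitrary} span with $r=\pi_2\cdot\pi_1^\circ$. The crucial observation is that the relevant pushout data depend only on $r$: concretely $O\cong(X\sqcup Y)/{\approx}$, where $\approx$ is the equivalence relation on the coproduct generated by the pairs $\{(\iota_X x,\iota_Y y)\mid x\mathrel{r}y\}$, and this generating set is the image of $\langle\pi_1,\pi_2\rangle$, hence equals $r$ regardless of the chosen span. As above, commutativity gives $r\le p_2^\circ\cdot p_1$, and the square is a weak pullback iff $p_2^\circ\cdot p_1\le r$. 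Now $p_1(x)=p_2(y)$ means precisely $\iota_X x\approx\iota_Y y$, which—since $r$ only links $X$ to $Y$—amounts to a finite alternating zigzag $x\mathrel{r}b_0\mathrel{r^\circ}a_1\mathrel{r}\cdots\mathrel{r}b_n=y$; that is, $p_2^\circ\cdot p_1=\bigcup_{n\ge 0}r\cdot(r^\circ\cdot r)^n$. It then suffices to show $r\cdot(r^\circ\cdot r)^n\le r$ for all $n$, which I would obtain by induction from \ref{p:53}: the base case $n=0$ is trivial, and $r\cdot(r^\circ\cdot r)^{n+1}=(r\cdot r^\circ\cdot r)\cdot(r^\circ\cdot r)^n\le r\cdot(r^\circ\cdot r)^n\le r$.

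The main obstacle is exactly this last implication, and within it the quantification over \emph{all} spans inducing $r$. The argument turns on recognizing that both the pushout $O$ and the comparison relation $p_2^\circ\cdot p_1$ are determined by $r$ alone—they compute the zigzag (equivalence) closure of $r$ on $X\sqcup Y$—so that the single-step closure condition \ref{p:53} propagates to arbitrary zigzags by the routine induction above, uniformly across spans. Everything else is elementary relation algebra and the explicit description of the pushout as a quotient of the coproduct.
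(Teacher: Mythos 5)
Your proof is correct and takes essentially the same route as the paper's: the same cycle (i)$\Rightarrow$(ii)$\Rightarrow$(iii)$\Rightarrow$(i), the same explicit description of the pushout as the quotient of $X+Y$ by the equivalence relation generated by $r$ (which is why the span is irrelevant), and the same characterization of weak pullbacks via $p_2^\circ\cdot p_1=\pi_2\cdot\pi_1^\circ$. The only differences are presentational: you spell out the zigzag induction $r\cdot(r^\circ\cdot r)^n\le r$ that the paper compresses into ``by (ii), such elements are already related by $r$'', and you render (i)$\Rightarrow$(ii) relation-algebraically via $f\cdot f^\circ\le 1_Z$, $g\cdot g^\circ\le 1_Z$ where the paper argues pointwise.
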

As we can see in \autoref{p:48}(\ref{p:54}) above, difunctional relations are characterized as weak pullbacks, and in this regard we recall that generally, a commutative square $f\cdot p=g\cdot q$ is a weak pullback iff $q\cdot p^\circ = g^\circ\cdot f$, equivalently $p\cdot q^\circ=f^\circ\cdot g$.

 The \df{difunctional closure} of a relation
$r\colon X\relto Y$ is the least difunctional relation $\hat r \colon X \relto Y$ greater than or
equal to~$r$.
	\label{lem:df_fp}
	It follows from \autoref{p:48} that the difunctional closure of a relation \(r \colon X \relto Y\) given by a span $X\xfrom{\pi_1} R\xto{\pi_2} Y$
	is obtained by computing its pushout $X\xto{p_1} O\xfrom{p_2} Y$; i.e., the difunctional closure~\(\hat{r}\) of~\(r\) is the relation \(p_2^\circ \cdot p_1\).
	More explicitly, $\hat r = \bigvee_{n\in\nat} r\cdot (r^\circ\cdot r)^n$ (e.g. \cite{Rig48,GZ14}).

A \df{lax extension} $\eF$ of an endofunctor $\ftF\colon\SET \to \SET$ is a
mapping that sends any relation $r \colon X \relto Y$ to a relation $\eF r
	\colon \ftF X \relto \ftF Y$ in such a way that
\vskip1\baselineskip
\begin{conditions}
	\item[\nlabel{p:61}{(L1)}] $r \le r' \implies \eF r \le \eF r'$,
	\item[\nlabel{p:26}{(L2)}] $\eF s\cdot\eF r\le\eF (s\cdot r)$,
	\item[\nlabel{p:0}{(L3)}] $\ftF f \le \eF f$ and ${(\ftF f)}^\circ\le\eF (f^\circ)$,
\end{conditions}
\vskip1\baselineskip\noindent
for all \(r \colon X \relto Y\), \(s \colon Y \relto Z \) and \(f \colon X \to
Y\).
We define \df{relax extensions} in the same way, without however requiring
property \ref{p:26}. We call a (re)lax extension \df{identity-preserving}, or
\df{normal}, if \(\eF 1_X = 1_{\ftF X}\) for every set~\(X\), and we say that a
(re)lax extension \df{preserves converses} if \(\eF(r^\circ) = (\eF r)^\circ\).

A tactical advantage of using the term ``relax extension'' is that we can thus
refer to constructions that produce lax extensions most of the time, except for
some cases when \ref{p:26} may fail. A prototypical example of this sort is the
\df{Barr extension}~$\ftbF$ \cite{Bar70}, which for
weak-pullback-preserving~$\ftF$ is even a strict extension, and is defined as
follows. Given a relation \(r \colon X \relto Y\), choose a factorization
\(\pi_2 \cdot \pi_1^\circ\) for some span $X\xfrom{\pi_1} R\xto{\pi_2} Y$
and put $\ftbF r = \ftF \pi_2 \cdot {(\ftF \pi_1)}^\circ$. This assignment is
independent of the factorization of \(r\), and $r$ admits a \df{canonical factorization} which is given by projecting into $X$ and $Y$ the subset of $X \times Y$ of pairs of elements related by $r$.
It is well-known that for every \(\SET\)-functor, the Barr extension is a normal relax extension, but it is a
lax extension precisely when $\ftF$ preserves weak pullbacks~\cite{KupkeEA12}.
In this case, the Barr extension is also the least lax extension of~$\ftF$, for
it follows from \ref{p:61}--\ref{p:0} that \(\ftF \pi_2 \cdot {(\ftF
	\pi_1)}^\circ\leq \eF r\) for every lax extension $\eF$.

Lax extensions have been used extensively to treat the notion of bisimulation
coalgebraically (e.g. \cite{HT00,Lev11,MV15}).
Given a lax extension \(\eF \colon \REL \to \REL\) of a functor
\(\ftF \colon \SET \to \SET\), an \df{\(\eF\)-simulation} between
\(\ftF\)-coalgebras \((X,\alpha)\) and \((Y,\beta)\) is a relation
\(s \colon X \relto Y\) such that
\(\beta \cdot s \leq \eF s \cdot \alpha\), that is, whenever
$x\mathrel{r}y$, then $\alpha(x)\mathrel{Lr}\beta(y)$. If \(\eF\)
preserves converses, then \(\eF\)-simulations are more suitably called
\df{\(\eF\)-bisimulations}.  Between two given coalgebras, there is a
greatest \(\eF\)-(bi)simulation, which is termed
\df{\(\eF\)-(bi)similarity}. It has been shown \cite{MV15} that
if~$\eF$ is normal and preserves converses, then \(\eF\)-bisimilarity
coincides with coalgebraic behavioural equivalence as recalled
above. The axioms of lax extensions guarantee that $L$-bisimulations
are closed under converse and composition and that coalgebra morphisms
are (functional) $L$-bisimulations, so that $L$-bisimilarity includes
behavioural equivalence; that is, $L$-bisimilarity is \emph{complete}
for behavioural equivalence. Normality of lax extensions ensures that
$L$-bisimulations are \emph{sound} for behavioural equivalence, i.e.\
$L$-bisimilarity is included in behavioural equivalence.
\begin{example}\label{expl:lax}
  \begin{enumerate}
  \item For relational transition systems, understood as
    $\calP$-coalgebras, we have a normal lax extension~$L$ of~$\calP$
    given by the standard Barr extension, which in turn coincides with
    the well-known Egli-Milner extension: Given $r\colon X\relto Y$,
    $S\in\calP X$, and $T\in\calP Y$, we have $S\mathrel{Lr}T$ iff for
    all $x\in S$, there exists $y\in T$ such that $x\mathrel{r}y$, and
    symmetrically. An $L$-bisimulation is then just a bisimulation in
    the standard sense.
  \item\label{item:lax-prob} On $F=\calD(\mathcal A\times(-))$, we
    have a normal lax extension~$L$ given for $r\colon X\relto Y$,
    $\mu\in\calD(\mathcal A\times X)$,
    $\nu\in\calD(\mathcal A\times Y)$ by $\mu\mathrel{Lr}\nu$ iff for
    all $l\in\mathcal{A},A\in\calP X$, we have
    $\nu(\{l\}\times r[A])\ge\mu(\{l\}\times A)$, and
    symmetrically~\cite{GorinSchroder13}. The arising notion of
    $L$-bisimulation is sound and complete for probabilistic
    bisimilarity on probabilistic labelled transition systems.
  \end{enumerate}
\end{example}

\begin{remark}
  As mentioned in the introduction, a functor $\ftF$ admits a normal
  lax extension iff $\ftF$ admits a separating class of monotone
  predicate liftings~\cite{MV15,GoncharovEA23}. For readability, we
  discuss only the case where both the functor and the predicate
  liftings are finitary~\cite{MV15}.  An \emph{$n$-ary predicate
    lifting}~$\lambda$ for~$\ftF$ is a natural transformation of type
  $\lambda\colon\contrapow^n\to\contrapow\cdot\ftF^\op$
  where~$\contrapow$ denotes the contravariant powerset functor (given
  by~$\contrapow X$ being the powerset of a set~$X$, and
  $\contrapow f(B)=f^{-1}[B]$ for $f\colon X\to Y$ and
  $B\in\contrapow Y$); that is, for a set~$X$, $\lambda_X$ lifts~$n$
  predicates on~$X$ to a predicate on $\ftF X$. Predicate liftings
  determine modalities in coalgebraic modal
  logic~\cite{Pattinson04,Sch08}; a basic example is the unary
  predicate lifting~$\lambda$ for the (covariant) powerset
  functor~$\calP$ given by
  $\lambda_X(A)=\{B\in\calP X\mid B\subseteq A\}$ for a
  predicate~$A\subseteq X$, which determines the standard box modality
  on $\calP$-coalgebras, i.e.\ on standard relational transition
  systems. A set of predicate liftings is \emph{separating} if
  distinct elements of~$\ftF X$ can be separated by lifted predicates;
  this condition ensures that the associated instance of coalgebraic
  modal logic is \emph{expressive}, i.e.\ separates behaviourally
  inequivalent
  states~\cite{Pattinson04,Sch08}. %
	Monotonicity of predicate liftings allows the definition of modal fixpoint
	logics for temporal specification~\cite{CirsteaEA11}. In the mentioned
	correspondence between lax extensions and predicate liftings, the construction
	of predicate liftings from a lax extension~$\eF$ roughly speaking involves
	application of~$\eF$ to the elementhood relation.
\end{remark}

\section{Functor Actions on Difunctional Relations}
\label{sec:f-actions}

Our pullback preservation criterion for existence of normal lax extensions
grows from an analysis of how functors act on difunctional relations. To start
off, it is well-known that normal lax extensions of a given \(\SET\)-functor
are given on difunctional relations by the action of the functor
(e.g.~\cite{MV15,HST14}):
\begin{proposition}
	\label{p:215}
	Let\/ \(\eF\) be an assignment of relations\/ $\eF r\colon \ftF X\relto\ftF Y$ to relations $r\colon X\relto Y$ that satisfies \ref{p:61}, \ref{p:26} as well as $1_{\ftF X}\leq \eF 1_X$ for all $X \in \SET$.
	Then\/ \(\eF\) is a lax extension of\/ \(\ftF\) iff for all functions \(f \colon W \to X\), \(g \colon Z \to Y\) and relations \(r \colon X \relto Y\),
	\(
	\eF(g^\circ \cdot r \cdot f) = (\ftF g)^\circ \cdot \eF r \cdot \ftF f.
	\)
\end{proposition}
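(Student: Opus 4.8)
The plan is to use that, because \ref{p:61} and \ref{p:26} are assumed outright, the assertion ``\(\eF\) is a lax extension'' amounts to verifying \ref{p:0}; and if \ref{p:0} holds, then the hypothesis \(\eF 1_X\le 1_{\ftF X}\) upgrades to an equality, since \ref{p:0} yields \(1_{\ftF X}=\ftF 1_X\le\eF 1_X\). This is the promised normality. I will prove the equivalence as two implications: (i) any lax extension satisfies the stated identity \(\eF(g^\circ\cdot r\cdot f)=(\ftF g)^\circ\cdot\eF r\cdot\ftF f\) (this needs only \ref{p:61}--\ref{p:0}, not normality); (ii) conversely, that identity together with \(\eF 1_X=1_{\ftF X}\) recovers \ref{p:0}.

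For (i) I treat the two inclusions separately. The inclusion ``\(\ge\)'' is immediate: by \ref{p:0} we have \((\ftF g)^\circ\le\eF(g^\circ)\) and \(\ftF f\le\eF f\), so \((\ftF g)^\circ\cdot\eF r\cdot\ftF f\le\eF(g^\circ)\cdot\eF r\cdot\eF f\le\eF(g^\circ\cdot r\cdot f)\) by two applications of \ref{p:26}. The inclusion ``\(\le\)'' is the heart of the matter, and the idea is to conjugate \(\eF(g^\circ\cdot r\cdot f)\) by the functions \(\ftF f,\ftF g\). Using \(\ftF g\le\eF g\) and \((\ftF f)^\circ\le\eF(f^\circ)\) from \ref{p:0}, then \ref{p:26} twice, and finally \ref{p:61} together with single-valuedness of \(f,g\) in the form \(g\cdot g^\circ\le 1_Y\), \(f\cdot f^\circ\le 1_X\), I obtain
\[
  \ftF g\cdot\eF(g^\circ\cdot r\cdot f)\cdot(\ftF f)^\circ \;\le\; \eF\bigl(g\cdot g^\circ\cdot r\cdot f\cdot f^\circ\bigr) \;\le\; \eF r .
\]
I then pre- and post-compose with \((\ftF g)^\circ\) and \(\ftF f\) and use that \(\ftF f,\ftF g\) are again functions, hence total, i.e.\ \(1_{\ftF Z}\le(\ftF g)^\circ\cdot\ftF g\) and \(1_{\ftF W}\le(\ftF f)^\circ\cdot\ftF f\):
\[
  \eF(g^\circ\cdot r\cdot f)\;\le\;(\ftF g)^\circ\cdot\ftF g\cdot\eF(g^\circ\cdot r\cdot f)\cdot(\ftF f)^\circ\cdot\ftF f\;\le\;(\ftF g)^\circ\cdot\eF r\cdot\ftF f .
\]

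I expect this transposition to be the main obstacle, and the observation that makes it work is that functionality of \(f\) (and \(g\)) is really two inequalities pulling in opposite directions: single-valuedness \(f\cdot f^\circ\le 1\) lets the inner relation collapse onto \(r\) under \ref{p:61}, while totality \(1\le f^\circ\cdot f\)---preserved by \(\ftF\) since it sends functions to functions---lets the outer copies of \(\ftF f,\ftF g\) be cancelled afterwards. One must also keep careful track of the typing of the composites so that both displays are well formed.

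For (ii) I specialize the stated identity. With \(r=1_Y\) and \(g=1_Y\) it gives \(\eF f=\eF 1_Y\cdot\ftF f\); and with the factorization \(f^\circ=f^\circ\cdot 1_Y\cdot 1_Y\) (middle relation \(1_Y\), outer function \(f\), inner function the identity) it gives \(\eF(f^\circ)=(\ftF f)^\circ\cdot\eF 1_Y\). Invoking normality \(\eF 1_Y=1_{\ftF Y}\) collapses these to \(\eF f=\ftF f\) and \(\eF(f^\circ)=(\ftF f)^\circ\), which in particular establish both halves of \ref{p:0}. Hence \(\eF\) is a (normal) lax extension, completing the equivalence.
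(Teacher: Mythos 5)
Your direction (i) is correct: conjugating with $\ftF f$, $\ftF g$, using single-valuedness $f\cdot f^\circ\le 1_X$, $g\cdot g^\circ\le 1_Y$ on the inside and totality $1_{\ftF W}\le(\ftF f)^\circ\cdot\ftF f$, $1_{\ftF Z}\le(\ftF g)^\circ\cdot\ftF g$ on the outside is the standard argument, and it indeed shows that \emph{every} lax extension, normal or not, satisfies the displayed identity; your derivation of normality from \ref{p:0} is also fine. (The paper gives no proof of this proposition --- it is quoted as folklore --- so there is no in-paper argument to compare against.) The genuine gap is in direction (ii), at the words ``invoking normality''. In that direction normality is a \emph{conclusion}, not a hypothesis: the standing assumption gives only $\eF 1_Y\le 1_{\ftF Y}$, and the missing inequality $1_{\ftF Y}\le\eF 1_Y$ is precisely the instance $f=1_Y$ of the first half of \ref{p:0} --- the very condition you are trying to establish. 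Your own preamble makes the circularity visible, since there you obtain $\eF 1_X=1_{\ftF X}$ \emph{from} \ref{p:0}. From your two specializations and the available inequality you can only conclude $\eF f=\eF 1_Y\cdot\ftF f\le\ftF f$ and $\eF(f^\circ)=(\ftF f)^\circ\cdot\eF 1_Y\le(\ftF f)^\circ$, which are the wrong-way inclusions for \ref{p:0}.

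Moreover, this gap cannot be closed from the stated hypotheses, because the backward implication is false as literally stated: the constant-empty assignment $\eF r=\varnothing$ satisfies \ref{p:61}, \ref{p:26} and $\eF 1_X\le 1_{\ftF X}$, and it satisfies the identity $\eF(g^\circ\cdot r\cdot f)=(\ftF g)^\circ\cdot\eF r\cdot\ftF f$ for all $f,g,r$ (both sides are empty, since the right-hand side always factors through $\eF r$); yet it is not a lax extension of, say, the identity functor, because $\ftF f\le\eF f=\varnothing$ fails whenever $\ftF W\neq\varnothing$. So the ``if'' direction needs normality $\eF 1_X=1_{\ftF X}$ (equivalently, the additional inequality $1_{\ftF X}\le\eF 1_X$) as an assumption; under that reading your computations in (ii) go through verbatim and yield $\eF f=\ftF f$ and $\eF(f^\circ)=(\ftF f)^\circ$, hence \ref{p:0}. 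You have in effect stumbled on an imprecision in the statement itself, but as written your step (ii) is circular at exactly the point where the hypotheses are too weak, and no argument from those hypotheses alone can repair it.
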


\begin{corollary}
	\label{p:27}
	All normal lax extensions of a given \(\SET\)-functor coincide on difunctional relations.
	Specifically, for every normal lax extension\/ \(\eF\) of\/ \(\ftF \colon \SET \to \SET\), \(\eF(g^\circ \cdot f) = {\ftF g}^\circ \cdot \ftF f\) for all \(f \colon X \to A\) and \(g \colon Y \to A\).
\end{corollary}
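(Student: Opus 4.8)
The plan is to derive the corollary directly from \autoref{p:215}, whose equational characterization does essentially all the work. First I would check that any normal lax extension \(\eF\) of \(\ftF\) qualifies as an admissible input to \autoref{p:215}: it is in particular an assignment satisfying \ref{p:61} and \ref{p:26}, and normality gives \(\eF 1_X = 1_{\ftF X}\), so in particular \(\eF 1_X \leq 1_{\ftF X}\). Thus all hypotheses of the proposition are met. Since \(\eF\) is moreover a genuine lax extension, the ``only if'' direction of \autoref{p:215} tells us that the identity \(\eF(g^\circ \cdot r \cdot f) = (\ftF g)^\circ \cdot \eF r \cdot \ftF f\) holds for all composable functions \(f\), \(g\) and relations \(r\).

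The key step is then to instantiate this identity at the identity relation. Given \(f \colon X \to A\) and \(g \colon Y \to A\), I would take \(r = 1_A \colon A \relto A\), so that \(g^\circ \cdot r \cdot f = g^\circ \cdot f\). Applying the identity and using normality in the sharper form \(\eF 1_A = 1_{\ftF A}\) then yields \(\eF(g^\circ \cdot f) = (\ftF g)^\circ \cdot \eF 1_A \cdot \ftF f = (\ftF g)^\circ \cdot \ftF f\), which is exactly the claimed formula.

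Finally, for the coincidence statement I would recall from \autoref{sec:prelims} that a relation is difunctional precisely when it factors as \(g^\circ \cdot f\) for suitable functions \(f\), \(g\) into a common codomain. The formula just established expresses \(\eF\) on such a relation entirely in terms of the functor \(\ftF\), with no residual dependence on \(\eF\); hence any two normal lax extensions of \(\ftF\) must agree on every difunctional relation. As a sanity check, this is consistent with the right-hand side \((\ftF g)^\circ \cdot \ftF f\) being independent of the chosen factorization, which is precisely the well-definedness of the Barr extension recorded earlier.

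I do not anticipate a genuine obstacle here, since the substantive content is already packaged into \autoref{p:215}. The only points that require care are verifying that a normal lax extension is an admissible input to that proposition and observing that the specialization \(r = 1_A\), combined with normality, collapses the general equation to the difunctional case.
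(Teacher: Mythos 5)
Your proposal is correct and is essentially the paper's own (implicit) derivation: a normal lax extension satisfies the hypotheses of \autoref{p:215}, so instantiating the resulting identity \(\eF(g^\circ \cdot r \cdot f) = (\ftF g)^\circ \cdot \eF r \cdot \ftF f\) at \(r = 1_A\) and using \(\eF 1_A = 1_{\ftF A}\) gives the formula, whose right-hand side no longer depends on \(\eF\), so all normal lax extensions agree on difunctional relations. One inessential slip in your closing remark: the well-definedness ``recorded earlier'' for the Barr extension concerns \emph{span} factorizations \(r = \pi_2 \cdot \pi_1^\circ\), whereas independence of the \emph{cospan} factorization \(r = g^\circ \cdot f\) is not a prior fact you appeal to but exactly what the formula you just proved delivers -- your main argument does not rely on this aside, so the proof stands.
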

\noindent Therefore, a functor \(\ftF \colon \SET \to \SET\) that
admits at least one normal lax extension must be \df{monotone on
  difunctional relations} in the following sense: for all difunctional
relations \(g^\circ \cdot f \colon X \relto Y\) and
\(g'^\circ \cdot f' \colon X \relto Y\), if
\(g^\circ \cdot f \leq g'^\circ \cdot f'\) then
\((\ftF g)^\circ \cdot \ftF f \leq (\ftF g')^\circ \cdot \ftF f'\).
This property no longer mentions lax extensions, and implies that the
functor is \df{well-defined on difunctional relations}, i.e.\ that
$\ftF$ sends cospans that determine the same difunctional relation to
cospans that determine the same difunctional relation.  In this
section, we show that being monotone on difunctional relations is
equivalent to preserving 1/4-iso (2/4-mono) pullbacks in the sense
defined next; as indicated in the introduction, this allows for a
quick proof of the fact that the neighbourhood functor fails to admit
a normal lax extension~\cite{MV15}. On this occasion, we also discuss
various types of pullbacks and their (weak) preservation in some more
breadth for later use in our sufficient criteria
(\autoref{sec:lax-existence}).

\begin{definition}
	\label{p:18}
	We say that a functor \(\ftF \colon \SET \to \SET\) preserves \df{1/4-iso
		2/4-mono pullbacks}, \df{1/4-iso pullbacks}, \df{1/4-mono pullbacks} and \df{inverse images} if it
	sends pullbacks of the following forms, respectively, to pullbacks, with arrows
	$\rightarrowtail$ and $\xto{\raisebox{0.5ex}{$\simeq$}}$ indicating injectivity and bijectivity
	correspondingly.
	\begin{center}
		\begin{tikzcd} %
			P        \ar[from=1-1, to=2-2, phantom, very near start, "\lrcorner"] & B \\
			X & Y
			\ar[from=1-1, to=1-2, "\simeq"]
			\ar[from=1-1, to=2-1, tail]
			\ar[from=1-2, to=2-2, tail]
			\ar[from=2-1, to=2-2]
		\end{tikzcd}
		\qquad
		\begin{tikzcd} %
			P        \ar[from=1-1, to=2-2, phantom, very near start, "\lrcorner"] & B \\
			X & Y
			\ar[from=1-1, to=1-2, "\simeq"]
			\ar[from=1-1, to=2-1]
			\ar[from=1-2, to=2-2]
			\ar[from=2-1, to=2-2]
		\end{tikzcd}
		\qquad
		\begin{tikzcd} %
			P        \ar[from=1-1, to=2-2, phantom, very near start, "\lrcorner"] & B \\
			X & Y.
			\ar[from=1-1, to=1-2]
			\ar[from=1-1, to=2-1, tail]
			\ar[from=1-2, to=2-2]
			\ar[from=2-1, to=2-2]
		\end{tikzcd}
		\qquad
		\begin{tikzcd} %
			P        \ar[from=1-1, to=2-2, phantom, very near start, "\lrcorner"] & B \\
			X & Y
			\ar[from=1-1, to=1-2]
			\ar[from=1-1, to=2-1, tail]
			\ar[from=1-2, to=2-2, tail]
			\ar[from=2-1, to=2-2]
		\end{tikzcd}
	\end{center}
\end{definition}

\begin{remark}
	 1/4-Iso 2/4-mono pullbacks are special inverse images,
	characterized by the property that the fibre over every element in the image of the function $B \rightarrowtail Y$ is a singleton.
	In particular, the inverse image of the empty subset is a 1/4-iso 2/4-mono pullback.
\end{remark}
\noindent
Due to the following proposition, for consistency, we tend to use
``preservation of 1/4-mono pullbacks'' instead of ``preservation of
inverse images''.

\begin{proposition}
	\label{p:910}
	A $\SET$-functor preserves 1/4-mono pullbacks iff it preserves inverse images.
\end{proposition}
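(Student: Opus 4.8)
The two classes of squares are nested: every inverse image is in particular a 1/4-mono pullback, since in an inverse-image square the left projection $P\rightarrowtail X$ is monic (as it must be, monomorphisms being stable under pullback). Hence the implication ``preserves 1/4-mono pullbacks $\Rightarrow$ preserves inverse images'' is immediate: a functor that sends every 1/4-mono pullback to a pullback does so in particular on the sub-class of inverse images. The plan is therefore to treat only the converse, so I assume that $\ftF$ preserves inverse images and consider an arbitrary 1/4-mono pullback of $f\colon X\to Y$ and $g\colon B\to Y$, with projections $a\colon P\rightarrowtail X$ (monic) and $b\colon P\to B$.

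The strategy is to exhibit this square as a pasting of inverse-image pullbacks; preservation then follows from the hypothesis together with the pullback pasting lemma, which a functor respects since it preserves the relevant composites of boundary maps. First I would take the (surjection, injection)-factorization $g=m\cdot e$ with $e\colon B\twoheadrightarrow I$ onto the image $I$ of $g$ and $m\colon I\rightarrowtail Y$, and insert the row $Q\to I$, where $Q=f^{-1}(I)$, between the top and bottom of the square. The lower square pulls back the mono $m$ along $f$ and is thus an inverse image, hence preserved; by the pasting lemma the upper square -- the pullback of $j\colon Q\to I$ (the restriction of $f$) and $e$, with left projection $a'\colon P\to Q$ satisfying $q\cdot a'=a$ for the inclusion $q\colon Q\rightarrowtail X$ -- is again a pullback, and $a'$ is monic because $a$ is.

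It remains to show that $\ftF$ preserves the upper square, and this is the crux. Although the projection $a'$ is here not merely monic but in fact an isomorphism, the square is not an inverse image -- its leg $e$ is only epic -- and we may not invoke preservation of 1/4-iso pullbacks, which is exactly the nontrivial condition we are not assuming. The key observation is that surjectivity of $e$ together with monicity of $a'$ forces $e$ to be bijective over $j(Q)$: for $q\in Q$ the fibre $e^{-1}(j(q))$ is nonempty, as $j(q)$ lies in the image $I$ of $e$, and is a singleton by monicity of $a'$. Writing $J=j(Q)\rightarrowtail I$ and $B_0=e^{-1}(J)\rightarrowtail B$, I would split the upper square along the column $B_0\to J$ into the inverse image of $J\rightarrowtail I$ along $e$ (preserved by hypothesis) and a residual square whose relevant leg is the bijection $e|_{B_0}\colon B_0\to J$. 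The latter square has an isomorphism as one of its legs, hence is preserved by every functor (equivalently, it is itself an inverse image, an isomorphism being monic). Assembling the three inverse-image pullbacks by the pasting lemma shows that $\ftF$ preserves the original 1/4-mono pullback.

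The main obstacle is precisely the upper square: at face value it is a 1/4-iso pullback, whose preservation is the nontrivial necessary condition studied elsewhere in the paper and must not be presupposed here. The real work lies in noticing that the tension between the epic leg $e$ and the monic (indeed iso) projection $a'$ lets us peel off a genuine inverse image and reduce the remainder to an isomorphism, which any functor preserves.
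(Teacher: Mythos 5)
Your proof is correct, but it takes a genuinely different decomposition from the paper's. Both arguments reduce a 1/4-mono pullback to a pasting of inverse-image squares via an image factorization, but the paper factorizes the \emph{other} leg: in your notation, it takes the image factorization of the bottom map $f \colon X \to Y$ (the map out of the vertex receiving the monic projection $a$), not of $g \colon B \to Y$. Pulling the mono $f[X] \rightarrowtail Y$ back along $g$ gives one inverse image $g^{-1}(f[X]) \rightarrowtail B$ with mediating map $m' \colon g^{-1}(f[X]) \to f[X]$, and the remaining square pulls $m'$ back along the epi $X \twoheadrightarrow f[X]$; the one-line observation making this work is that surjectivity of that epi forces $m'$ to inherit injectivity from $a$ (two points of a fibre of $m'$ would lie over a common preimage in $X$, pairing into two elements of $P$ with equal $a$-image), so the second square is itself an inverse image and two squares suffice. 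Your choice of factorizing $g$ is exactly what produces the troublesome 1/4-iso upper square, which you then correctly dismantle via the singleton-fibre observation into a further inverse image plus a residual square with an iso leg (an absolute pullback, or equivalently an inverse image since isos are monic), for a total of three squares and one extra pasting. Your non-circularity concern is also well placed: preservation of 1/4-iso pullbacks follows from inverse-image preservation only through the later \autoref{p:64}, so it indeed may not be presupposed here, and your inline peeling avoids that dependency. What your route buys is an explicit demonstration that the 1/4-iso squares arising in this situation are themselves pastings of inverse images, foreshadowing the relationship between the two preservation conditions; what the paper's choice of leg buys is economy, since the 1/4-iso square never appears at all.
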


\noindent Similarly, we will see in \autoref{p:64} that preservation
of 1/4-iso pullbacks is equivalent to preservation of 1/4-iso 2/4-mono
pullbacks.  We thus tend to use the terms ``1/4-iso 2/4-mono pullback
preserving'' and ``1/4-iso pullback preserving''
interchangeably. Furthermore, in \autoref{expl:functors} we will see
that preservation of 1/4-mono pullbacks is properly stronger than
preservation of 1/4-iso pullbacks.

	Each of the preservation properties introduced in \autoref{p:18} implies preservation of
	monomorphisms, even if we only require that the corresponding pullbacks are
	weakly preserved. Hence, as at least one of the projections of the pullbacks is
	monic, preserving the pullbacks mentioned is equivalent to weakly preserving
	them, and, therefore, each of the properties is implied by weakly preserving
	pullbacks.
	Also, note that weakly preserving limits of a given shape is equivalent to preserving weak limits of that shape (e.g. \cite[Corollary~4.4]{GS00}).
	Furthermore, weakly preserving pullbacks is known to be sufficient for the existence of a normal lax extension -- the Barr extension -- and  this condition can be decomposed as follows:

\begin{theorem}
	\cite[Theorem 2.7]{GS05}
	A $\SET$-functor weakly preserves pullbacks iff it weakly preserves inverse images and kernel pairs.
\end{theorem}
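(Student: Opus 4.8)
The forward implication is immediate: an inverse image is the pullback of a monomorphism and a kernel pair is the pullback of a map against itself, so any $\SET$-functor that weakly preserves all pullbacks in particular weakly preserves these two special shapes. The whole content lies in the converse, and the plan is to exhibit an \emph{arbitrary} pullback as a composite of genuine pullback squares, each of which is either an inverse image or a pullback of two surjections, and then to reassemble weak preservation by pasting.

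The main tool is the weak-pullback pasting lemma: if a commuting rectangle is split into two commuting squares that are both weak pullbacks, then the rectangle is a weak pullback (a routine two-step element chase). Since $\ftF$ preserves commutativity and composition of squares, and since a horizontal composite of genuine pullbacks is again the genuine pullback, it suffices to write the given pullback as such a composite of squares of types that $\ftF$ is assumed to preserve. Concretely, I start from the pullback of a cospan $X\xto{f}Y\xfrom{g}Z$ and take the (epi, mono)-factorization $f=m\cdot e$ with $e\colon X\to I$ surjective and $m\colon I\rightarrowtail Y$. The pullback of $(f,g)$ then splits into the pullback of $(e,g')$ composed with the pullback of $(m,g)$, where $g'\colon P'\to I$ is the projection of the latter onto $I$. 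The second square is exactly the inverse image of the subobject $I\subseteq Y$ along $g$, hence weakly preserved by hypothesis; by the pasting lemma it remains to treat the first square, a pullback of the surjection $e$ against the arbitrary map $g'$. Applying the same move to a factorization $g'=n\cdot d$ (with $d$ surjective, $n$ a mono) splits this in turn into an inverse image (the subobject cut out by $n$, pulled back along $e$) and a pullback of the two surjections $d$ and the restriction of $e$ to $J$. The only subtlety in this bookkeeping is tracking which projection inherits monicity and which inherits epicity, since a pullback of a mono has a monic opposite leg while a pullback of an epi has an epic opposite leg; getting this right is what tells us that exactly one residual square is \emph{not} an inverse image.

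This reduces the claim to the single statement that $\ftF$ weakly preserves every pullback of two surjections, i.e.\ every 4/4-epi pullback. This step is the main obstacle: it does not follow formally from pasting together inverse images, but is precisely the equivalence between weak preservation of 4/4-epi pullbacks and weak preservation of kernel pairs~\cite{Gumm20}, which I invoke here (and which combines with the standard fact that weak preservation and preservation of weak limits of a fixed shape agree~\cite{GS00}). If one prefers to avoid that citation, one must instead argue directly that weak preservation of kernel pairs transfers to an arbitrary surjective pullback of $A\to J\gets B$ — for instance by realizing it as the off-diagonal block of the kernel pair of the copairing $[e_1,e_2]\colon A+B\to J$ — and this element-level argument, rather than the formal pasting, is where the genuine work resides. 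Granting it, every constituent square of the stack is weakly preserved by $\ftF$, and the pasting lemma assembles these into weak preservation of the original pullback.
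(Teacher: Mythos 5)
The paper offers no proof of this statement at all---it is quoted directly from Gumm and Schröder as \cite[Theorem 2.7]{GS05}---so your argument must stand on its own, and in essence it does. The bookkeeping in your decomposition is correct: with $f = m\cdot e$ the pullback of the mono $m$ along $g$ is a genuine inverse image (both the leg $m$ and the induced projection are monic), and the residual square over $X\xto{e}I\xfrom{g'}P'$ splits, via $g'=n\cdot d$, into the inverse image of $n$ along $e$ (whose restricted projection $e^{-1}[J]\twoheadrightarrow J$ inherits surjectivity from $e$) and a pullback of the two surjections $d$ and $e^{-1}[J]\twoheadrightarrow J$, which in $\SET$ is indeed 4/4-epi because surjections are stable under pullback. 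The weak-pullback pasting lemma, together with the fact that $\ftF$ sends the pasted rectangle to the pasting of the $\ftF$-images, then assembles weak preservation of the original pullback from weak preservation of the three constituent squares. Your appeal to \cite{Gumm20} for the equivalence of weak preservation of kernel pairs with weak preservation of 4/4-epi pullbacks is legitimate in the context of this paper, which quotes exactly that corollary immediately after the present statement, and whose proof does not pass through the GS05 theorem, so no circularity arises---even though, historically, your route is anachronistic relative to GS05, whose own proof works with kernel pairs directly.

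One caveat concerns your fallback sketch for avoiding the citation. Realizing the surjective pullback of $A\xto{e_1}J\xfrom{e_2}B$ inside the kernel pair $K$ of the copairing $[e_1,e_2]\colon A+B\to J$ is the right idea, but weak preservation of that kernel pair only produces an element of $\ftF K$ whose two projections lie in the images of $\ftF i_A$ and $\ftF i_B$; to trade it for an element of $\ftF(A\times_J B)$ you must cut down along the inclusions $p_1^{-1}[i_A[A]]\hookrightarrow K$ and then $p_2^{-1}[i_B[B]]$, and these cutting-down steps are themselves inverse-image squares. So the ``genuine work'' you allude to is not a free-standing element chase: it consumes the inverse-image hypothesis a second time (plus injectivity of the coproduct injections, with the usual empty-set caveats). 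You flag the gap honestly, and since you ultimately invoke \cite{Gumm20} rather than the sketch, the proof as proposed is sound; just be aware that making the sketch rigorous requires precisely this extra ingredient.
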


\noindent It turns out that weakly preserving kernel pairs is
equivalent to weakly preserving 4/4-epi pullbacks as defined next.

\begin{definition}
	We say that a functor \(\ftF \colon \SET \to \SET\) weakly preserves \df{4/4-epi pullbacks}, if it
	sends pullbacks of the form
	\begin{center}
		\begin{tikzcd} %
			P        \ar[from=1-1, to=2-2, phantom, very near start, "\lrcorner"] & B \\
			X & Y,
			\ar[from=1-1, to=1-2, twoheadrightarrow]
			\ar[from=1-1, to=2-1, twoheadrightarrow]
			\ar[from=1-2, to=2-2, twoheadrightarrow]
			\ar[from=2-1, to=2-2, twoheadrightarrow]
		\end{tikzcd}
	\end{center}
	with arrows $\twoheadrightarrow$ indicating surjectivity, to weak pullbacks (necessarily of surjections).
\end{definition}

\begin{theorem}
	\cite[Corollary 5]{Gumm20}
	A $\SET$-functor weakly preserves kernel pairs iff it weakly preserves 4/4-epi pullbacks.
\end{theorem}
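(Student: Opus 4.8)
Throughout I would phrase weak preservation relationally, via the Barr extension $\ftbF$ and the criterion recalled above that a commutative square $f\cdot p=g\cdot q$ is a weak pullback iff $q\cdot p^\circ=g^\circ\cdot f$: thus $\ftF$ weakly preserves the pullback of a cospan $X\xrightarrow{f}Y\xleftarrow{g}B$ exactly when $\ftbF(g^\circ\cdot f)=(\ftF g)^\circ\cdot\ftF f$, the canonical span of $g^\circ\cdot f$ being the pullback itself. Two standard facts about $\SET$-functors enter: they preserve surjections, so that $\ftF e\cdot(\ftF e)^\circ=1$ whenever $e$ is epi, and they preserve injections with nonempty domain (Trnková). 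I also use that $\ftbF$ is independent of the chosen span, whence it commutes with pre- and post-composition by functions, $\ftbF(\beta\cdot s\cdot\alpha^\circ)=\ftF\beta\cdot\ftbF s\cdot(\ftF\alpha)^\circ$; this is immediate from the span definition of $\ftbF$.

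For the implication from kernel pairs to 4/4-epi pullbacks -- the crux -- I would take a pullback $(P,p,q)$ of two surjections $f\colon X\twoheadrightarrow Y$ and $g\colon B\twoheadrightarrow Y$ and set $h\mathrel{:=}f\cdot p=g\cdot q\colon P\to Y$. The key observation is the purely relational identity $g^\circ\cdot f=q\cdot(h^\circ\cdot h)\cdot p^\circ$, which holds precisely because $p$ and $q$ are surjective (a consequence of surjectivity of $g$ and $f$). Applying $\ftbF$ and using that it commutes with the functions $p,q$ rewrites the left-hand side as $\ftbF(g^\circ\cdot f)=\ftF q\cdot\ftbF(h^\circ\cdot h)\cdot(\ftF p)^\circ$; the hypothesis, applied to the \emph{single} map $h$, replaces $\ftbF(h^\circ\cdot h)$ by $(\ftF h)^\circ\cdot\ftF h$. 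Expanding $\ftF h=\ftF f\cdot\ftF p=\ftF g\cdot\ftF q$ and cancelling $\ftF p\cdot(\ftF p)^\circ=\ftF q\cdot(\ftF q)^\circ=1$ (preservation of the surjections $p,q$) collapses the right-hand side to $(\ftF g)^\circ\cdot\ftF f$, which is exactly weak preservation of the given 4/4-epi pullback.

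For the converse, the kernel pair of a surjection is itself a 4/4-epi pullback -- all four maps are epi, the two projections being split by the diagonal -- so that case is covered at once. For an arbitrary map $e$ I would factor $e=m\cdot e'$ with $e'$ surjective and $m$ injective; since the kernel pairs of $e$ and $e'$ are the same span, $\ftbF(e^\circ\cdot e)=\ftbF(e'^\circ\cdot e')=(\ftF e')^\circ\cdot\ftF e'$, and it remains to identify this with $(\ftF e)^\circ\cdot\ftF e=(\ftF e')^\circ\cdot(\ftF m)^\circ\cdot\ftF m\cdot\ftF e'$. This reduces to $(\ftF m)^\circ\cdot\ftF m=1$, i.e.\ injectivity of $\ftF m$, which holds by Trnková once the domain $e(X)$ is nonempty, the empty case being trivial. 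The main obstacle is the crux step of the previous paragraph: recognising that a pullback of two surjections is governed by the single kernel pair of the diagonal map $h=f\cdot p=g\cdot q$ through the identity $g^\circ\cdot f=q\cdot(h^\circ\cdot h)\cdot p^\circ$. I expect verifying this identity and the ensuing surjectivity cancellations to be the only genuine work; the remainder is bookkeeping with the Barr extension.
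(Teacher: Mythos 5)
The paper does not actually prove this statement---it is quoted from \cite{Gumm20}---so your argument can only be measured against the source. Your main direction (kernel pairs $\Rightarrow$ 4/4-epi pullbacks) is correct and is essentially the standard argument, rendered in the relational Barr-extension language that matches the paper's toolkit (compare \autoref{p:960}): reduce the pullback of two epis to the kernel pair of $h=f\cdot p=g\cdot q$ and cancel $\ftF p\cdot(\ftF p)^\circ=1=\ftF q\cdot(\ftF q)^\circ$. One misattribution, though a harmless one: the identity $g^\circ\cdot f=q\cdot(h^\circ\cdot h)\cdot p^\circ$ holds for \emph{every} pullback square, not ``precisely because $p$ and $q$ are surjective''---the inclusion $\geq$ is just $g^\circ\cdot f=q\cdot p^\circ\leq q\cdot(h^\circ\cdot h)\cdot p^\circ$, and $\leq$ needs only commutativity of the square. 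Surjectivity of $f,g$ enters solely through surjectivity of $p,q$ in the cancellation step.

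The genuine gap is in the converse, where you dismiss the empty case as ``trivial''. It is not. For $e\colon\emptyset\to Z$ with $Z\neq\emptyset$, the kernel pair is $\emptyset$ with identity projections, so weak preservation of this kernel pair says exactly that $\ftF e\colon\ftF\emptyset\to\ftF Z$ is injective---i.e., that $\ftF$ preserves monomorphisms with \emph{empty} domain, which is precisely what Trnková's theorem does not give you. And it does not follow from weak preservation of 4/4-epi pullbacks: consider the functor with $\ftF\emptyset=2$ and $\ftF X=1$ for $X\neq\emptyset$ (all maps the unique ones, $\ftF 1_\emptyset=1_2$). In any 4/4-epi pullback either all four sets are empty ($Y=\emptyset$ forces $X=B=P=\emptyset$; the image is a square of identities on $2$, a pullback) or all four are nonempty ($Y\neq\emptyset$ gives $X,B\neq\emptyset$ by surjectivity and $P\neq\emptyset$ by choosing preimages; the image is a square of singletons, trivially a weak pullback), so this functor weakly preserves all 4/4-epi pullbacks; yet $\ftF e\colon 2\to 1$ is not injective, so the kernel pair of $\emptyset\to Z$ is not weakly preserved. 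Hence your reduction to $(\ftF m)^\circ\cdot\ftF m=1$ genuinely fails when $e[X]=\emptyset$, and the step cannot be patched within your argument: the literal biconditional for arbitrary set functors needs the standing convention under which \cite{Gumm20} operates, namely that one may restrict to functors preserving all monomorphisms (standard functors), or equivalently that ``kernel pairs'' are taken for maps with nonempty domain. Note also that this is not a cosmetic convention: the kernel-pair condition is sensitive to the value at $\emptyset$ (the functor above agrees on all nonempty sets with the constant functor $1$, which weakly preserves everything), so you must state explicitly which convention you adopt and where it is invoked.
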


\noindent Therefore, the condition of weakly preserving pullbacks can
be decomposed as:

\begin{corollary}
	A $\SET$-functor weakly preserves pullbacks iff it weakly preserves 1/4-mono pullbacks and 4/4-epi pullbacks.
\end{corollary}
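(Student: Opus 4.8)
The plan is to assemble the corollary directly from the three ingredients collected just above it, treating the statement as the point where the decomposition of weak pullback preservation into a ``mono part'' and an ``epi part'' is spelled out in the terminology of \autoref{p:18}. The backbone is the theorem of \cite{GS05}, which already splits weak preservation of pullbacks into weak preservation of inverse images together with weak preservation of kernel pairs; it therefore suffices to re-express each of these two conditions in terms of the pullback shapes introduced in \autoref{p:18}.

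First I would handle the inverse-image side. By the observation recorded just before the \cite{GS05} theorem, each preservation property of \autoref{p:18} forces preservation of monomorphisms, and since in an inverse image (and in a 1/4-mono pullback) at least one projection is monic, weak preservation of these squares coincides with their strict preservation. Hence weak preservation of inverse images is the same as strict preservation of inverse images, which by \autoref{p:910} is equivalent to preservation of 1/4-mono pullbacks, and by the same weak-equals-strict remark this is in turn equivalent to weak preservation of 1/4-mono pullbacks. This identifies the first conjunct of the claim.

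For the kernel-pair side the equivalence is immediate from \cite{Gumm20}: weak preservation of kernel pairs is equivalent to weak preservation of 4/4-epi pullbacks. Combining the two re-expressions with the decomposition of \cite{GS05} then yields the claimed chain, namely that a $\SET$-functor weakly preserves pullbacks iff it weakly preserves inverse images and kernel pairs, iff it weakly preserves 1/4-mono pullbacks and 4/4-epi pullbacks. Since every step is an application of an already-established equivalence, I do not expect a genuine obstacle; the only point that requires care is the weak-versus-strict bookkeeping, because \autoref{p:910} is phrased for strict preservation and must be bridged by the remark that weak and strict preservation agree for the mono-type squares before it can be applied. On the epi side everything is already stated at the level of weak preservation, so no such adjustment is needed there.
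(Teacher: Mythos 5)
Your proof is correct and follows exactly the paper's (implicit) argument: the corollary is assembled from the \cite{GS05} decomposition into inverse images and kernel pairs, the \cite{Gumm20} equivalence of weak kernel-pair preservation with weak preservation of 4/4-epi pullbacks, and \autoref{p:910} together with the remark that weak and strict preservation coincide for the mono-type squares. Your explicit attention to the weak-versus-strict bookkeeping on the inverse-image side is precisely the bridging step the paper relies on.
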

In Section~\ref{sec:lax-existence}, we will show that either preserving 1/4-mono pullbacks or weakly preserving 1/4-iso pullbacks and  4/4-epi pullbacks is sufficient for the existence of a normal lax extension.

\begin{example}
  \label{expl:functors}
  \begin{enumerate}[wide]
  \item The subfunctor $(-)^3_2 \colon \SET \to \SET$ of the functor
    $(-)^3 \colon \SET \to \SET$ that sends a set $X$ to the set of
    triples of elements of $X$ consisting of at most two distinct
    elements does not preserve pullbacks weakly \cite{AM89} but it
    preserves inverse images.

  \item \label{p:950} The neighbourhood functor
    \(\calN \colon \SET \to \SET\) (whose coalgebras are neighbourhood
    frames~\cite{Chellas80}) sends a set \(X\) to the set
    \(\calN X = \calP\calP X\) of neighbourhood systems over~\(X\),
    and a function \(f \colon X \to Y\) to the function
    \(\calN f \colon \calN X \to \calN Y\) that assigns to every
    element \(\calA \in \calN X\) the set
    \(\{B \subseteq Y \mid f^{-1}[B] \in \calA\}\).  The monotone
    neighbourhood functor \(\calM \colon \SET \to \SET\) is the
    subfunctor of the neighbourhood functor that sends a set \(X\) to
    the set of upward-closed subsets of \((\calP X, \subseteq)\). Its
    coalgebras are monotone neighbourhood frames, which feature, e.g.,
    in the semantics of game logic~\cite{Parikh83} and concurrent
    dynamic logic~\cite{Peleg85}.  A closely related functor is the
    clique functor \(\calC \colon \SET \to \SET\), which is the
    subfunctor of~$\calM$ given by
    $\calC X=\{\alpha\in\calM X\mid\forall A,B\in\alpha.\,A\cap
    B\neq\emptyset\}$. The functors \(\calM\) and \(\calC\) do not
    preserve inverse images: Consider the sets \(3 = \{0,1,2\}\) and
    \(2 = \{a,b\}\). Let \(e \colon 3 \to 2\) be the function that
    sends \(0,1\) to \(a\) and \(2\) to \(b\), and \(B = \{a\}\). Then
    \(\calM e(\upc\{0,1\} \cup \upc\{1,2\}) = \upc\{a\}\), where
    $\upc$ denotes upwards closure, but \(e^{-1}[B] = \{0,1\}\) and
    \(\upc\{0,1\} \cup \upc\{1,2\}\) does not belong to the image of
    the function
    $\calM i \colon \calM\{0,1\} \rightarrowtail \calM 3$, where
    $i \colon \{0,1\} \rightarrowtail 3$ denotes the corresponding
    inclusion.  However, routine calculations show that these functors
    do preserve 1/4-iso (2/4-mono) pullbacks and weakly preserve
    4/4-epi pullbacks (for the first functor, see~\cite[Proposition
    4.4]{SeifanEA17}).
  \item \label{p:403} Given a commutative monoid~$(M,+,0)$ (or
    just~$M$), the \emph{monoid-valued functor} $M^{(-)}$ maps a
    set~$X$ to the set $M^{(X)}$ of functions $\mu\colon X\to M$ with
    \emph{finite support}, i.e.\ $\mu(x)\neq 0$ for only finitely
    many~$x$. The coalgebras of $M^{(-)}$ are $M$-weighted transition
    systems. It is known that $M^{(-)}$ preserves inverse images
    iff~$M$ is \emph{positive}, i.e.\ does not have non-zero
    invertible elements~\cite[Theorem 5.13]{GummSchroder01} (the cited
    theorem shows the equivalence for non-empty inverse images; it is
    easy to check that in case~$M$ is positive, $M^{(-)}$ preserves
    empty pullbacks). Moreover, $M^{(-)}$ preserves weak
    pullbacks iff~$M$ is positive and \emph{refinable}, i.e.\ whenever
    $m_1+m_2=n_1+n_2$ for $m_1,m_2,n_1,n_2\in M$, then there exists a
    $2\times 2$-matrix with entries in~$M$ whose $i$-th column sums up
    to~$m_i$ and whose $j$-th row sums up to~$n_j$, for
    $i,j\in\{1,2\}$~\cite[Theorem 5.13]{GummSchroder01}.  Monoids that
    are positive but not refinable are fairly
    common~\cite{ClementinoEA14}; the simplest example is the additive
    monoid $\{0,1,2\}$ where $2+1=2$.
    
    The functor \(M^{(-)}\) preserves 1/4-iso (2/4-mono) pullbacks iff
    it preserves inverse images iff \(M\) is positive. Indeed, suppose
    that \(M\) is not positive. Consider the functions
    \(!_2 \colon 2 \to 1\),
    \(!_\varnothing \colon \varnothing \to 1\). Then, for mutually
    inverse non-zero elements \(u\) and \(v\) of \(M\), the function
    \(M^{(!_2)}\) sends both the pair \((0,0)\) and the pair \((u,v)\) to
    \(0 \in M^{(1)}\), which is in the image of
    $M^{(!_\varnothing)} \colon M^{(\varnothing)} \to M^{(1)}$.
    Therefore, the functor \(M^{(-)}\) does not preserve the (1/4-iso)
    pullback of $(!_2, !_\varnothing)$: This pullback has
    vertex~$\varnothing$, and $M^{(\varnothing)}$ has only one
    element.

  \item %
    In recent work~\cite{Gumm20}, it has been shown that the functor
    of a monad induced by a variety of algebras preserves inverse
    images iff whenever a variable~$x$ is canceled from a term when
    identified with other variables, then the term does not actually
    depend on~$x$. This provides a large reservoir of functors that
    preserve inverse images but do not always have easily guessable
    normal lax extensions (whose existence will however be guaranteed
    by our main results).  One example is the functor that maps a
    set~$X$ to the free semigroup over~$X$ quotiented by the equation
    $x x x = x x$, as neither this equation nor associativity cancel
    any variables.  Notice that this functor does not preserve 4/4-epi
    pullbacks.
  \end{enumerate}
\end{example}

Finally, we show that being monotone on difunctional relations is equivalent to
preserving 1/4-iso (2/4-mono) pullbacks.
The next lemma connects the order on difunctional relations and pullbacks of such type.

\begin{lemma}
	\label{p:62}
	Let $X\xto{f} A\xfrom{g} Y$ and $X\xto{f'} A'\xfrom{g'} Y$ be cospans for which
	there is a map \(h \colon A \to A'\) such that \(f' = h \cdot f\) and \(g' = h
	\cdot g\). Moreover, consider the commutative square
	\begin{equation}\label{d:eq:1}
		\begin{tikzcd}
			f\left[X\right]\cap g\left[Y\right]\ar{r}{h'}\ar[>->]{d}
			& f'\left[X\right]\cap g'\left[Y\right]\ar[>->]{d}\\
			A\ar{r}{h} & A'
		\end{tikzcd}
	\end{equation}
	where \(h' \colon f[X] \cap g[Y] \to f'[X] \cap g'[Y]\) is the restriction of~\(h\) to \(f[X] \cap g[Y]\) and the vertical arrows denote subset inclusions.
	\begin{enumerate}
		\item If \(g^\circ \cdot f \geq g'^\circ \cdot f'\), then \(h'\) is a bijection.
		\item If \(g^\circ \cdot f \geq g'^\circ \cdot f'\) and the cospan \((f,g)\) is epi,
		      then \eqref{d:eq:1} is a pullback.
		\item\label{item:difun-incl} If \(h'\) is a bijection and \eqref{d:eq:1} is a pullback,
		      then \(g^\circ \cdot f \geq g'^\circ \cdot f'\).
	\end{enumerate}
\end{lemma}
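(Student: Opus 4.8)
The plan is to push everything through the elementwise description of difunctional relations recorded just after \autoref{p:48}: for the cospans at hand, $x\mathrel{(g^\circ\cdot f)}y$ iff $f(x)=g(y)$, and likewise $x\mathrel{(g'^\circ\cdot f')}y$ iff $f'(x)=g'(y)$. Thus the inequality $g^\circ\cdot f\geq g'^\circ\cdot f'$ is exactly the implication
\[
  f'(x)=g'(y)\ \implies\ f(x)=g(y)\qquad\text{for all } x\in X,\ y\in Y;
\]
I will call this condition $(\star)$. Before addressing the three items I would record that $h'$ is well defined as a map $f[X]\cap g[Y]\to f'[X]\cap g'[Y]$: writing $a=f(x)=g(y)$ for an element $a$ of the intersection, the identities $f'=h\cdot f$ and $g'=h\cdot g$ give $h(a)=f'(x)=g'(y)\in f'[X]\cap g'[Y]$, so in particular the inclusion $f[X]\cap g[Y]\subseteq h^{-1}[f'[X]\cap g'[Y]]$ always holds.

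For item~(1), I would derive surjectivity and injectivity of $h'$ directly from $(\star)$. For surjectivity, given $a'\in f'[X]\cap g'[Y]$ choose $x,y$ with $a'=f'(x)=g'(y)$; then $(\star)$ yields $f(x)=g(y)=:a\in f[X]\cap g[Y]$ with $h'(a)=a'$. For injectivity, if $a_1=f(x_1)=g(y_1)$ and $a_2=f(x_2)=g(y_2)$ satisfy $h'(a_1)=h'(a_2)$, then in particular $f'(x_1)=g'(y_2)$, so $(\star)$ gives $f(x_1)=g(y_2)$, i.e.\ $a_1=a_2$.

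For items~(2) and~(3) I would first isolate the pullback criterion that drives both: since the bottom and right edges of~\eqref{d:eq:1} are subset inclusions, the square is a pullback iff $f[X]\cap g[Y]=h^{-1}[f'[X]\cap g'[Y]]$ as subsets of $A$ (the comparison map being precisely the inclusion noted above, and $h'$ the restriction of $h$). In item~(2) the nontrivial inclusion $\supseteq$ is exactly where joint epimorphism of $(f,g)$, i.e.\ $f[X]\cup g[Y]=A$, comes in: given $a$ with $h(a)\in f'[X]\cap g'[Y]$, joint surjectivity puts $a\in f[X]$ or $a\in g[Y]$; in the first case $a=f(x)$ and $h(a)=f'(x)=g'(y)$ for some $y$, whence $(\star)$ forces $f(x)=g(y)$ and so $a\in g[Y]$ too, giving $a\in f[X]\cap g[Y]$ (the case $a\in g[Y]$ is symmetric). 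For item~(3), assuming $h'$ bijective and the square a pullback, I would verify $(\star)$ as follows: if $f'(x)=g'(y)=:a'$, then $f(x)$ and $g(y)$ both lie in $h^{-1}[\{a'\}]\subseteq h^{-1}[f'[X]\cap g'[Y]]=f[X]\cap g[Y]$ by the pullback equality, and since $h'(f(x))=a'=h'(g(y))$ with $h'$ injective, we get $f(x)=g(y)$.

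The bulk of the argument is routine elementwise bookkeeping, and the only steps needing genuine care are, first, identifying ``the square is a pullback'' with the set equality $f[X]\cap g[Y]=h^{-1}[f'[X]\cap g'[Y]]$ (which relies on the non-horizontal edges being inclusions), and, second, recognizing in item~(2) that joint epimorphism of $(f,g)$ is precisely what upgrades the always-available inclusion $\subseteq$ to the equality. Beyond keeping the primed and unprimed cospans straight, I do not expect a substantial obstacle.
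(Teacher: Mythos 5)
Your proof is correct and takes essentially the same route as the paper's: both unfold \(g^\circ \cdot f \geq g'^\circ \cdot f'\) into the elementwise implication \(f'(x)=g'(y)\Rightarrow f(x)=g(y)\) and then argue each item by direct element chasing, using joint surjectivity of \((f,g)\) in item~(2) and injectivity of \(h'\) together with the pullback property in item~(3). Your explicit identification of the pullback condition with the set equality \(f[X]\cap g[Y]=h^{-1}\left[f'[X]\cap g'[Y]\right]\) merely makes systematic what the paper uses implicitly.
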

\noindent (Notice in particular that if~$h'$ is a bijection
and~\eqref{d:eq:1} is a pullback, then~\eqref{d:eq:1} is a 1/4-iso
pullback.) Using \autoref{p:62}, one proves the announced
characterization:
\begin{theorem}
  \label{p:64}
  The following clauses are equivalent for a functor \(\ftF \colon \SET \to \SET\):
  \begin{enumerate}%
  \item \label{p:49} \(\ftF\) preserves 1/4-iso 2/4-mono pullbacks.
  \item \label{p:52} \(\ftF\) is well-defined on difunctional relations.
  \item \label{p:110} \(\ftF\) is monotone on difunctional relations.
  \item \label{p:10} \(\ftF\) preserves 1/4-iso pullbacks.
  \end{enumerate}
\end{theorem}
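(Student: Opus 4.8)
The plan is to route the equivalences through the two trivial implications \ref{p:110}$\Rightarrow$\ref{p:52} (a monotone action is in particular well-defined, since equality of difunctional relations is mutual inclusion) and \ref{p:10}$\Rightarrow$\ref{p:49} (1/4-iso 2/4-mono pullbacks form a subclass of 1/4-iso pullbacks), and then to close the cycle with the four arguments \ref{p:49}$\Rightarrow$\ref{p:10}, \ref{p:52}$\Rightarrow$\ref{p:49}, \ref{p:52}$\Rightarrow$\ref{p:110} and \ref{p:49}$\Rightarrow$\ref{p:52}. Throughout I use that for all the pullback types of \autoref{p:18} weak preservation coincides with preservation, so that it suffices to verify that each image square is a \emph{weak} pullback; concretely, for the pullback of a cospan $X\xto{u}Y\xfrom{v}B$ with projections $p,q$ this amounts to the relational identity $\ftF p\cdot(\ftF q)^\circ=(\ftF u)^\circ\cdot\ftF v$, one inclusion of which is automatic from commutativity of the image square.

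For \ref{p:49}$\Rightarrow$\ref{p:10} I reduce a general 1/4-iso pullback to the 2/4-mono case. Up to isomorphism such a pullback is the graph of a factorization $v=u\cdot\phi$ with the additional property that every fibre $u^{-1}(v(b))$ is the singleton $\{\phi(b)\}$. Factoring $v=m\cdot e$ into an epi $e\colon B\twoheadrightarrow C$ followed by a mono $m\colon C\rightarrowtail Y$, the singleton-fibre property forces $\phi$ to factor as $\phi=\psi\cdot e$, and then $m=u\cdot\psi$ with $\psi$ monic, so that the resulting square over $C$ is a 1/4-iso 2/4-mono pullback. Applying \ref{p:49} to this square and precomposing with $\ftF e$ yields the required identity $(\ftF u)^\circ\cdot\ftF v=\ftF(\psi\cdot e)=\ftF\phi=\ftF p\cdot(\ftF q)^\circ$ for the original pullback. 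The converse \ref{p:52}$\Rightarrow$\ref{p:49} is then short: the singleton-fibre property makes the difunctional relation of a 1/4-iso 2/4-mono pullback representable in two ways, namely by the cospan $(v,u)$ and by the graph of $\phi$ (the cospan $(\phi,1_X)$); these represent \emph{the same} difunctional relation $B\relto X$, so well-definedness forces $(\ftF u)^\circ\cdot\ftF v=\ftF\phi=\ftF p\cdot(\ftF q)^\circ$, which is exactly weak preservation.

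For \ref{p:52}$\Rightarrow$\ref{p:110} I exploit well-definedness to evaluate the functor action on any difunctional relation via its canonical pushout cospan. Given difunctional $r_1\le r_2$ with pushout cospans $(p^1_1,p^1_2)$ and $(p^2_1,p^2_2)$, the inclusion $r_1\le r_2$ induces a map $h$ of pushouts with $p^2_i=h\cdot p^1_i$; hence $(\ftF p^2_2)^\circ\cdot\ftF p^2_1=(\ftF p^1_2)^\circ\cdot(\ftF h)^\circ\cdot\ftF h\cdot\ftF p^1_1\ge(\ftF p^1_2)^\circ\cdot\ftF p^1_1$, using $(\ftF h)^\circ\cdot\ftF h\ge 1$ together with monotonicity of relational composition, and monotonicity of the action follows.

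The main obstacle is \ref{p:49}$\Rightarrow$\ref{p:52}, i.e.\ well-definedness itself. Here I compare the canonical pushout cospan $(p_1,p_2)$ of a difunctional $r$ with an arbitrary representing cospan $(f,g)$: the pushout property supplies $k$ with $f=k\cdot p_1$ and $g=k\cdot p_2$, and by \autoref{p:62} the associated square~\eqref{d:eq:1} is a 1/4-iso 2/4-mono pullback, hence preserved by \ref{p:49}. As before the inclusion $(\ftF g)^\circ\cdot\ftF f\ge(\ftF p_2)^\circ\cdot\ftF p_1$ is automatic, so the whole difficulty is to extract the reverse inclusion from the preserved square. The delicate point, which I expect to be the crux of the entire proof, is that preservation of~\eqref{d:eq:1} is phrased through $\ftF$ applied to the image-intersections $p_1[X]\cap p_2[Y]$ and $f[X]\cap g[Y]$, whereas the relational identity lives at the level of the images $\ftF p_1[\ftF X]$ and $\ftF p_2[\ftF Y]$; bridging this gap — showing that $\ftF k$ creates no identifications between these images beyond genuine equality — is precisely the bookkeeping that 1/4-iso 2/4-mono preservation is engineered to control, and is where the argument needs the most care.
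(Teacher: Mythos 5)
Most of your architecture is sound, and one piece genuinely improves on the paper's routing: your direct proof of \ref{p:49}$\Rightarrow$\ref{p:10} via the image factorization $v=m\cdot e$ is correct (the singleton-fibre property does force $\phi=\psi\cdot e$ with $\psi$ monic, the square over $C$ is indeed a 1/4-iso 2/4-mono pullback, and $(\ftF u)^\circ\cdot\ftF v=\ftF\psi\cdot\ftF e=\ftF\phi$ closes the argument), whereas the paper never proves this implication directly but obtains it as \ref{p:49}$\Rightarrow$\ref{p:52}$\Rightarrow$\ref{p:10}. Your \ref{p:52}$\Rightarrow$\ref{p:49} is the same two-representations trick the paper uses for \ref{p:52}$\Rightarrow$\ref{p:10} (representing the relation both by $(v,u)$ and by the graph cospan $(\phi,1_X)$), and your \ref{p:52}$\Rightarrow$\ref{p:110} coincides with the paper's argument via the induced pushout comparison map $h$ and $1\le(\ftF h)^\circ\cdot\ftF h$.

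However, there is a genuine gap at exactly the place you flag: \ref{p:49}$\Rightarrow$\ref{p:52} is set up but not proved. You correctly reduce to the inclusion $(\ftF g)^\circ\cdot\ftF f\le(\ftF p_2)^\circ\cdot\ftF p_1$ and correctly locate the mismatch between $\ftF$ applied to the set-level intersections $p_1[X]\cap p_2[Y]$ and $f[X]\cap g[Y]$ (which is all that preservation of \eqref{d:eq:1} gives) and the intersections of images $\ftF p_1[\ftF X]\cap\ftF p_2[\ftF Y]$ and $\ftF f[\ftF X]\cap\ftF g[\ftF Y]$ (which is what \autoref{p:62}\ref{item:difun-incl} needs at the $\ftF$-level) --- but saying this bridge ``is precisely the bookkeeping that 1/4-iso 2/4-mono preservation is engineered to control'' is a statement of intent, not an argument, and the bridge is the actual mathematical content of the implication. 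In the paper it is built from three nontrivial facts about $\SET$-functors: preservation of epimorphisms, preservation of \emph{non-empty} intersections (Trnkov\'a), and preservation of monomorphisms with non-empty domain; via epi-mono factorizations one obtains $\ftF(q[X])\cong\ftF q[\ftF X]$ for maps $q$ with non-empty domain, whence the inclusions $\ftF i_O$ and $\ftF i_A$ corestrict to isomorphisms onto the image-level intersections, transporting the preserved square to a 1/4-iso 2/4-mono pullback over $\ftF O$ and $\ftF A$ to which \autoref{p:62}\ref{item:difun-incl} applies. Moreover, since all three facts can fail on empty sets, the case of the empty difunctional relation requires a separate argument (the paper adjoins a fresh point to $X$, $Y$ and the cospan vertices and transfers the conclusion back along the coprojections, using preservation of monos); your proposal does not mention this case at all. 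Note finally that without \ref{p:49}$\Rightarrow$\ref{p:52} your implication graph only yields \ref{p:49}$\Leftrightarrow$\ref{p:10}, \ref{p:52}$\Leftrightarrow$\ref{p:110} and \ref{p:52}$\Rightarrow$\ref{p:49}, so the theorem as a whole remains unproved.
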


\begin{corollary}
	\label{p:450}
	If a $\SET$-functor admits a normal lax extension, then it preserves 1/4-iso pullbacks.
\end{corollary}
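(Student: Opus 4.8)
The plan is to assemble the statement from two results that are already in place, namely \autoref{p:27} and the equivalence \autoref{p:64}; the corollary is then essentially a matter of book-keeping, since all the genuine content has been pushed into \autoref{p:64}.

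Concretely, I would start from an arbitrary normal lax extension $\eF$ of $\ftF$ and show that $\ftF$ is \emph{monotone on difunctional relations} in the sense recalled just after \autoref{p:27}. Two facts about $\eF$ are needed. First, by \ref{p:61} the assignment $\eF$ is monotone: $r\leq r'$ implies $\eF r\leq\eF r'$. Second, by \autoref{p:27} the extension $\eF$ is \emph{forced} on difunctional relations to agree with the functor action, i.e.\ $\eF(g^\circ\cdot f)=(\ftF g)^\circ\cdot\ftF f$ for all $f\colon X\to A$ and $g\colon Y\to A$. Combining these, suppose $g^\circ\cdot f\colon X\relto Y$ and $g'^\circ\cdot f'\colon X\relto Y$ are difunctional with $g^\circ\cdot f\leq g'^\circ\cdot f'$. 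Applying \ref{p:61} gives $\eF(g^\circ\cdot f)\leq\eF(g'^\circ\cdot f')$, and rewriting both sides by \autoref{p:27} turns this into $(\ftF g)^\circ\cdot\ftF f\leq(\ftF g')^\circ\cdot\ftF f'$. This is exactly the statement that $\ftF$ is monotone on difunctional relations, which is clause \ref{p:110} of \autoref{p:64}.

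To finish, I would simply invoke \autoref{p:64}: since $\ftF$ satisfies clause \ref{p:110}, it also satisfies the equivalent clause \ref{p:10}, namely that $\ftF$ preserves 1/4-iso pullbacks. This concludes the argument.

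I do not expect any real obstacle at this level, precisely because the substantive equivalence between monotonicity on difunctional relations and preservation of 1/4-iso pullbacks is carried by \autoref{p:64}, which may be assumed here. The only point that deserves care is keeping track of directions: the inequality in the definition of monotonicity on difunctional relations must be matched with the correct monotonicity direction of \ref{p:61} and with the orientation of the identities in \autoref{p:27}; once the types $X\relto Y$ of the two difunctional relations are fixed and both are written in the canonical cospan form $g^\circ\cdot f$, these match up directly and no further work is required.
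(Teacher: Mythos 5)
Your proposal is correct and matches the paper's own reasoning exactly: the paper derives monotonicity on difunctional relations from \autoref{p:27} together with \ref{p:61} in the discussion immediately following \autoref{p:27}, and then \autoref{p:450} is read off from the equivalence \ref{p:110}\,$\Leftrightarrow$\,\ref{p:10} of \autoref{p:64}, with no further proof given. Your attention to the direction of the inequalities is also the right (and only) point of care here.
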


\noindent Therefore, the following functors do not admit a normal lax
extension.

\begin{example}
  \label{p:860}
  \begin{enumerate}[wide]
  \item The neighbourhood functor $\calN \colon \SET \to \SET$ (cf.\
    \autoref{expl:functors}(\ref{p:950})) does not preserve 1/4-iso
    pullbacks: the element \(\calP 1 \in \calN 1\) belongs to the
    image of the function $\calN !_\varnothing$, with
    $!_\varnothing \colon \varnothing \rightarrowtail 1$, however, its
    fiber w.r.t.  \(\calN !_2\), with \(!_2 \colon 2 \to 1\), is not a
    singleton.
  \item For every non-positive commutative monoid, the monoid valued
    functor $M^{(-)} \colon \SET \to \SET$ does not preserve 1/4-iso
    pullbacks (\autoref{expl:functors}(\ref{p:403})).
  \item More generally, by (the proof of)
    \cite[Proposition~4.4]{ClementinoEA14}, the functor
    \(\ftF \colon\SET\to\SET\) of the monad induced by a variety of
    algebras that admits a weak form of subtraction (for instance,
    groups, rings,  vector spaces) does not preserve 1/4-iso
    pullbacks.
  \item \label{p:861} For every set \(A\) with at least two elements,
    consider the functor \(\SET(A,-)/\sim\) that maps a set \(X\) to
    the quotient of the set \(\SET(A,X)\) by the equivalence relation
    \(\sim\) that identifies exactly all non-injective
    maps, %
    and maps a function \(f \colon X \to Y\) to the one sending the
    equivalence class of $g\colon A\to X$ to that of $f\cdot g$. This
    functor does not preserve 1/4-iso pullbacks. For instance, for
    \(A = \{0,1\}\), consider the sets \(3 = \{a,b,c\}\) and
    \(B = \{0\}\). Then, the fibre of each element of
    \(B \subseteq A\) w.r.t. the function \(f \colon 3 \to A\) that
    sends \(a\) to \(0\) and \(b,c\) to \(1\) is a singleton; however,
    the fibre of the equivalence class of the constant map into \(0\)
    w.r.t. \(\SET(A,f)_\sim\) is not a singleton. Similar
    counterexamples can be constructed for arbitrary \(A\) with at
    least two elements.
  \end{enumerate}
\end{example}

\begin{remark}
  Every coalgebra can be quotiented by behavioural equivalence
  (e.g.~\cite{Gumm03}). Such a quotient can be described by a
  \emph{cocongruence} on a given coalgebra, i.e.\ an equivalence
  relation that is compatible with the coalgebra structure, and, of
  course, a cocongruence can be specified by a generating relation
  that need not itself be an equivalence. For instance, cocongruences
  have been studied in the context of linear weighted
  automata~\cite{BonchiEA12} (where they are in fact termed
  bisimulations), and even on neighbourhood frames, one obtains such a
  notion of equivalence-witnessing relation from the standard Barr
  extension~\cite{MV15}. All this does not contradict the moral claim
  that, by \autoref{p:860}, there are no `good' notions of
  bisimulation for, e.g., neighbourhood frames or integer-weighted
  transition systems, as (generating relations of) cocongruences are
  missing some of the features that we include in the wish list for
  bisimulations and that $L$-bisimulations do provide (cf.\
  \autoref{sec:prelims}). Notably, cocongruences work only on a single
  coalgebra (while we expect bisimulations to connect two possibly
  different coalgebras), and they fail to be closed under relational
  composition.
\end{remark}

\section{Existence of Normal Lax Extensions}
\label{sec:lax-existence}

We proceed to present the main results of the paper: a \(\SET\)-functor that
weakly preserves 1/4-iso pullbacks and 4/4-epi pullbacks, or that preserves
1/4-mono pullbacks admits a normal lax extension. In view of the facts recalled
in \autoref{sec:prelims}, this means that these functors admit a notion of
bisimulation that captures behavioural equivalence, or equivalently, that they
admit a separating class of monotone predicate liftings.

We begin by showing that the smallest lax extension of a \(\SET\)-functor
is obtained by ``closing its Barr relax extension under composition''. As a
consequence, in \autoref{cor:least_flat} we obtain a criterion to
determine if a \(\SET\)-functor admits a normal lax extension.

Consider the partially ordered classes \(\LaxF\) and \(\ReLaxF\) of lax and
relax extensions of~$\ftF$, respectively, ordered pointwise.
With the following result we can construct lax extensions from relax extensions in a universal way.

\begin{proposition}
	\label{p:34}
	Let \(\ftF \colon \SET \to \SET\) be a functor.
	The inclusion \(\LaxF \ito \ReLaxF\) has a left adjoint \(\laxif{(-)} \colon \ReLaxF \to \LaxF\) that sends a relax extension \(\eR \colon \REL \to \REL\) of \(\ftF\) to its \df{laxification} \(\laxif{\eR} \colon \REL \to \REL\), which is defined on \(r \colon X \relto Y\) by
	\begin{equation}
		\label{p:410}
		\laxif{\eR} r = \bigvee_{\substack{r_1, \ldots, r_n \colon \\ r_n \cdot \ldots \cdot r_1 \leq r}} \eR r_n \cdot \ldots \cdot \eR r_1.
	\end{equation}
	Furthermore, if a relax extension \(\eR \colon \SET \to \SET\) preserves
	converses, then so does its laxification.
\end{proposition}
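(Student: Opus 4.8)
The plan is to verify first that $\laxif{\eR}$ is a genuine lax extension, then establish the Galois connection by a two-sided inequality, and finally treat converse preservation separately. Throughout I will use that relational composition distributes over arbitrary joins, that $(\bigvee_i t_i)^\circ=\bigvee_i t_i^\circ$ and $(s\cdot r)^\circ=r^\circ\cdot s^\circ$, and that each hom-poset $\REL(\ftF X,\ftF Y)$ is a complete lattice, so the join in \eqref{p:410} always exists (as a subset of the fixed set $\ftF X\times\ftF Y$) even though it is indexed by a proper class of decompositions. I begin by checking that $\laxif{\eR}$ is a relax extension. Monotonicity \ref{p:61} is immediate, since enlarging $r$ only enlarges the set of admissible decompositions $r_n\cdots r_1\le r$, hence the join. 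For \ref{p:0}, the single-step decomposition $r_1=r$ is always admissible, so $\eR r\le\laxif{\eR}r$ pointwise; as $\eR$ already satisfies \ref{p:0}, so does $\laxif{\eR}$.

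The crucial step is \ref{p:26}. Given $r\colon X\relto Y$ and $s\colon Y\relto Z$, I would distribute composition over the two joins defining $\laxif{\eR}s\cdot\laxif{\eR}r$, obtaining a join of terms $\eR s_m\cdots\eR s_1\cdot\eR r_k\cdots\eR r_1$ indexed by decompositions $s_m\cdots s_1\le s$ and $r_k\cdots r_1\le r$. The key observation is that the concatenated sequence $r_1,\dots,r_k,s_1,\dots,s_m$ is itself an admissible decomposition of $s\cdot r$, since monotonicity of relational composition gives $s_m\cdots s_1\cdot r_k\cdots r_1\le s\cdot r$, and its associated term is exactly $\eR s_m\cdots\eR s_1\cdot\eR r_k\cdots\eR r_1$. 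Hence every term of $\laxif{\eR}s\cdot\laxif{\eR}r$ occurs among the terms of $\laxif{\eR}(s\cdot r)$, which yields \ref{p:26}. This indexing bookkeeping --- keeping the applicative composition order straight --- is the only place where care is required; I expect no deeper obstacle.

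For the adjunction it remains to show, for every lax extension $\eF$, that $\laxif{\eR}\le\eF$ iff $\eR\le\eF$. The forward direction follows from $\eR\le\laxif{\eR}$ established above. Conversely, if $\eR\le\eF$, then for any admissible decomposition $r_n\cdots r_1\le r$ we have $\eR r_n\cdots\eR r_1\le\eF r_n\cdots\eF r_1\le\eF(r_n\cdots r_1)\le\eF r$, using $\eR\le\eF$, iterated \ref{p:26} for the lax extension $\eF$, and finally \ref{p:61}; taking the join over all decompositions gives $\laxif{\eR}r\le\eF r$. Together with $\laxif{\eR}\in\LaxF$, this biconditional is precisely the statement that $\laxif{(-)}$ is left adjoint to the inclusion $\LaxF\ito\ReLaxF$.

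Finally, for converse preservation, assuming $\eR(r^\circ)=(\eR r)^\circ$, I would compute $(\laxif{\eR}r)^\circ$ by pushing converse through the join and through each composite, turning a term $\eR r_n\cdots\eR r_1$ into $\eR(r_1^\circ)\cdots\eR(r_n^\circ)$. Since $r_n\cdots r_1\le r$ is equivalent to $r_1^\circ\cdots r_n^\circ\le r^\circ$, the assignment $(r_1,\dots,r_n)\mapsto(r_n^\circ,\dots,r_1^\circ)$ is a bijection between admissible decompositions of $r$ and of $r^\circ$ that matches the corresponding terms; hence the two joins coincide and $\laxif{\eR}(r^\circ)=(\laxif{\eR}r)^\circ$.
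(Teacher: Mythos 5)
Your proposal is correct and follows essentially the same route as the paper: the concatenation of decompositions (plus distributivity of relational composition over joins) for \ref{p:26}, the single-step decomposition giving \(\eR \le \laxif{\eR}\), and the reversal bijection \((r_1,\dots,r_n)\mapsto(r_n^\circ,\dots,r_1^\circ)\) for converse preservation. The only cosmetic difference is that you verify the adjunction directly as the Galois-connection biconditional \(\laxif{\eR}\le\eF \iff \eR\le\eF\), whereas the paper packages the same inequality chain (iterated \ref{p:26} and \ref{p:61} for lax \(\eF\)) as monotonicity of \(\laxif{(-)}\) together with \(\laxif{\eF}=\eF\) on lax extensions; these are equivalent formulations of the same reflection argument.
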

Since every lax extension of a functor is greater than or equal to the Barr relax
extension (cf.\ \autoref{sec:prelims}), we thus have:

\begin{corollary}
	\label{p:42}
	The smallest lax extension of a functor is given by the laxification of its Barr relax extension.
\end{corollary}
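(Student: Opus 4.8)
The plan is to obtain the corollary directly from the adjunction established in \autoref{p:34}, together with the fact recalled immediately above the statement that every lax extension dominates the Barr relax extension~$\ftbF$. Write $i \colon \LaxF \ito \ReLaxF$ for the inclusion. Since $\laxif{(-)}$ is left adjoint to $i$ and both classes are merely partially ordered, the adjunction amounts to a Galois connection: for every relax extension $\eR$ and every lax extension $\eF$,
\[
\laxif{\eR} \leq \eF \quad\Longleftrightarrow\quad \eR \leq i(\eF) = \eF,
\]
where the comparison on the right is taken in $\ReLaxF$ and that on the left in $\LaxF$.

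First I would instantiate this equivalence at $\eR = \ftbF$, which yields $\laxif{\ftbF} \leq \eF \iff \ftbF \leq \eF$ for every lax extension $\eF$. Next I would invoke the recalled fact that $\ftbF \leq \eF$ holds for \emph{every} lax extension $\eF$ -- indeed $\ftF \pi_2 \cdot (\ftF \pi_1)^\circ \leq \eF r$ for any lax extension by \ref{p:61}--\ref{p:0}. Combining the two gives $\laxif{\ftbF} \leq \eF$ for every lax extension $\eF$, so $\laxif{\ftbF}$ is a lower bound for the whole class $\LaxF$. Finally, $\laxif{\ftbF}$ is itself a lax extension, as the laxification of any relax extension lands in $\LaxF$ by \autoref{p:34}; a lower bound that is itself a member of the class is necessarily its least element, whence $\laxif{\ftbF}$ is the smallest lax extension of $\ftF$.

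I do not anticipate a genuine obstacle here, since the substance has been front-loaded into \autoref{p:34} (existence of the laxification together with its defining join formula~\eqref{p:410}) and into the preliminary observation that $\ftbF$ sits below every lax extension. The only point requiring care is to apply the adjunction in the correct direction -- reading off that the left adjoint carries the (pointwise) smallest relax extension below every lax extension -- rather than trying to verify minimality of $\laxif{\ftbF}$ by hand from the explicit supremum in~\eqref{p:410}.
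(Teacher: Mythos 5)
Your proof is correct and follows essentially the same route as the paper: the corollary is obtained there, exactly as you argue, by combining the adjunction of \autoref{p:34} (read as a Galois connection, so that $\laxif{\ftbF}\leq\eF \iff \ftbF\leq\eF$ for every lax extension $\eF$) with the fact recalled in \autoref{sec:prelims} that $\ftbF$ lies below every lax extension. Your write-up merely makes explicit the one-line deduction the paper leaves implicit, including the (correct) observation that $\laxif{\ftbF}$ itself belongs to $\LaxF$ and is therefore the least element rather than just a lower bound.
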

For the Barr relax extension of a $\SET$-functor, the supremum in the formula~(\ref{p:410}) can be restricted as follows.

\begin{lemma}
	\label{p:605}
	For every composable sequence $r_1, \ldots, r_n$ such that $r_n \cdot \ldots \cdot r_1 \leq r$, for some relation $r$, there is a composable sequence $r_1', \ldots, r_n'$ such that $r_n' \cdot \ldots \cdot r_1' = r$  and $\ftbF r_n \cdot \ldots \cdot \ftbF r_1 \leq \ftbF r_n' \cdot \ldots \cdot \ftbF r_1'$.
\end{lemma}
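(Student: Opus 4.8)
The plan is to build the sequence $r_1', \ldots, r_n'$ by routing the ``missing'' pairs of $r\colon X\relto Y$ through a fresh copy of $r$ running in parallel to the original chain. Write the chain as $r_1\colon X=Z_0\relto Z_1$, \ldots, $r_n\colon Z_{n-1}\relto Z_n=Y$, and let $X\xfrom{\pi_1} R\xto{\pi_2} Y$ be the canonical factorization of $r$, so that $r=\pi_2\cdot\pi_1^\circ$. I would enlarge each intermediate object to $Z_i'=Z_i\sqcup R$ (for $0<i<n$), leaving the endpoints fixed, $Z_0'=X$ and $Z_n'=Y$; write $\iota_i\colon Z_i\to Z_i'$ and $j_i\colon R\to Z_i'$ for the coproduct injections, with $\iota_0,\iota_n$ the identities. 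On the $Z_i$-components I copy the original relations, so that $r_i'\ge \iota_i\cdot r_i\cdot\iota_{i-1}^\circ$, and on the $R$-components I install a ``bypass wire'': in $r_1'$ relate $x\in X$ to $j_1(\rho)$ whenever $\pi_1(\rho)=x$; in each middle $r_i'$ relate $j_{i-1}(\rho)$ to $j_i(\rho)$; and in $r_n'$ relate $j_{n-1}(\rho)$ to $\pi_2(\rho)$.

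\textbf{Two tracks.} First I would check $r_n'\cdots r_1'=r$. Since the middle relations connect the $Z$-track and the $R$-track only to themselves, every path lies entirely in one track: the $Z$-track contributes $r_n\cdots r_1$, while the $R$-track contributes exactly $\pi_2\cdot\pi_1^\circ=r$. Hence the composite is $(r_n\cdots r_1)\cup r=r$, using the hypothesis $r_n\cdots r_1\le r$ to see that the $Z$-track adds nothing outside $r$; the bypass is precisely what forces the composite \emph{up} to all of $r$.

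\textbf{Domination.} For $\ftbF r_n\cdots \ftbF r_1\le \ftbF r_n'\cdots \ftbF r_1'$ I would invoke two properties of the Barr extension valid for every $\SET$-functor. First, conjugation by functions: computing $\ftbF$ from the span representing $\iota_i\cdot r_i\cdot\iota_{i-1}^\circ$, functoriality of $\ftF$ gives $\ftbF(\iota_i\cdot r_i\cdot\iota_{i-1}^\circ)=\ftF\iota_i\cdot\ftbF r_i\cdot(\ftF\iota_{i-1})^\circ$. Combined with monotonicity \ref{p:61} and $r_i'\ge \iota_i\cdot r_i\cdot\iota_{i-1}^\circ$, this yields $\ftbF r_i'\ge \ftF\iota_i\cdot\ftbF r_i\cdot(\ftF\iota_{i-1})^\circ$. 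Composing these bounds (relational composition is monotone) produces a junction factor $(\ftF\iota_i)^\circ\cdot\ftF\iota_i$ between consecutive terms; since $f^\circ\cdot f\ge 1$ for any function $f$, each junction may be bounded below by an identity, so that with $\iota_0,\iota_n$ the identities the product telescopes:
\[
\ftbF r_n'\cdots \ftbF r_1'\ \ge\ \ftF\iota_n\cdot\ftbF r_n\cdots\ftbF r_1\cdot(\ftF\iota_0)^\circ\ =\ \ftbF r_n\cdots\ftbF r_1.
\]

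\textbf{Main obstacle.} The delicate point is the interplay of the two requirements: enlarging the relations so the composite reaches all of $r$ must not destroy the lower bound on the $\ftbF$-composite, and the domination argument must not secretly rely on composition-preservation \ref{p:26}, which the Barr relax extension generally lacks. The bypass construction resolves this because it only ever uses the two conjugation identities above together with the purely relational fact $f^\circ\cdot f\ge 1$. In particular, no weak-pullback hypothesis, no injectivity of $\ftF\iota_i$, and no appeal to \ref{p:26} is needed; edge cases such as empty intermediate objects cause no trouble precisely because the telescoping needs only $f^\circ\cdot f\ge 1$ rather than $f^\circ\cdot f=1$.
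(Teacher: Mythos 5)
Your proposal is correct and takes essentially the same route as the paper's proof: the paper likewise adjoins a copy of a span $R$ representing $r$ to every intermediate set (forming $X_i + R$, with identity wires $\rho\mapsto\rho$ in the middle relations and copairings $[\pi_1,\pi_r]$ and $[\rho_n,\rho_r]$ funnelling in and out at the ends), and derives $\ftbF r_n \cdot \ldots \cdot \ftbF r_1 \leq \ftbF r_n' \cdot \ldots \cdot \ftbF r_1'$ from the commuting coprojection diagram, which amounts to exactly your conjugation-plus-telescoping argument via $f^\circ \cdot f \geq 1$. The only cosmetic difference is that the paper treats $n=1$ separately by setting $r_1' = r$ (your bypass description implicitly assumes an intermediate object exists, i.e.\ $n\geq 2$), but that case is immediate from monotonicity of $\ftbF$.
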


\begin{corollary}
	Let $\ftF \colon \SET \to \SET$ be a functor.
	For every relation $r \colon X \relto Y$,
	\begin{align*}
		\laxif{(\ftbF)} r = \bigvee_{\substack{r_1, \ldots, r_n \colon \\ r_n \cdot \ldots \cdot r_1 = r}} \ftbF r_n \cdot \ldots \cdot \ftbF r_1.
	\end{align*}
\end{corollary}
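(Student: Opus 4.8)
The plan is to obtain the formula directly by combining the explicit description of the laxification in \autoref{p:34} with the term-replacement result of \autoref{p:605}. Instantiating \eqref{p:410} at the Barr relax extension $\ftbF$ gives
\[
	\laxif{(\ftbF)} r = \bigvee_{\substack{r_1, \ldots, r_n \colon \\ r_n \cdot \ldots \cdot r_1 \leq r}} \ftbF r_n \cdot \ldots \cdot \ftbF r_1,
\]
so the task reduces to showing that the supremum over composable sequences whose composite merely lies below $r$ agrees with the supremum over those whose composite is exactly $r$.

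The inequality ``$\geq$'' is immediate: a composable sequence with $r_n \cdot \ldots \cdot r_1 = r$ in particular satisfies $r_n \cdot \ldots \cdot r_1 \leq r$, so the index set of the right-hand supremum is contained in that of the left-hand one, and hence the right-hand side is bounded above by $\laxif{(\ftbF)} r$. For the converse inequality ``$\leq$'' I would argue termwise. Given any composable sequence $r_1, \ldots, r_n$ with $r_n \cdot \ldots \cdot r_1 \leq r$, \autoref{p:605} produces a composable sequence $r_1', \ldots, r_n'$ satisfying $r_n' \cdot \ldots \cdot r_1' = r$ together with $\ftbF r_n \cdot \ldots \cdot \ftbF r_1 \leq \ftbF r_n' \cdot \ldots \cdot \ftbF r_1'$. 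Since the right-hand term occurs in the restricted supremum, each term of the defining join for $\laxif{(\ftbF)} r$ is dominated by the right-hand side; taking the join over all admissible sequences then yields the remaining inequality.

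There is no real obstacle to surmount here, as the substance of the statement is already contained in \autoref{p:605}: that lemma performs precisely the replacement of ``$\leq r$'' by ``$= r$'' at the level of individual composites while only enlarging their images under $\ftbF$. The corollary is therefore a routine assembly of \autoref{p:34} and \autoref{p:605}, requiring no additional computation beyond the two monotonicity observations above.
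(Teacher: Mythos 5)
Your proof is correct and matches the paper's (implicit) argument exactly: the corollary is stated as a direct consequence of \autoref{p:34} and \autoref{p:605}, with the restricted supremum dominated termwise via the Barr upper bounds from \autoref{p:605} and the reverse inequality following from inclusion of index sets. Nothing is missing.
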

Therefore, as normality of a lax extension also implies normality of any lax extension
below it, we have
\begin{corollary}
	\label{cor:least_flat}
	A functor \(\ftF \colon \SET \to \SET\) admits a normal lax extension iff the laxification of its Barr relax extension is normal.
	More concretely, a functor \(\ftF \colon \SET \to \SET\) admits a normal lax extension iff for every set \(X\) and every composable sequence of relations \(r_1,\ldots,r_n\), whenever \(r_n \cdot \ldots \cdot r_1 = 1_X\), then $\ftbF r_n \cdot \ldots \cdot \ftbF r_1 \leq 1_{\ftF X}$.
\end{corollary}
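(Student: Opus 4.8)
The plan is to obtain both equivalences as bookkeeping on top of Corollary~\ref{p:42} (which identifies \(\laxif{(\ftbF)}\) as the \emph{smallest} lax extension of \(\ftF\)) together with the explicit formula for \(\laxif{(\ftbF)}\,1_X\) supplied by the preceding corollary. The key auxiliary observation — the one justifying the phrase ``normality of a lax extension also implies normality of any lax extension below it'' — is that if \(\eF\) is a normal lax extension and \(\eF'\) is any lax extension with \(\eF' \le \eF\), then for every set \(X\) we have \(1_{\ftF X} = \ftF 1_X \le \eF' 1_X \le \eF 1_X = 1_{\ftF X}\): the lower bound is axiom \ref{p:0} applied to the identity map, while the upper bound combines \(\eF' \le \eF\) with normality of \(\eF\). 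Hence \(\eF' 1_X = 1_{\ftF X}\), so \(\eF'\) is normal as well.

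For the first equivalence I would argue as follows. The ``if'' direction is immediate: if \(\laxif{(\ftbF)}\) is normal, then \(\ftF\) admits a normal lax extension, namely \(\laxif{(\ftbF)}\) itself. For ``only if'', suppose \(\ftF\) admits some normal lax extension \(\eF\). Since \(\laxif{(\ftbF)}\) is the least lax extension by Corollary~\ref{p:42}, we have \(\laxif{(\ftbF)} \le \eF\), and the downward-inheritance observation above immediately yields normality of \(\laxif{(\ftbF)}\).

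For the concrete reformulation I would instantiate the preceding corollary at \(r = 1_X\), giving \(\laxif{(\ftbF)}\,1_X = \bigvee_{r_n \cdot \ldots \cdot r_1 = 1_X} \ftbF r_n \cdot \ldots \cdot \ftbF r_1\). Because the Barr relax extension is normal, the one-term sequence \(r_1 = 1_X\) contributes \(\ftbF 1_X = 1_{\ftF X}\) to this join, so \(\laxif{(\ftbF)}\,1_X \ge 1_{\ftF X}\) holds unconditionally. Consequently \(\laxif{(\ftbF)}\,1_X = 1_{\ftF X}\) is equivalent to \(\laxif{(\ftbF)}\,1_X \le 1_{\ftF X}\), which by the formula is exactly the requirement that every joinand \(\ftbF r_n \cdot \ldots \cdot \ftbF r_1\) with \(r_n \cdot \ldots \cdot r_1 = 1_X\) lie below \(1_{\ftF X}\); combining this with the first equivalence yields the stated criterion. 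The argument requires essentially no new computation, and the only point needing care — the main, if minor, obstacle — is that the lower bound \(\ftF 1_X \le \eF' 1_X\) from \ref{p:0} is what pins the value of \(\laxif{(\ftbF)}\,1_X\) (and of any lax extension on the identity) \emph{exactly} at \(1_{\ftF X}\) rather than merely bounding it from above, so that testing normality reduces to the single inequality \(\le 1_{\ftF X}\).
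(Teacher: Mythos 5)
Your proposal is correct and follows essentially the same route as the paper: it combines \autoref{p:42} (the laxification of the Barr relax extension is the least lax extension) with the downward inheritance of normality via \ref{p:0} — which is precisely the paper's parenthetical justification ``normality of a lax extension also implies normality of any lax extension below it'' — and then reads off the concrete criterion from the preceding corollary's formula for \(\laxif{(\ftbF)}\) evaluated at \(1_X\). Your explicit spelling-out of the sandwich \(1_{\ftF X} = \ftF 1_X \le \eF' 1_X \le \eF 1_X = 1_{\ftF X}\) is exactly the detail the paper leaves implicit.
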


\begin{remark}
	It is well-known \cite{Bar70} that for every functor $\ftF \colon \SET \to \SET$ and all relations $r \colon X \relto Y$ and $s \colon Y \relto Z$, $\ftbF (s \cdot r) \leq \ftbF s \cdot \ftbF r$.
	Hence, once we show the inequality of \autoref{cor:least_flat} we actually have equality.
\end{remark}

In general terms, our main results follow by showing that in \autoref{cor:least_flat}, under certain conditions on $\SET$-functors, it suffices to consider composable sequences of relations that satisfy nice properties.
In this regard, it is convenient to introduce the following notion.

\begin{definition}
	Let $r_1, \ldots, r_n$ be a composable sequence of relations.
	A composable sequence $s_1, \ldots, s_k$ is said to be a \df{Barr upper bound}
	 of the sequence $r_1, \ldots, r_n$ if $r_n \cdot \ldots \cdot r_1 = s_k \cdot \ldots \cdot s_1$ and $\ftbF r_n \cdot \ldots \cdot \ftbF r_1 \leq \ftbF s_k \cdot \ldots \cdot \ftbF s_1$.
\end{definition}

In Section~\ref{sec:f-actions} we have seen that every \(\SET\)-functor that
admits a normal lax extension preserves 1/4-iso pullbacks, or equivalently, it is monotone on difunctional relations (\autoref{p:64}).
As we show next, the latter condition is also equivalent to satisfying the criterion of \autoref{cor:least_flat} for pairs of composable relations.

\begin{proposition}
	\label{p:300}
	Let $\ftF \colon \SET \to \SET$ be a functor.
	The following clauses are equivalent:
	\begin{enumerate}[(i)]
		\item \label{p:601} The functor $\ftF \colon \SET \to \SET$ preserves 1/4-iso pullbacks.
		\item \label{p:602} For all relations $r_1 \colon X \relto Y$, $r_2 \colon Y \relto X$ such that $r_2 \cdot r_1 \leq  1_X$, $\ftbF r_2 \cdot \ftbF r_1 \leq 1_{\ftF X}$.
		\item \label{p:603} For all relations $r_1 \colon X \relto Y$, $r_2 \colon Y \relto X$ such that $r_2 \cdot r_1 =  1_X$, $\ftbF r_2 \cdot \ftbF r_1 \leq 1_{\ftF X}$.
	\end{enumerate}
\end{proposition}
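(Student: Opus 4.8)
The plan is to route the whole equivalence through \autoref{p:64}, which lets me replace ``preserves 1/4-iso pullbacks'' by ``\(\ftF\) is monotone on difunctional relations''. Once this translation is made, clauses \ref{p:601} and \ref{p:602} become two faces of the \emph{same} inequality, and the only genuine computation is an elementary one relying on two facts: the Barr extension of a relation may be computed from \emph{any} span factorization, and for every function \(h\) one has \(\ftF h\cdot(\ftF h)^\circ\leq 1\leq(\ftF h)^\circ\cdot\ftF h\). The implication \ref{p:602}\(\Rightarrow\)\ref{p:603} is immediate, and \ref{p:603}\(\Rightarrow\)\ref{p:602} is a short completion argument using only subdistributivity of \(\ftbF\).

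For \ref{p:601}\(\Rightarrow\)\ref{p:602} I would first invoke \autoref{p:64} to get monotonicity on difunctional relations. Given \(r_1\colon X\relto Y\), \(r_2\colon Y\relto X\) with \(r_2\cdot r_1\leq 1_X\), choose spans \(X\xfrom{\pi_1}R_1\xto{\pi_2}Y\) and \(Y\xfrom{\sigma_1}R_2\xto{\sigma_2}X\) factorizing \(r_1,r_2\), so that \(\ftbF r_1=\ftF\pi_2\cdot(\ftF\pi_1)^\circ\) and \(\ftbF r_2=\ftF\sigma_2\cdot(\ftF\sigma_1)^\circ\). The difunctional relations \(d=\sigma_1^\circ\cdot\pi_2\) and \(d'=\sigma_2^\circ\cdot\pi_1\) of type \(R_1\relto R_2\) satisfy \(d\leq d'\) exactly when \(r_2\cdot r_1\leq 1_X\) (both assert that \(\pi_2(a)=\sigma_1(b)\) forces \(\pi_1(a)=\sigma_2(b)\)). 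Monotonicity then yields \((\ftF\sigma_1)^\circ\cdot\ftF\pi_2\leq(\ftF\sigma_2)^\circ\cdot\ftF\pi_1\), and the key inequality reads
\[
\ftbF r_2\cdot\ftbF r_1=\ftF\sigma_2\cdot(\ftF\sigma_1)^\circ\cdot\ftF\pi_2\cdot(\ftF\pi_1)^\circ\leq\big(\ftF\sigma_2\cdot(\ftF\sigma_2)^\circ\big)\cdot\big(\ftF\pi_1\cdot(\ftF\pi_1)^\circ\big)\leq 1_{\ftF X}.
\]

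For the converse \ref{p:602}\(\Rightarrow\)\ref{p:601} I would run the same correspondence backwards, again closing via \autoref{p:64} by verifying monotonicity on difunctional relations directly. Given difunctional \(g^\circ\cdot f\leq g'^\circ\cdot f'\) of type \(U\relto V\) (with \(f\colon U\to A\), \(g\colon V\to A\), \(f'\colon U\to A'\), \(g'\colon V\to A'\)), set \(r_1=f\cdot(f')^\circ\colon A'\relto A\) and \(r_2=g'\cdot g^\circ\colon A\relto A'\); then \(r_2\cdot r_1\leq 1_{A'}\) is precisely \(g^\circ\cdot f\leq g'^\circ\cdot f'\), and by span-independence \(\ftbF r_1=\ftF f\cdot(\ftF f')^\circ\), \(\ftbF r_2=\ftF g'\cdot(\ftF g)^\circ\). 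Applying \ref{p:602} gives \(\ftF g'\cdot(\ftF g)^\circ\cdot\ftF f\cdot(\ftF f')^\circ\leq 1_{\ftF A'}\); pre- and post-composing with \((\ftF g')^\circ\) and \(\ftF f'\) and using \(1\leq(\ftF g')^\circ\cdot\ftF g'\) and \(1\leq(\ftF f')^\circ\cdot\ftF f'\) extracts \((\ftF g)^\circ\cdot\ftF f\leq(\ftF g')^\circ\cdot\ftF f'\), which is the required monotonicity.

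Finally, for \ref{p:603}\(\Rightarrow\)\ref{p:602} I would complete a subidentity composite to a genuine identity. Given \(r_1,r_2\) with \(r_2\cdot r_1\leq 1_X\), adjoin a fresh diagonal over a disjoint copy \(Y'=Y+(X\setminus\dom(r_2\cdot r_1))\), with coproduct injection \(j\colon Y\ito Y'\), obtaining \(r_1',r_2'\) with \(r_2'\cdot r_1'=1_X\), \(r_1=j^\circ\cdot r_1'\) and \(r_2=r_2'\cdot j\). Then \ref{p:603} gives \(\ftbF r_2'\cdot\ftbF r_1'\leq 1_{\ftF X}\), and using only subdistributivity \(\ftbF(s\cdot r)\leq\ftbF s\cdot\ftbF r\) (with \(\ftbF\) of a function and its converse equal to \(\ftF\) of it and its converse) together with \(\ftF j\cdot(\ftF j)^\circ\leq 1\), one gets \(\ftbF r_2\cdot\ftbF r_1\leq\ftbF r_2'\cdot\ftF j\cdot(\ftF j)^\circ\cdot\ftbF r_1'\leq\ftbF r_2'\cdot\ftbF r_1'\leq 1_{\ftF X}\). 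I expect the main difficulty to be conceptual rather than computational, namely recognizing that ``\(\ftbF r_2\cdot\ftbF r_1\leq 1\)'' and monotonicity on difunctional relations are linked by composing with the span legs via \(\ftF h\cdot(\ftF h)^\circ\leq 1\leq(\ftF h)^\circ\cdot\ftF h\); the one step demanding genuine care is this last completion, where, since \(\ftbF\) does \emph{not} commute with composition by functions in general, the bound must be transported with subdistributivity only and never with equality.
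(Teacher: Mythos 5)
Your proposal is correct and matches the paper's own proof in essence: the equivalence of (i) and (ii) is exactly the paper's argument---the observation that \(r_2\cdot r_1\leq 1_X\) holds iff the two difunctional relations between the span carriers are ordered (\(\sigma_1^\circ\cdot\pi_2\leq\sigma_2^\circ\cdot\pi_1\) in your notation, \(\pi_2^\circ\cdot\rho_1\leq\rho_2^\circ\cdot\pi_1\) in the paper's), combined with \autoref{p:64}, where you additionally spell out the routine \(\ftF h\cdot(\ftF h)^\circ\leq 1\leq (\ftF h)^\circ\cdot\ftF h\) manipulations that the paper leaves implicit. For (iii)\(\Rightarrow\)(ii) the paper simply cites \autoref{p:605}; your disjoint-copy completion \(Y'=Y+(X\setminus\dom(r_2\cdot r_1))\) is precisely the \(n=2\), \(r=1_X\) instance of that lemma's construction (the paper pads the intermediate set with a copy of the whole canonical span of \(r\), you pad only with the missing diagonal), transported along the injection \(j\) by the same subdistributivity mechanism, so the route is the same.
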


Now, suppose that we want to extend the previous result in inductive style to composable triples of relations.
Due to the next lemma, a simple idea  to reduce the case of composable triples to the case of composable pairs of relations is to take the difunctional closure of the second relation in the sequence.

\begin{lemma}
	\label{p:610}
	Let $r_1 \colon X_0 \relto X_1$ , $r_2 \colon X_1 \relto X_2$ and $r_3 \colon X_2 \relto X_3$ be relations given by spans that form the base of the commutative diagram
\begin{center}
	\begin{tikzcd}
	    &      &     & O   &     &     &   \\
		X_0 & R_1  & X_1 & R_2 & X_2 & R_3 & X_3. \\
		\ar[from=2-2, to=2-1, "\pi_1"]
		\ar[from=2-2, to=2-3, "\rho_1"']
		\ar[from=2-4, to=2-3, "\pi_2"]
		\ar[from=2-4, to=2-5, "\rho_2"']
		\ar[from=2-6, to=2-5, "\pi_3"]
		\ar[from=2-6, to=2-7, "\rho_3"']
		\ar[from=2-3, to=1-4, "p_1"]
		\ar[from=2-5, to=1-4, "p_2"']
		\ar[from=2-2, to=1-4, bend left, "\rho_1'"]
		\ar[from=2-6, to=1-4, bend right, "\pi_3'"']
		\ar[from=2-4, to=1-4, phantom, very near end, "\rotatebox{45}{$\llcorner$}"]
	\end{tikzcd}
\end{center}
Then, with $r_1' \colon X \relto O$ and  $r_3' \colon O \relto X_3$ defined by the spans $X_0 \xfrom{\pi_1} R_1 \xto{\rho_1'} O$ and $X_0 \xfrom{\pi_3'} R_3 \xto{\rho_3} X_3$, respectively, $\ftbF r_3 \cdot \ftbF r_2 \cdot \ftbF r_1 \leq \ftbF r_3 \cdot \ftbF \hat{r}_2 \cdot \ftbF r_1 \leq \ftbF r_3' \cdot \ftbF r_1'$.
\end{lemma}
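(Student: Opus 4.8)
The plan is to prove the two inequalities separately, in both cases simply unfolding the Barr extension along the given spans.

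The first inequality is immediate. Since the difunctional closure satisfies $r_2 \leq \hat{r}_2$ and the Barr relax extension is monotone (property~\ref{p:61}), we have $\ftbF r_2 \leq \ftbF \hat{r}_2$; as relational composition is monotone in each argument, precomposing with $\ftbF r_1$ and postcomposing with $\ftbF r_3$ yields $\ftbF r_3 \cdot \ftbF r_2 \cdot \ftbF r_1 \leq \ftbF r_3 \cdot \ftbF \hat{r}_2 \cdot \ftbF r_1$.

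For the second inequality, the key observation is to read off from the diagram that $O$ together with $p_1,p_2$ is the pushout of the span $X_1 \xfrom{\pi_2} R_2 \xto{\rho_2} X_2$ defining $r_2$; hence by the pushout characterization of difunctional closure recalled after \autoref{p:48}, $\hat{r}_2 = p_2^\circ \cdot p_1$. Applying the oplax inequality $\ftbF(s \cdot r) \leq \ftbF s \cdot \ftbF r$ for the Barr extension noted above, together with $\ftbF(p_2^\circ) = (\ftF p_2)^\circ$ and $\ftbF p_1 = \ftF p_1$, gives $\ftbF \hat{r}_2 \leq (\ftF p_2)^\circ \cdot \ftF p_1$. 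I would then substitute $\ftbF r_1 = \ftF \rho_1 \cdot (\ftF \pi_1)^\circ$ and $\ftbF r_3 = \ftF \rho_3 \cdot (\ftF \pi_3)^\circ$ into $\ftbF r_3 \cdot \ftbF \hat{r}_2 \cdot \ftbF r_1$ and bound it by $\ftF \rho_3 \cdot (\ftF \pi_3)^\circ \cdot (\ftF p_2)^\circ \cdot \ftF p_1 \cdot \ftF \rho_1 \cdot (\ftF \pi_1)^\circ$.

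It then remains to recognize this last expression as $\ftbF r_3' \cdot \ftbF r_1'$. Commutativity of the diagram gives $\rho_1' = p_1 \cdot \rho_1$ and $\pi_3' = p_2 \cdot \pi_3$, so by functoriality $\ftF \rho_1' = \ftF p_1 \cdot \ftF \rho_1$ and $\ftF \pi_3' = \ftF p_2 \cdot \ftF \pi_3$. The only point requiring a little care is the converse-of-composition bookkeeping: $(\ftF \pi_3)^\circ \cdot (\ftF p_2)^\circ = (\ftF p_2 \cdot \ftF \pi_3)^\circ = (\ftF \pi_3')^\circ$, while $\ftF p_1 \cdot \ftF \rho_1 = \ftF \rho_1'$. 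Regrouping by associativity of relational composition then yields $\ftF \rho_3 \cdot (\ftF \pi_3')^\circ \cdot \ftF \rho_1' \cdot (\ftF \pi_1)^\circ = \ftbF r_3' \cdot \ftbF r_1'$, since $r_1' = \rho_1' \cdot \pi_1^\circ$ and $r_3' = \rho_3 \cdot (\pi_3')^\circ$. The argument is short overall; the main (and rather mild) obstacle is simply identifying the pushout in the diagram as the one computing $\hat{r}_2$ and then keeping the converses straight while recombining the functor images into the two new Barr-extended spans.
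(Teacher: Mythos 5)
Your proof is correct and takes essentially the same approach as the paper's: both hinge on reading off $\hat{r}_2 = p_2^\circ \cdot p_1$ from the marked pushout and then absorbing $\ftF p_1$ and $(\ftF p_2)^\circ$ into the spans for $r_1'$ and $r_3'$ via the commutativities $\rho_1' = p_1 \cdot \rho_1$ and $\pi_3' = p_2 \cdot \pi_3$. The only cosmetic difference is that you bound $\ftbF \hat{r}_2 \leq (\ftF p_2)^\circ \cdot \ftF p_1$ by citing the oplaxity $\ftbF(s \cdot r) \leq \ftbF s \cdot \ftbF r$, whereas the paper represents $\hat{r}_2$ by the pullback span of the cospan $(p_1,p_2)$ and uses the commuting square directly---the same estimate in substance.
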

Indeed, let $r_1 \colon X \relto X_1$ , $r_2 \colon X_1 \relto X_2$ and $r_3 \colon X_2 \relto X$ be relations such that $r_3 \cdot r_2 \cdot r_1 = 1_X$.
Then, by \autoref{p:300} and \autoref{p:610}, we conclude that $\ftbF r_3 \cdot \ftbF r_2 \cdot \ftbF r_1 \leq 1_{\ftF X}$ once we show that $r_3' \cdot r_1' = 1_X$.
Of course, in general, this does not hold.
Consider the following example where the arrows depict pairs of related elements.

\begin{center}
	\begin{tikzcd}[row sep = small]
		\ar[phantom]{r}{r_{1}}
		& \ \ar[phantom]{r}{r_{2}}
		& \ \ar[phantom]{r}{r_{3}}
		& \ \\
		\
		& \bullet \ar{r}
		& \bullet \ar{dr}
		& \ \\
		x \ar{r}\ar{ur}
		& \bullet\ar{r}
		& \bullet
		& x
		\\
		y \ar{dr}
		& \bullet\ar{r}\ar{ur}
		& \bullet\ar{r}
		& y
		\\
		\
		& \bullet \ar{r}
		& \bullet \ar{ur}
		& \ \\
		X
		& X_1
		& X_2
		& X
	\end{tikzcd}
\end{center}
By taking the difunctional closure $\hat{r}_2$ of $r_2$ we get
\begin{center}
	\begin{tikzcd}[row sep = small]
		\ar[phantom]{r}{r_{1}}
		& \ \ar[phantom]{r}{\hat{r}_{2}}
		& \ \ar[phantom]{r}{r_{3}}
		& \ \\
		\
		& \bullet \ar{r}
		& \bullet \ar{dr}
		& \ \\
		x\ar{r} \ar{ur}
		& \bullet\ar{r}\ar{dr}
		& \bullet
		& x
		\\
		y \ar{dr}
		& \bullet\ar{r}\ar{ur}
		& \bullet\ar{r}
		& y
		\\
		\
		& \bullet \ar{r}
		& \bullet \ar{ur}
		& \ \\
	\end{tikzcd}
\end{center}
So, $r_3 \cdot \hat{r}_2 \cdot r_1 = r_3' \cdot r_1'$ is not a subidentity.
Now the property of preserving 1/4-iso pullbacks is helpful again.
As we will see in \autoref{p:19}, under this condition, the sequence below is a Barr upper bound of the first one and it is obtained from it by ``splitting'' where necessary the elements of \(X_1\) that do not belong to the codomain of \(r_1\) and the elements of \(X_2\) that do not belong to the domain of \(r_3\).

\begin{center}
	\begin{tikzcd}[row sep=small]
		\ar[phantom]{r}{r_{1}}
		& \ \ar[phantom]{r}{r_{2}}
		& \ \ar[phantom]{r}{r_{3}}
		& \ \\
		\
		& \bullet \ar{r}
		& \bullet \ar{dr}
		& \ \\
		x\ar{r} \ar{ur}
		& \bullet\ar{r}
		& \bullet
		& x
		\\
		&
		& \bullet
		&
		\\
		& \bullet\ar{ur}
		&
		&
		\\
		y \ar{dr}
		& \bullet\ar{r}
		& \bullet\ar{r}
		& y
		\\
		\
		& \bullet \ar{r}
		& \bullet \ar{ur}
		& \ \\
	\end{tikzcd}
\end{center}
In this situation we can apply the difunctional closure to $r_2$ (which in this particular example is already difunctional) to reduce the number of relations as discussed in \autoref{p:610}.

\begin{lemma}
	\label{p:19}
	Let $\ftF \colon \SET \to \SET$ be a functor that preserves 1/4-iso pullbacks, and let $r_1 \colon X \relto Y$, $r_2 \colon Y \relto Z$ and $r_3 \colon Z \relto W$ be relations.
   	Then, there are relations $s_1 \colon X \relto Y'$, $s_2 \colon Y' \relto Z'$ and $s_3 \colon Z' \relto W$ such that $s_1, s_2, s_3$ is a Barr upper bound of $r_1, r_2, r_3$ and
	\begin{enumerate}[wide]
		\item for all \(y,y' \in Y'\) and all \(z \in Z'\), if \(y \neq y'\), \(y \mathrel{s_2} z\) and \(y'\mathrel{s_2} z\),
		      then \(z \in \dom(s_3)\);
		\item for all \(y \in Y'\) and \(z,z' \in Z'\), if \(z \neq z'\), \(y \mathrel{s_2} z\) and \(y \mathrel{s_2}
		      z'\), then \(y \in \cod(s_1)\).
	\end{enumerate}
\end{lemma}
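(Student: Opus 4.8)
The plan is to construct the witness sequence explicitly by \emph{fully splitting} the intermediate sets while keeping $\cod(r_1)$ and $\dom(r_3)$ intact, and then to verify the Barr upper bound inequality by an algebraic manipulation of $\ftbF$ that, at one crucial point, invokes preservation of 1/4-iso $2/4$-mono pullbacks. First I would build the data: set $Y'$ by replacing every $y\in Y\setminus\cod(r_1)$ with one copy for each $r_2$-edge leaving $y$, while leaving each $y\in\cod(r_1)$ as a single element. This defines a surjection $q\colon Y'\to Y$ that is bijective over $\cod(r_1)$, so that $\cod(s_1)=q^{-1}[\cod(r_1)]$ and $q$ restricts to a bijection there. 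Dually I build $Z'$ and a surjection $p\colon Z'\to Z$ by splitting every $z\in Z\setminus\dom(r_3)$ along its incoming $r_2$-edges, so that $p$ is bijective over $\dom(r_3)$. I then put $s_1=q^\circ\cdot r_1$, $s_3=r_3\cdot p$, and let $s_2\colon Y'\relto Z'$ relate the source-copy and target-copy of each $r_2$-edge; the edge sets of $r_2$ and $s_2$ are in bijection, and $r_2=p\cdot s_2\cdot q^\circ$. With this description the composition identity $s_3\cdot s_2\cdot s_1=r_3\cdot r_2\cdot r_1$ and the two listed properties are routine: a split copy in $Y'$ (a vertex outside $\cod(s_1)$) carries exactly one outgoing $s_2$-edge, which is precisely property~(2), and dually for property~(1).

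The arithmetic backbone is a set of \emph{exact} decompositions of the Barr extensions. Since $q,p$ are surjective, $q\cdot q^\circ=1_Y$ and $p\cdot p^\circ=1_Z$, whence $r_1=q\cdot s_1$, $r_3=s_3\cdot p^\circ$ and $r_2=p\cdot s_2\cdot q^\circ$. Computing $\ftbF$ directly on spans shows that it commutes with pre- and postcomposition by a function and by the converse of a function, i.e.\ $\ftbF(g\cdot r)=\ftF g\cdot\ftbF r$ and $\ftbF(r\cdot g^\circ)=\ftbF r\cdot(\ftF g)^\circ$ for any function $g$. Applying this yields $\ftbF r_1=\ftF q\cdot\ftbF s_1$, $\ftbF r_2=\ftF p\cdot\ftbF s_2\cdot(\ftF q)^\circ$ and $\ftbF r_3=\ftbF s_3\cdot(\ftF p)^\circ$, so that $\ftbF r_3\cdot\ftbF r_2\cdot\ftbF r_1=\ftbF s_3\cdot\big((\ftF p)^\circ\cdot\ftF p\big)\cdot\ftbF s_2\cdot\big((\ftF q)^\circ\cdot\ftF q\big)\cdot\ftbF s_1$. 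The Barr upper bound thus reduces to \emph{absorbing} the two kernel factors, i.e.\ to the inequalities $(\ftF q)^\circ\cdot\ftF q\cdot\ftbF s_1\le\ftbF s_1$ and $\ftbF s_3\cdot(\ftF p)^\circ\cdot\ftF p\le\ftbF s_3$, the second being the converse-dual of the first.

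This kernel absorption is the main obstacle, and it is where the hypothesis enters. For the inclusion $j\colon\cod(s_1)\ito Y'$, every element in the codomain of $\ftbF s_1$ lies in the image of $\ftF j$, because $s_1$ factors through $\cod(s_1)$. Now the inverse image of $\cod(r_1)\ito Y$ along $q$ is exactly $\cod(s_1)$, and since $q$ has singleton fibres over $\cod(r_1)$ this square is a 1/4-iso $2/4$-mono pullback; hence by \autoref{p:64} the functor $\ftF$ sends it to a pullback. Consequently any $\eta\in\ftF Y'$ with $\ftF q(\eta)$ in the image of $\ftF(\cod(r_1)\ito Y)$ must itself lie in the image of $\ftF j$, on which $\ftF q$ is injective (as $\ftF$ preserves the monomorphism $q\cdot j$). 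So given $\zeta\mathrel{\ftbF s_1}\xi$ with $\ftF q(\xi)=\ftF q(\eta)$, the element $\xi$ lies in the image of $\ftF j$, forcing $\ftF q(\eta)$ into the image of $\ftF(\cod(r_1)\ito Y)$; the preserved pullback then places $\eta$ in the image of $\ftF j$ and injectivity gives $\eta=\xi$, which proves $(\ftF q)^\circ\cdot\ftF q\cdot\ftbF s_1\le\ftbF s_1$.

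The dual argument, using that $p$ has singleton fibres over $\dom(r_3)$ and that the domain of $\ftbF s_3$ sits inside the image of $\ftF(\dom(s_3)\ito Z')$, gives $\ftbF s_3\cdot(\ftF p)^\circ\cdot\ftF p\le\ftbF s_3$. Chaining the two absorptions through the displayed factorization then delivers $\ftbF r_3\cdot\ftbF r_2\cdot\ftbF r_1\le\ftbF s_3\cdot\ftbF s_2\cdot\ftbF s_1$, completing the Barr upper bound. I expect the only genuinely delicate point to be recognising these inverse-image squares as 1/4-iso pullbacks and keeping track of the fact that the codomain of $\ftbF s_1$ (resp.\ the domain of $\ftbF s_3$) stays inside the image of $\ftF j$; once that is in place, both absorptions are immediate consequences of \autoref{p:64}.
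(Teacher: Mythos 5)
Your construction is exactly the paper's: it too splits every element of $Y\setminus\cod(r_1)$ along its outgoing $r_2$-edges and every element of $Z\setminus\dom(r_3)$ along its incoming ones (phrased there via the canonical span $R_2$, with $Y'=Y\cup(R_2\setminus\pi_2^{-1}[\cod(r_1)])$, and collapse maps $f,g$ playing the role of your $q,p$), and the verification of properties (1) and (2) is the same. Where you genuinely diverge is the functor-level step. The paper notes that the squares comparing the span legs $\rho_1'\colon R_1\to Y'$ and $\rho_1\colon R_1\to Y$ (dually $\pi_3',\pi_3$) along the collapse map are 1/4-iso pullbacks with an \emph{identity} projection, applies $\ftF$, and reads off the relational identities $\ftF\rho_1'=(\ftF f)^\circ\cdot\ftF\rho_1$ and $(\ftF\pi_3')^\circ=(\ftF\pi_3)^\circ\cdot\ftF g$, which yield the \emph{equality} $\ftbF r_3\cdot\ftbF r_2\cdot\ftbF r_1=\ftbF s_3\cdot\ftbF s_2\cdot\ftbF s_1$. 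You instead factor $r_1=q\cdot s_1$, $r_2=p\cdot s_2\cdot q^\circ$, $r_3=s_3\cdot p^\circ$ --- and your compatibilities $\ftbF(g\cdot r)=\ftF g\cdot\ftbF r$ and $\ftbF(r\cdot g^\circ)=\ftbF r\cdot(\ftF g)^\circ$ are indeed valid for \emph{every} set functor, since in both cases the function can be absorbed into the appropriate leg of a span factorization --- and then absorb the kernel factors using the preserved inclusion pullbacks over $\cod(s_1)=q^{-1}[\cod(r_1)]$ and $\dom(s_3)=p^{-1}[\dom(r_3)]$. This gives only an inequality rather than the paper's equality, but an inequality is all a Barr upper bound requires, and your element-wise argument through the preserved pullback is sound; it is an equivalent use of the same hypothesis, applied to different (inclusion-legged rather than identity-legged) 1/4-iso 2/4-mono squares. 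Two small repairs are needed. First, as literally written $q$ and $p$ need not be surjective: an element of $Y\setminus\cod(r_1)$ with no outgoing $r_2$-edge leaves no copy, so $q\cdot q^\circ=1_Y$ can fail; however your argument only uses $q\cdot q^\circ\geq 1_{\cod(r_1)}$ and $p\cdot p^\circ\geq 1_{\dom(r_3)}$, which do hold (alternatively, keep such edgeless elements as singletons, as the paper's $Y'\supseteq Y$ does). Second, injectivity of $\ftF q$ on the image of $\ftF j$ silently uses that $\ftF$ preserves monomorphisms even with possibly empty domain; this is covered by the paper's observation that each of the preservation properties of \autoref{p:18} implies preservation of monomorphisms, but it deserves a citation at that point.
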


The  previous lemma essentially closes the argument that we have been crafting so far.

\begin{theorem}
	\label{p:301}
	Let $\ftF \colon \SET \to \SET$ be a functor.
	The following clauses are equivalent:
	\begin{enumerate}[(i)]
		\item \label{p:701} The functor $\ftF \colon \SET \to \SET$ preserves 1/4-iso pullbacks.
		\item \label{p:702} For all relations $r_1 \colon X \relto Y$, $r_2 \colon Y \relto Z$ and $r_3 \colon Z \relto X$ such that $r_3 \cdot r_2 \cdot r_1 \leq  1_X$, $\ftbF r_3 \cdot \ftbF r_2 \cdot \ftbF r_1 \leq 1_{\ftF X}$.
		\item \label{p:703} For all relations $r_1 \colon X \relto Y$, $r_2 \colon Y \relto Z$ and $r_3 \colon Z \relto X$ such that $r_3 \cdot r_2 \cdot r_1 =  1_X$, $\ftbF r_3 \cdot \ftbF r_2 \cdot \ftbF r_1 \leq 1_{\ftF X}$.
	\end{enumerate}
\end{theorem}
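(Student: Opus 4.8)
The plan is to prove the cycle $\ref{p:701}\Rightarrow\ref{p:702}\Rightarrow\ref{p:703}\Rightarrow\ref{p:701}$, of which two implications come for free. The implication $\ref{p:702}\Rightarrow\ref{p:703}$ is immediate, as the hypothesis $r_3\cdot r_2\cdot r_1=1_X$ is a special case of $r_3\cdot r_2\cdot r_1\le 1_X$. For $\ref{p:703}\Rightarrow\ref{p:701}$ I would reduce to the two-relation statement of \autoref{p:300}: given $r_1\colon X\relto Y$ and $r_2\colon Y\relto X$ with $r_2\cdot r_1=1_X$, pad the pair into a triple by taking $r_3=1_X\colon X\relto X$. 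Then $r_3\cdot r_2\cdot r_1=1_X$, and since the Barr relax extension is normal we have $\ftbF 1_X=1_{\ftF X}$, so \ref{p:703} gives $\ftbF r_2\cdot\ftbF r_1=\ftbF r_3\cdot\ftbF r_2\cdot\ftbF r_1\le 1_{\ftF X}$. This is exactly clause \ref{p:603} of \autoref{p:300}, whence $\ftF$ preserves 1/4-iso pullbacks.

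The substance is $\ref{p:701}\Rightarrow\ref{p:702}$, where I would assemble the ingredients prepared above. Fix $r_1,r_2,r_3$ with $r_3\cdot r_2\cdot r_1\le 1_X$. First apply \autoref{p:19} to replace the triple with a Barr upper bound $s_1\colon X\relto Y'$, $s_2\colon Y'\relto Z'$, $s_3\colon Z'\relto X$ satisfying conditions (1) and (2) of that lemma; being a Barr upper bound, it satisfies $s_3\cdot s_2\cdot s_1=r_3\cdot r_2\cdot r_1\le 1_X$ and $\ftbF r_3\cdot\ftbF r_2\cdot\ftbF r_1\le\ftbF s_3\cdot\ftbF s_2\cdot\ftbF s_1$, so it suffices to bound the right-hand side. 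Next apply \autoref{p:610} to $s_1,s_2,s_3$: contracting the middle relation through the pushout of its span produces relations $s_1',s_3'$ with $\ftbF s_3\cdot\ftbF s_2\cdot\ftbF s_1\le\ftbF s_3'\cdot\ftbF s_1'$ and, by the construction, $s_3'\cdot s_1'=s_3\cdot\hat s_2\cdot s_1$. Finally, once we know $s_3'\cdot s_1'\le 1_X$, clause \ref{p:602} of \autoref{p:300} (available because $\ftF$ preserves 1/4-iso pullbacks) gives $\ftbF s_3'\cdot\ftbF s_1'\le 1_{\ftF X}$, and chaining the three inequalities closes the argument.

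The one genuine step, and the place where conditions (1) and (2) of \autoref{p:19} are spent, is the claim $s_3\cdot\hat s_2\cdot s_1\le 1_X$, which is not automatic since passing from $s_2$ to $\hat s_2$ may enlarge the middle composite (as the running example shows). I would prove it by induction along zigzags. A pair $(x,x')\in s_3\cdot\hat s_2\cdot s_1$ comes with $x\mathrel{s_1}y$, $z\mathrel{s_3}x'$ and, using $\hat s_2=\bigvee_n s_2\cdot(s_2^\circ\cdot s_2)^n$, a path $y=p_0\mathrel{s_2}p_1\mathrel{s_2^\circ}p_2\cdots p_{2m+1}=z$; choosing the path of minimal length forces $p_{2i}\ne p_{2i+2}$ and $p_{2i+1}\ne p_{2i+3}$. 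I would then show by induction on $i$ that $x\mathrel{s_1}p_{2i}$: the distinctness $p_{2i}\ne p_{2i+2}$ triggers condition~(1), placing $p_{2i+1}\in\dom(s_3)$, and collapsing the resulting length-three composite through $s_3\cdot s_2\cdot s_1\le 1_X$ yields $p_{2i+1}\mathrel{s_3}x$; the distinctness $p_{2i+1}\ne p_{2i+3}$ then triggers condition~(2), placing $p_{2i+2}\in\cod(s_1)$, and a second collapse gives $x\mathrel{s_1}p_{2i+2}$. At $i=m$ we obtain $x\mathrel{s_1}p_{2m}\mathrel{s_2}z\mathrel{s_3}x'$, so $(x,x')\in s_3\cdot s_2\cdot s_1\le 1_X$ forces $x=x'$. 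I expect this bookkeeping — in particular the minimal-path reduction that legitimizes the two distinctness hypotheses — to be the only delicate point; everything else is assembling the cited lemmas.
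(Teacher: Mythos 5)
Your proposal is correct and takes essentially the same route as the paper's proof: the trivial specialization for \ref{p:702}$\Rightarrow$\ref{p:703}, the reduction of \ref{p:703}$\Rightarrow$\ref{p:701} to \autoref{p:300} by padding the pair with an identity relation and using normality of the Barr relax extension, and for \ref{p:701}$\Rightarrow$\ref{p:702} the same assembly of \autoref{p:19}, the zigzag argument establishing $s_3\cdot\hat{s}_2\cdot s_1\le 1_X$ from conditions (1) and (2), and \autoref{p:610} combined with \autoref{p:300}. Your forward induction along a minimal-length zigzag (showing $x\mathrel{s_1}p_{2i}$ for increasing $i$) is merely a rephrasing of the paper's induction on chain length with the same w.l.o.g.\ distinctness assumptions, so the two arguments coincide in substance.
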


However, as we see next, \autoref{p:301} is as far as we can go under the assumption of 1/4-iso pullbacks preservation.
In other words, the fact that a $\SET$-functor preserves 1/4-iso pullbacks is \emph{not} sufficient to conclude that it admits a normal lax extension.

\begin{example}
	\label{p:920}
	Let us define a functor $\ftF \colon \SET \to \SET$ as a quotient of\/ $\coprod_{n\in \{f,g\}} \{n\}\times X^5\cong X^5 + X^5$ under the equivalence defined by the clauses:
	\begin{flalign*}
		&&f(y,x,z,x,t)\sim\; & f(y',x,z',x,t') & f(t,x,x,y,y)\sim\; & f(t',x,x,y,y)&\\
		&&g(y,x,z,x,t)\sim\; & g(y',x,z',x,t') & g(x,x,y,y,t)\sim\; & g(x,x,y,y,t')&\\
		&&f(y,x,z,x,t)\sim\; & g(y',x,z',x,t') & f(t,x,z,y,z)\sim\; & g(t,x,t,y,z)&
	\end{flalign*}
where $f(x_1,\ldots,x_5)$ and $g(x_1,\ldots,x_5)$ denote the corresponding elements
$(f,x_1,\ldots,x_5)\comma (g,x_1,\ldots,x_5)\in\coprod_{n\in \{f,g\}} \{n\}\times X^5$.
Let $2 = \{x,y\}$ and consider the composable sequence of relations depicted below.
\begin{center}
		\begin{tikzcd}[column sep=4em, row sep=2ex]
			\	& \				&	\				&	\			  & \ \\
			  & \bullet & \bullet & \bullet & x \\
			  &         & \bullet & \bullet & y \\
			x & \bullet & \bullet &             \\
			y & \bullet & \bullet & \bullet
		  \ar[from=1-1, to=1-2, phantom, "r_1"]
  		\ar[from=1-2, to=1-3, phantom, "r_2"]
  		\ar[from=1-3, to=1-4, phantom, "r_3"]
  		\ar[from=1-4, to=1-5, phantom, "r_4"]
			\ar[from=4-1, to=4-2]
			\ar[from=5-1, to=5-2]
			\ar[from=2-2, to=2-3]
			\ar[from=4-2, to=3-3]
			\ar[from=4-2, to=5-3]
			\ar[from=5-2, to=5-3]
			\ar[from=5-2, to=4-3]
			\ar[from=2-3, to=2-4]
			\ar[from=2-3, to=3-4]
			\ar[from=3-3, to=2-4]
			\ar[from=4-3, to=3-4]
			\ar[from=5-3, to=5-4]
			\ar[from=2-4, to=2-5]
			\ar[from=3-4, to=3-5]
		\end{tikzcd}
\end{center}
Then, $\ftF$ preserves 1/4-iso pullbacks and $r_4 \cdot r_3 \cdot r_2 \cdot r_1 = 1_2$, however, $\ftbF r_4 \cdot \ftbF r_3 \cdot  \ftbF r_2 \cdot \ftbF r_1 \not\leq 1_{\ftF 2}$.
\end{example}

\subsection{The case of functors that weakly preserve 4/4-epi pullbacks}
\label{ssec:4/4-epi}

From \autoref{p:301} it basically follows that a functor that weakly preserves 1/4-iso pullbacks and 4/4-epi  pullbacks admits a normal lax extension.
But to see this, first we need to sharpen \autoref{cor:least_flat}.
The goal is to show that it suffices to consider composable sequences of relations where  all relations other than the first and the last are total and surjective.
To illustrate how we achieve this, let us consider the sequence of relations depicted below.

\begin{center}
	\begin{tikzcd}[row sep=small, column sep=small]
		\ar[phantom]{r}{r_{1}}
		& \ \ar[phantom]{r}{r_{2}}
		& \ \ar[phantom]{r}{r_{3}}
		& \ \ar[phantom]{r}{r_{4}}
		& \
		\\
		& \bullet \ar{r}
		& \bullet \ar{r}
		& \bullet \ar{dr}
		&
		\\
		x\ar{r} \ar{ur}
		&\bullet \ar{dr}
		& \bullet\ar{dr}
		& \bullet\ar{r}
		& x
		\\
		y \ar{r} \ar{dr}
		& \bullet
		& \bullet
		& \bullet\ar{r}
		& y
		\\
		& \bullet \ar{r}
		& \bullet \ar{r}
		& \bullet \ar{ur}
		&
		\\
		X
		& X_1
		& X_2
		& X_3
		& X
	\end{tikzcd}
\end{center}
Then, by adding new elements $0$ and $1$ to $X_1,X_2$ and $X_3$ we can extend this sequence to the sequence

\begin{center}
	\begin{tikzcd}[row sep=small, column sep=small]
		\ar[phantom]{r}{r'_{1}}
		& \ \ar[phantom]{r}{r'_{2}}
		& \ \ar[phantom]{r}{r'_{3}}
		& \ \ar[phantom]{r}{r'_{4}}
		& \
		\\
		& \bullet \ar{r}
		& \bullet \ar{r}
		& \bullet \ar[ddr, bend left]
		&
		\\
		& 0 \ar[dotted]{r} \ar[dotted]{dr}
		& 0 \ar[dotted]{r} \ar[dotted]{dr}
		& 0
		&
		\\
		x\ar{r} \ar[uur, bend left]
		&\bullet \ar{dr}
		& \bullet\ar{dr}
		& \bullet\ar{r}
		& x
		\\
		y \ar{r} \ar[ddr, bend right]
		& \bullet\ar[dotted]{dr}
		& \bullet\ar[dotted]{dr}
		& \bullet\ar{r}
		& y
		\\
		& 1 \ar[dotted]{r}
		& 1\ar[dotted]{r}
		& 1
		&
		\\
		& \bullet \ar{r}
		& \bullet \ar{r}
		& \bullet \ar[uur, bend right]
		&
		\\
		X
		& X_1'
		& X_2'
		& X_3'
		& X,
	\end{tikzcd}
\end{center}
where the dotted arrows indicate pairs of elements that were added to the corresponding relation as follows:
for $i=2,3$, $r'_i$ relates $0 \in X'_{i-1}$ to every element of $X_i \cup \{0\}$ that does not belong to the codomain of $r_i$  and relates every element of $X_{i-1} \cup \{1\}$ that does not belong to the domain of $r_i$ to $1 \in X'_i$.
In this way, we guarantee that $r'_2$ and $r_3'$ are total and surjective and that $r_4' \cdot r_3' \cdot r_2' \cdot r_1' = r_4 \cdot r_3 \cdot r_2 \cdot r_1 = 1_X$.
We could have extended $r_2$ and $r_3$ to total and surjective relations by adding just a single element $\ast$ to $X_1$, $X_2$ and $X_3$ that would simultaneously take the role of $0$ and $1$.
However, composing the resulting sequence of relations would not yield the identity relation:
\begin{center}
	\begin{tikzcd}[row sep=small, column sep=small]
		\ar[phantom]{r}{r'_{1}}
		& \ \ar[phantom]{r}{r'_{2}}
		& \ \ar[phantom]{r}{r'_{3}}
		& \ \ar[phantom]{r}{r'_{4}}
		& \
		\\
		& \bullet \ar{r}
		& \bullet \ar{r}
		& \bullet \ar[ddr, bend left]
		&
		\\
		&  \ast \ar[dotted]{r} \ar[dotted]{dr}
		&  \ast \ar[dotted]{r} \ar[dotted]{dr}
		&  \ast
		&
		\\
		x \ar{r} \ar[uur, bend left]
		&\bullet \ar{dr}
		& \bullet\ar{dr}
		& \bullet\ar{r}
		& x
		\\
		y \ar{r} \ar[dr]
		& \bullet\ar[dotted]{uur}
		& \bullet\ar[dotted]{uur}
		& \bullet\ar{r}
		& y
		\\
		& \bullet \ar{r}
		& \bullet \ar{r}
		& \bullet \ar[ur]
		&
		\\
		X
		& X_1'
		& X_2'
		& X_3'
		& X.
	\end{tikzcd}
\end{center}
In other words, by splitting $\ast$ in two elements $0$ and $1$, the former to make the relations $r_2$ and $r_3$ surjective and the latter to make them total, we obtain a subidentity because we never create paths between elements of $X_1$ that are not part of the domain of $r_2$ and elements of $X_3$ that are not part of the codomain of $r_3$.
In the next lemma we formalize this procedure for arbitrary composable sequences of relations and show that it yields Barr upper bounds.

\begin{lemma}
	\label{p:600}
	A functor \(\ftF \colon \SET \to \SET\) that preserves 1/4-iso pullbacks admits a normal lax extension iff for every composable sequence of relations \(r_1,\ldots,r_n\) such that $n\geq 4$ and $r_2, \ldots, r_{n-1}$ are total and surjective, whenever \(r_n \cdot \ldots \cdot r_1 = 1_X\), for some set $X$, then $\ftbF r_n \cdot \ldots \cdot \ftbF r_1 \leq 1_{\ftF X}$.
\end{lemma}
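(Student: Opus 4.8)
The plan is to prove both implications through the criterion of \autoref{cor:least_flat}, which says that $\ftF$ admits a normal lax extension precisely when $\ftbF r_n \cdot \ldots \cdot \ftbF r_1 \leq 1_{\ftF X}$ holds whenever $r_n \cdot \ldots \cdot r_1 = 1_X$, \emph{for all} composable sequences $r_1, \ldots, r_n$. The ``only if'' direction is then immediate: if $\ftF$ admits a normal lax extension, then by \autoref{cor:least_flat} this inequality holds for every such sequence, in particular for those with $n \geq 4$ whose intermediate relations $r_2, \ldots, r_{n-1}$ are total and surjective.

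For the ``if'' direction I would establish the full criterion of \autoref{cor:least_flat} by a case split on the length $n$. The cases $n \leq 3$ rely only on the standing assumption that $\ftF$ preserves 1/4-iso pullbacks: for $n = 1$ the claim is just normality of the Barr relax extension, $\ftbF 1_X = 1_{\ftF X}$; for $n = 2$ it is \autoref{p:300}; and for $n = 3$ it is \autoref{p:301}. It remains to treat $n \geq 4$, where the idea is to pass from $r_1, \ldots, r_n$ to a Barr upper bound $r_1', \ldots, r_n'$ \emph{of the same length} whose intermediate relations are total and surjective, so that the hypothesis applies verbatim.

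The construction of this Barr upper bound formalizes the ``$0/1$-splitting'' illustrated before the lemma. Concretely, I adjoin two fresh elements $0$ and $1$ to each intermediate carrier $X_1, \ldots, X_{n-1}$, leave $r_1' = r_1$ and $r_n' = r_n$ untouched, and for $2 \leq i \leq n-1$ let $r_i'$ extend $r_i$ by relating $0$ on the left to every element of $X_i \cup \{0\}$ outside $\cod r_i$, and by relating every element of $X_{i-1} \cup \{1\}$ outside $\dom r_i$ to $1$ on the right. A routine check then shows that each of $r_2', \ldots, r_{n-1}'$ is total and surjective. Since $r_i \leq r_i'$ and the Barr extension is monotone, $\ftbF r_n \cdot \ldots \cdot \ftbF r_1 \leq \ftbF r_n' \cdot \ldots \cdot \ftbF r_1'$; so once we know $r_n' \cdot \ldots \cdot r_1' = r_n \cdot \ldots \cdot r_1 = 1_X$ the sequence $r_1', \ldots, r_n'$ is a Barr upper bound of $r_1, \ldots, r_n$, the hypothesis yields $\ftbF r_n' \cdot \ldots \cdot \ftbF r_1' \leq 1_{\ftF X}$, and we conclude $\ftbF r_n \cdot \ldots \cdot \ftbF r_1 \leq 1_{\ftF X}$, as required.

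The main obstacle is exactly the identity $r_n' \cdot \ldots \cdot r_1' = 1_X$: one must argue that the freshly added elements never create new paths between states of $X$. The key point is that $0$ and $1$ play strictly complementary, unidirectional roles. A fresh $0 \in X_i$ is reached under $r_i'$ only from $0 \in X_{i-1}$, because the surjectivity clause is the sole source of arrows into a fresh $0$ and it targets only $X_i \cup \{0\}$, never $1$; hence any path meeting some $0 \in X_i$ must already have met $0 \in X_1$, which is impossible since $r_1' = r_1$ lands in $X_1$. Dually, a fresh $1 \in X_i$ leads under $r_{i+1}'$ only to $1 \in X_{i+1}$, since the totality clause is the sole source of arrows out of a fresh $1$; so a path meeting some $1 \in X_i$ would be forced up to $1 \in X_{n-1}$, which has no $r_n' = r_n$-successor. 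Consequently no path from $X$ to $X$ meets a fresh element, so $r_n' \cdot \ldots \cdot r_1'$ restricts on $X \times X$ to $r_n \cdot \ldots \cdot r_1 = 1_X$, while the reverse inclusion is clear since each $r_i'$ extends $r_i$. This asymmetry---using $0$ only for surjectivity and $1$ only for totality, rather than a single element $\ast$ for both---is precisely what avoids the spurious paths of the $\ast$-collapsed diagram shown above.
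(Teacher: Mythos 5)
Your proposal is correct and takes essentially the same route as the paper: necessity via \autoref{cor:least_flat}; the cases $n\leq 3$ via normality of $\ftbF$, \autoref{p:300} and \autoref{p:301}; and for $n\geq 4$ exactly the paper's $0/1$-splitting construction, with the same path analysis showing that the fresh elements $0$ and $1$ create no new paths from $X$ to $X$, so that $r_n'\cdot\ldots\cdot r_1'=1_X$. The only cosmetic imprecision is your appeal to ``$r_i\leq r_i'$ and monotonicity of the Barr extension,'' which is ill-typed since $r_i$ and $r_i'$ live on different carriers; the paper instead derives $\ftbF r_n\cdot\ldots\cdot\ftbF r_1\leq \ftbF r_n'\cdot\ldots\cdot\ftbF r_1'$ by applying $\ftF$ to the commuting diagram of canonical span factorizations along the inclusions $X_i\rightarrowtail X_i'$ (using $\ftF k\cdot(\ftF k)^\circ\leq 1$ and $(\ftF \iota)^\circ\cdot\ftF \iota\geq 1$), a routine repair of your step.
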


\begin{remark}
	In a composable sequence of relations that satisfies the conditions of \autoref{p:600} the first relation is necessarily total while the last one is necessarily surjective.
\end{remark}

Now, our first main result follows straightforwardly.
Since the composite of total and surjective relations is total and surjective, due to the following fact, every composable sequence of relations where all relations other than the first and the last are total and surjective admits a Barr upper bound consisting of three relations.

\begin{proposition}
	\label{p:960}
	A functor $\ftF \colon \SET \to \SET$ weakly preserves 4/4-epi pullbacks iff for all relations $r \colon X \relto Y$ and $s \colon Y \relto Z$, whenever $r$ is surjective and $s$ is total, $\ftbF s \cdot \ftbF r = \ftbF(s \cdot r)$.
\end{proposition}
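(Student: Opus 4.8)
The plan is to reduce both sides of the equivalence to a single equation of the form $\ftF p_2\cdot(\ftF p_1)^\circ=(\ftF g)^\circ\cdot\ftF f$ for the pullback square at issue, relying on the characterization recalled just after \autoref{p:48}: a commutative square $f\cdot p=g\cdot q$ is a weak pullback iff $q\cdot p^\circ=g^\circ\cdot f$, so that $\ftF$ weakly preserves such a square exactly when $\ftF q\cdot(\ftF p)^\circ=(\ftF g)^\circ\cdot\ftF f$. I would also use the standard facts from \autoref{sec:prelims} that the Barr relax extension is independent of the chosen span, that $\ftbF h=\ftF h$ and $\ftbF(h^\circ)=(\ftF h)^\circ$ for functions, and that a composite $s\cdot r$ is obtained by pulling back the inner legs of chosen factorizations of $r$ and $s$.

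The crucial step, and the only one carrying real content, is the observation linking the surjective/total hypothesis to the $4/4$-epi condition. Writing canonical factorizations $r=\pi_2\cdot\pi_1^\circ$ and $s=\sigma_2\cdot\sigma_1^\circ$ through spans $X\xfrom{\pi_1}R\xto{\pi_2}Y$ and $Y\xfrom{\sigma_1}S\xto{\sigma_2}Z$, surjectivity of $r$ makes $\pi_2$ surjective and totality of $s$ makes $\sigma_1$ surjective. Since in $\SET$ the pullback of two surjections has surjective projections, the pullback $R\xfrom{p_1}P\xto{p_2}S$ of $\pi_2$ and $\sigma_1$ is then a $4/4$-epi pullback; conversely, every $4/4$-epi pullback is, by definition, the pullback of a pair of surjections. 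I expect this correspondence to be the heart of the argument, with everything else reducing to bookkeeping with the Barr extension.

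For the forward direction I would assume weak preservation of $4/4$-epi pullbacks, so that the above pullback of $\pi_2,\sigma_1$ is weakly preserved, giving $\ftF p_2\cdot(\ftF p_1)^\circ=(\ftF\sigma_1)^\circ\cdot\ftF\pi_2$; substituting this into $\ftbF s\cdot\ftbF r=\ftF\sigma_2\cdot(\ftF\sigma_1)^\circ\cdot\ftF\pi_2\cdot(\ftF\pi_1)^\circ$ and collapsing by functoriality yields $\ftF(\sigma_2\cdot p_2)\cdot(\ftF(\pi_1\cdot p_1))^\circ=\ftbF(s\cdot r)$, since $(\pi_1\cdot p_1,\sigma_2\cdot p_2)$ is a span for $s\cdot r$. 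For the converse I would start from an arbitrary $4/4$-epi pullback of surjections $f\colon A\to Y$, $g\colon B\to Y$ with projections $p_1,p_2$, and set $r=f\colon A\relto Y$ (surjective) and $s=g^\circ\colon Y\relto B$ (total); the hypothesis then gives $(\ftF g)^\circ\cdot\ftF f=\ftbF s\cdot\ftbF r=\ftbF(s\cdot r)=\ftF p_2\cdot(\ftF p_1)^\circ$, because $s\cdot r=g^\circ\cdot f=p_2\cdot p_1^\circ$ is precisely the pullback relation, and by the characterization above this says exactly that $\ftF$ weakly preserves the given pullback. The points needing care are that the canonical factorizations really do make the inner legs surjective and that the chosen span genuinely computes $s\cdot r$, both of which follow from the relevant square being a (weak) pullback in $\SET$.
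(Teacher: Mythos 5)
Your proposal is correct and takes essentially the same route as the paper: the paper's own (very terse) proof consists precisely of your key observation that a relation is surjective, respectively total, iff it factors through a span whose inner leg is surjective, so that composing via the pullback of these two surjections yields a 4/4-epi pullback. The remaining bookkeeping with the Barr extension (span-independence, $\ftbF h=\ftF h$, $\ftbF(h^\circ)=(\ftF h)^\circ$, and the weak-pullback characterization $q\cdot p^\circ=g^\circ\cdot f$) is left implicit in the paper and is carried out correctly in your write-up, including the converse direction via $r=f$ and $s=g^\circ$.
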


\begin{theorem}
	\label{p:3}
	A \(\SET\)-functor that weakly preserves 1/4-iso pullbacks and 4/4-epi weak pullbacks admits a normal lax extension.
\end{theorem}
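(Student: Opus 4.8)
The plan is to assemble the three preparatory results just established—\autoref{p:600}, \autoref{p:960} and \autoref{p:301}—so that the theorem reduces to routine bookkeeping. First I would note that since at least one projection of a 1/4-iso pullback is monic, weak preservation of 1/4-iso pullbacks coincides with genuine preservation (as recorded before the decomposition theorems); hence the hypotheses give that \(\ftF\) \emph{preserves} 1/4-iso pullbacks and weakly preserves 4/4-epi pullbacks. Preservation of 1/4-iso pullbacks is exactly what \autoref{p:600} needs, so it suffices to verify the criterion of \autoref{cor:least_flat} only for composable sequences \(r_1,\ldots,r_n\) with \(n\geq 4\) whose \emph{inner} relations \(r_2,\ldots,r_{n-1}\) are total and surjective. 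I therefore fix such a sequence with \(r_n\cdot\ldots\cdot r_1 = 1_X\) and aim to show \(\ftbF r_n\cdot\ldots\cdot\ftbF r_1\leq 1_{\ftF X}\).

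The key step is to collapse the inner block into a single relation without losing anything at the level of \(\ftbF\). Writing \(s_2 := r_{n-1}\cdot\ldots\cdot r_2\), this composite of total and surjective relations is again total and surjective. I would argue by induction on the partial composites \(t_i := r_i\cdot\ldots\cdot r_2\) (each total and surjective) that \(\ftbF t_i = \ftbF r_i\cdot\ldots\cdot\ftbF r_2\): the inductive step rewrites \(\ftbF(r_{i+1}\cdot t_i)\) via \autoref{p:960}, which applies precisely because \(t_i\) is surjective and \(r_{i+1}\) is total, yielding \(\ftbF r_{i+1}\cdot\ftbF t_i\). At \(i=n-1\) this gives \(\ftbF r_{n-1}\cdot\ldots\cdot\ftbF r_2 = \ftbF s_2\).

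Finally I would set \(s_1 := r_1\colon X\relto Y\) and \(s_3 := r_n\colon Z\relto X\), so that \(s_1,s_2,s_3\) is a composable triple with \(s_3\cdot s_2\cdot s_1 = r_n\cdot\ldots\cdot r_1 = 1_X\) and, by the previous paragraph, \(\ftbF r_n\cdot\ldots\cdot\ftbF r_1 = \ftbF s_3\cdot\ftbF s_2\cdot\ftbF s_1\). Since \(\ftF\) preserves 1/4-iso pullbacks, \autoref{p:301} applies to this triple and gives \(\ftbF s_3\cdot\ftbF s_2\cdot\ftbF s_1 \leq 1_{\ftF X}\); hence \(\ftbF r_n\cdot\ldots\cdot\ftbF r_1\leq 1_{\ftF X}\), which by \autoref{cor:least_flat} (through \autoref{p:600}) establishes that \(\ftF\) admits a normal lax extension.

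Because the substantive work is already carried out in \autoref{p:600}, \autoref{p:960} and \autoref{p:301}, I expect no genuine obstacle in this final argument; the only points requiring care are the total/surjective bookkeeping along the induction and verifying at each step that \autoref{p:960} is applicable, namely that the surjective partial composite \(t_i\) meets a total next relation \(r_{i+1}\).
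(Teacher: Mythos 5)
Your proposal is correct and takes essentially the same route as the paper: the paper likewise reduces via \autoref{p:600} to sequences whose inner relations are total and surjective, collapses the inner block into a single total and surjective relation using \autoref{p:960} (noting that composites of total and surjective relations are total and surjective, so each application is licensed exactly as in your induction), and concludes by applying \autoref{p:301} to the resulting triple. Your preliminary remark that weak preservation of 1/4-iso pullbacks coincides with preservation is also the same observation the paper records before \autoref{p:18}'s decomposition results.
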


\begin{remark}
  Preservation of 4/4-epi pullbacks plays a role in the analysis of
  interpolation in coalgebraic logic~\cite{SeifanEA17}. In particular,
  this analysis implies that given a separating set~$\Lambda$ of
  monotone predicate liftings for a finite-set-preserving functor~$~\ftF$,
  which induces an expressive modal logic~$\mathcal{L}(\Lambda)$ for
  $~\ftF$-coalgebras, the logic $\mathcal{L}(\Lambda)$ has interpolation
  iff~$\ftF$ weakly preserves 4/4-epi
  pullbacks~\cite[Theorem~37]{SeifanEA17}. In connection with the fact
  that a functor has a normal lax extension iff it has a separating
  set of monotone predicate liftings~\cite{MV15}, we obtain the
  following application of \autoref{p:3} and \autoref{p:450}: A
  finite-set preserving functor~$\ftF$ has a separating set of monotone
  predicate liftings such that the associated modal logic has uniform
  interpolation iff~$\ftF$ weakly preserves 1/4-iso pullbacks and 4/4-epi
  pullbacks.
\end{remark}

\subsection{The case of functors that preserve 1/4-mono pullbacks}
\label{ssec:1/4-mono}

To obtain \autoref{p:3}, we refined \autoref{cor:least_flat} to composable sequences of relations where all relations \emph{other than} the first and the last are total and surjective.
And to achieve this in \autoref{p:600},  given a composable sequence of relations, we \emph{added} pairs of related elements to the relations in the sequence.
In the sequel, we will show that every functor that preserves 1/4-mono pullbacks admits a normal lax extension.
We will see that for these functors it is even possible to refine \autoref{cor:least_flat} to composable sequences of relations where \emph{all} relations are total and surjective. However, we will achieve this in \autoref{p:401} below by, given a composable sequence of relations, \emph{removing} pairs of related elements from the relations in the sequence.
Our proof strategy is justified by the next fact.

\begin{proposition}
	\label{p:100}
	A functor $\ftF \colon \SET \to \SET$ preserves 1/4-mono pullbacks iff for all relations $r \colon X \relto Y$ and $s \colon Y \relto Z$, whenever $r$ is the converse of a partial function or $s$ is a partial function, $\ftbF s \cdot \ftbF r = \ftbF(s \cdot r)$.
\end{proposition}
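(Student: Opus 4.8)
The plan is to leverage the factorization-independence of the Barr extension together with the characterization of weak pullbacks recalled after \autoref{p:48}: a commutative square $f\cdot p=g\cdot q$ is a weak pullback iff $q\cdot p^\circ=g^\circ\cdot f$. Throughout I would use that $\ftbF$ is functorial on functions, i.e.\ $\ftbF h=\ftF h$ and $\ftbF(h^\circ)=(\ftF h)^\circ$ for a function $h$, and that by \autoref{p:910} preserving 1/4-mono pullbacks is the same as preserving inverse images, which by the remark after \autoref{p:18} coincides with \emph{weakly} preserving them since at least one projection is monic.

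For the forward implication I would assume $\ftF$ preserves 1/4-mono pullbacks. Given composable $r\colon X\relto Y$ and $s\colon Y\relto Z$, fix representing spans $X\xfrom{\pi_1}R\xto{\pi_2}Y$ and $Y\xfrom{\sigma_1}S\xto{\sigma_2}Z$, so $\ftbF r=\ftF\pi_2\cdot(\ftF\pi_1)^\circ$ and $\ftbF s=\ftF\sigma_2\cdot(\ftF\sigma_1)^\circ$, and let $P$ with projections $p\colon P\to R$, $q\colon P\to S$ be the pullback of $\pi_2$ and $\sigma_1$, so that $s\cdot r$ is represented by $X\xfrom{\pi_1 p}P\xto{\sigma_2 q}Z$. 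The key point is that the hypotheses force $P$ to be 1/4-mono: if $s$ is a partial function one may take $\sigma_1$ injective, whence $p$, being the pullback of a mono, is monic; dually, if $r$ is the converse of a partial function one may take $\pi_2$ injective, whence $q$ is monic. In either case $\ftF$ preserves $P$, so $\ftF q\cdot(\ftF p)^\circ=(\ftF\sigma_1)^\circ\cdot\ftF\pi_2$, and substituting into $\ftbF s\cdot\ftbF r=\ftF\sigma_2\cdot(\ftF\sigma_1)^\circ\cdot\ftF\pi_2\cdot(\ftF\pi_1)^\circ$ gives $\ftF(\sigma_2 q)\cdot(\ftF(\pi_1 p))^\circ=\ftbF(s\cdot r)$.

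For the converse I would assume the stated composition property and, invoking \autoref{p:910}, reduce to showing that $\ftF$ preserves an arbitrary inverse image, i.e.\ the pullback $P$, with projections $p_X\colon P\to X$ and $p_B\colon P\to B$, of a function $f\colon X\to Y$ along a mono $m\colon B\ito Y$. Taking $r=f$ and $s=m^\circ$, the relation $s$ is a partial function, so the hypothesis applies; moreover $m^\circ\cdot f=\{(x,b)\mid f(x)=m(b)\}$ is a partial function whose canonical factorization is exactly $X\xfrom{p_X}P\xto{p_B}B$. Since $\ftbF f=\ftF f$, $\ftbF(m^\circ)=(\ftF m)^\circ$ and $\ftbF(m^\circ\cdot f)=\ftF p_B\cdot(\ftF p_X)^\circ$, the equation $\ftbF s\cdot\ftbF r=\ftbF(s\cdot r)$ reads $(\ftF m)^\circ\cdot\ftF f=\ftF p_B\cdot(\ftF p_X)^\circ$, which is precisely weak preservation of the inverse-image square. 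As weak preservation of these mono pullbacks coincides with preservation, $\ftF$ preserves inverse images, i.e.\ 1/4-mono pullbacks.

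The main obstacle is bookkeeping rather than conceptual: one must choose the representing spans so that the monic leg lands on the correct side of the mediating pullback (injective $\sigma_1$ in the partial-function case, injective $\pi_2$ in the converse-of-partial-function case), and, in the converse direction, restrict to the instance $r=f$, $s=m^\circ$ in which $\pi_1$ and $\sigma_2$ are identities. This last choice is what lets the equality of full composites collapse to the equality of the relevant middle term, and hence to weak preservation of the square; for general $r,s$ the flanking factors $\ftF\sigma_2$ and $(\ftF\pi_1)^\circ$ would block this final step, so the argument really does need the specialization to a function and a mono.
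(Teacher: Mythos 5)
Your proof is correct and takes exactly the paper's (largely implicit) route: the paper's entire recorded proof is the observation that a partial function factors as $f\cdot i^\circ$ with $i$ monic (dually for converses of partial functions), leaving to the reader precisely the span-pasting argument you spell out, namely that the mediating pullback of $\pi_2$ against $\sigma_1$ is then 1/4-mono, so its preservation collapses $\ftbF s\cdot\ftbF r$ to $\ftbF(s\cdot r)$ via the weak-pullback characterization $q\cdot p^\circ=g^\circ\cdot f$. Your converse direction (specializing to $r=f$, $s=m^\circ$ and invoking \autoref{p:910} together with the remark that weak preservation of pullbacks with a monic projection coincides with preservation) is likewise the intended argument, which the paper omits entirely.
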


This result enables a ``look ahead and behind'' strategy for \autoref{cor:least_flat}.
The idea is that, given a composable sequence of relations $r_1, \ldots, r_n$ such that $r_n \cdot \ldots \cdot r_1 = 1_X$, then, with $r_i \colon X_{i-1} \relto X_i$ being a relation in the sequence, removing the elements of $X_i$ that do not belong to the codomain of $r_i \cdot \ldots \cdot r_1$ \emph{or} do not belong to the domain of $r_n \cdot \ldots \cdot r_{i+1}$ yields a Barr upper bound of our original sequence.
For instance, consider the composable sequence of relations depicted in \autoref{p:920}, which we used to show that there are functors that preserve 1/4-iso pullbacks but do not admit a normal lax extension.
In the next lemma, in particular, we show that for functors that preserve 1/4-mono pullbacks the sequence below of total and surjective relations is a Barr upper bound of this one.
The dotted arrows represent pairs of related elements that were removed, and the grey boxes represent the elements of each set that are \emph{not} removed.

\begin{center}
		\begin{tikzcd}[column sep=4em, row sep=2ex, /tikz/execute at end picture={
						\begin{pgfonlayer}{background}
							\node (large) [rectangle, draw=none, fill=lightgray, rounded corners, fit=(\tikzcdmatrixname-4-1) (\tikzcdmatrixname-5-1)] {};
							\node (large) [rectangle, draw=none, fill=lightgray, rounded corners, fit=(\tikzcdmatrixname-4-2) (\tikzcdmatrixname-5-2)] {};
							\node (large) [rectangle, draw=none, fill=lightgray, rounded corners, fit=(\tikzcdmatrixname-3-3) (\tikzcdmatrixname-4-3)] {};
							\node (large) [rectangle, draw=none, fill=lightgray, rounded corners, fit=(\tikzcdmatrixname-2-4) (\tikzcdmatrixname-3-4)] {};
							\node (large) [rectangle, draw=none, fill=lightgray, rounded corners, fit=(\tikzcdmatrixname-2-5) (\tikzcdmatrixname-3-5)] {};
						\end{pgfonlayer}%
					}]
			\	& \				&	\				&	\			  & \ \\
			  & \bullet & \bullet & \bullet & x \\
			  &         & \bullet & \bullet & y \\
			x & \bullet & \bullet &             \\
			y & \bullet & \bullet & \bullet
		  \ar[from=1-1, to=1-2, phantom, "r'_1"]
  		\ar[from=1-2, to=1-3, phantom, "r'_2"]
  		\ar[from=1-3, to=1-4, phantom, "r'_3"]
  		\ar[from=1-4, to=1-5, phantom, "r'_4"]
			\ar[from=4-1, to=4-2]
			\ar[from=5-1, to=5-2]
			\ar[from=2-2, to=2-3, dotted]
			\ar[from=4-2, to=3-3]
			\ar[from=4-2, to=5-3, dotted]
			\ar[from=5-2, to=5-3, dotted]
			\ar[from=5-2, to=4-3]
			\ar[from=2-3, to=2-4, dotted]
			\ar[from=2-3, to=3-4, dotted]
			\ar[from=3-3, to=2-4]
			\ar[from=4-3, to=3-4]
			\ar[from=5-3, to=5-4, dotted]
			\ar[from=2-4, to=2-5]
			\ar[from=3-4, to=3-5]
		\end{tikzcd}
\end{center}

\begin{lemma}
	\label{p:401}
	A functor $\ftF \colon \SET \to \SET$ that preserves 1/4-mono pullbacks admits a normal lax extension if for every composable sequence of total and surjective relations $r_1,\ldots,r_n$, whenever $r_n \cdot \ldots \cdot r_1 = 1_X$ for some set $X$, then $\ftbF r_n \cdot \ldots \cdot \ftbF r_1 \leq 1_{\ftF X}$.
\end{lemma}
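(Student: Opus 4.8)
The plan is to verify the criterion of \autoref{cor:least_flat} for an arbitrary composable sequence $r_1,\ldots,r_n$ with $r_i \colon X_{i-1} \relto X_i$, $X_0 = X_n = X$ and $r_n \cdot \ldots \cdot r_1 = 1_X$, by replacing it with its restriction to the ``useful'' elements and then invoking the hypothesis. For each $i$ I set $A_i = \cod(r_i \cdot \ldots \cdot r_1)$ (the elements reachable from $X$), $B_i = \dom(r_n \cdot \ldots \cdot r_{i+1})$ (the elements from which $X$ is reachable), and $X_i' = A_i \cap B_i$; I write $r_i' \colon X_{i-1}' \relto X_i'$ for the restriction of $r_i$. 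First I would record the purely combinatorial facts: since every $x \in X$ lies on a full path $x = x_0 \mathrel{r_1} x_1 \cdots \mathrel{r_n} x_n = x$, and every vertex $x_i$ of such a path lies in $X_i'$, the relations $r_i'$ are total and surjective and still satisfy $r_n' \cdot \ldots \cdot r_1' = 1_X$.

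The heart of the argument is the equality $\ftbF r_n \cdot \ldots \cdot \ftbF r_1 = \ftbF r_n' \cdot \ldots \cdot \ftbF r_1'$, which I would establish in two dual passes. The first pass (``look behind'') restricts only the codomains to the reachable sets: writing $a_i \colon A_i \hookrightarrow X_i$ for the inclusion and $\bar r_i = a_i^\circ \cdot r_i \cdot a_{i-1}$ (so that $A_i = r_i[A_{i-1}]$ and each $\bar r_i$ is surjective), I claim $\ftbF r_i \cdot \ldots \cdot \ftbF r_1 = \ftF a_i \cdot \ftbF \bar r_i \cdot \ldots \cdot \ftbF \bar r_1$ for all $i$, by induction. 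The inductive step rewrites $\ftbF r_{i+1} \cdot \ftF a_i = \ftbF(r_{i+1} \cdot a_i) = \ftF a_{i+1} \cdot \ftbF \bar r_{i+1}$, where the first equality holds by \autoref{p:100} because $a_i$, being injective, is the converse of a partial function, and the second holds because $a_{i+1}$ is a (total, hence partial) function; this is exactly where preservation of 1/4-mono pullbacks enters. Since $A_0 = A_n = X$, at $i = n$ this yields $\ftbF r_n \cdot \ldots \cdot \ftbF r_1 = \ftbF \bar r_n \cdot \ldots \cdot \ftbF \bar r_1$ with all $\bar r_i$ surjective. The second pass (``look ahead'') is the converse-dual: applying the first pass to the reverse-converse sequence $\bar r_n^\circ, \ldots, \bar r_1^\circ$, whose reachable sets are precisely the $X_i' = A_i \cap B_i$, and using that $\ftbF$ preserves converses, I obtain $\ftbF \bar r_n \cdot \ldots \cdot \ftbF \bar r_1 = \ftbF r_n' \cdot \ldots \cdot \ftbF r_1'$. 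A short check confirms that restricting to co-reachable elements preserves both totality and surjectivity, so the $r_i'$ are indeed total and surjective.

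Combining the two passes gives $\ftbF r_n \cdot \ldots \cdot \ftbF r_1 = \ftbF r_n' \cdot \ldots \cdot \ftbF r_1'$. Since $r_1',\ldots,r_n'$ is a composable sequence of total and surjective relations with $r_n' \cdot \ldots \cdot r_1' = 1_X$, the hypothesis yields $\ftbF r_n' \cdot \ldots \cdot \ftbF r_1' \le 1_{\ftF X}$, whence $\ftbF r_n \cdot \ldots \cdot \ftbF r_1 \le 1_{\ftF X}$; by \autoref{cor:least_flat}, $\ftF$ admits a normal lax extension.

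I expect the main obstacle to be the first (``look behind'') pass. Naive monotonicity of $\ftbF$ only delivers one inequality; obtaining the exact equality hinges on arranging the restriction so that each inclusion $a_i$ travels through the composite as an honest function $\ftF a_i$, rather than as a subidentity $\ftF a_i \cdot (\ftF a_i)^\circ$ (which would merely shrink the composite). It is precisely this that forces the use of the composition law of \autoref{p:100}. A secondary bookkeeping point is checking that the co-reachability restriction in the second pass retains surjectivity and totality simultaneously, together with the edge cases involving empty sets.
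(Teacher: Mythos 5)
Your proof is correct and takes essentially the same route as the paper's: both arguments trim the sequence to its reachable/co-reachable elements in two passes, maintain exact equality of the Barr composites via the composition law of \autoref{p:100} (inclusions being partial functions, respectively converses thereof), and then conclude with \autoref{cor:least_flat}. The differences are only cosmetic — the paper runs the domain-trimming pass first, inserting subidentities $\domr{s_{i+1}}$ and $\codr{t_{i-1}}$ and (co)restricting only at the very end using preservation of monomorphisms, whereas you carry the inclusions $a_i$ along inductively and obtain the second pass from the first by converse-duality of the Barr extension.
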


It turns out that the sufficient condition of the previous lemma is actually satisfied by every $\SET$-functor that preserves 1/4-iso pullbacks.
Indeed, due to the next result, \autoref{p:610} and the fact that surjections are stable under pushouts, every composable sequence of total and surjective relations whose composite is an identity admits a Barr upper bound consisting of three relations.

\begin{lemma}
	\label{p:900}
	Let $r_1 \colon X \relto X_1$, $r_2 \colon X_1 \relto X_2$ and $r_3 \colon X_2 \relto X$ be a composable sequence of total and surjective relations, and let $\hat{r}_2 \colon X_1 \relto X_2$ be the difunctional closure of $r_2$.
	If $r_3 \cdot r_2 \cdot r_1 = 1_X$, then $r_3 \cdot \hat{r}_2 \cdot r_1 = 1_X$.
\end{lemma}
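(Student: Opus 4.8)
The plan is to prove the two inclusions separately. The inclusion $1_X \leq r_3 \cdot \hat{r}_2 \cdot r_1$ is immediate: since $r_2 \leq \hat{r}_2$ and composition is monotone, $1_X = r_3 \cdot r_2 \cdot r_1 \leq r_3 \cdot \hat{r}_2 \cdot r_1$. All the work goes into the converse inclusion $r_3 \cdot \hat{r}_2 \cdot r_1 \leq 1_X$, i.e.\ into showing that whenever $x \mathrel{r_1} y_1$, $y_1 \mathrel{\hat{r}_2} y_2$ and $y_2 \mathrel{r_3} x'$, then $x = x'$.

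The key step is to encode the rigidity forced by the hypothesis $r_3 \cdot r_2 \cdot r_1 = 1_X$ into two functions. For $a \in X_1$ put $L(a) = \{x \in X \mid x \mathrel{r_1} a\}$, and for $b \in X_2$ put $R(b) = \{x' \in X \mid b \mathrel{r_3} x'\}$; these are nonempty because $r_1$ is surjective and $r_3$ is total. I claim each $L(a)$ and each $R(b)$ is a singleton. Fix $a \in X_1$; since $r_2$ is total there is some $b$ with $a \mathrel{r_2} b$, and since $r_3$ is total, $R(b) \neq \varnothing$. For any $x \in L(a)$ and $x' \in R(b)$ we have $x \mathrel{r_1} a \mathrel{r_2} b \mathrel{r_3} x'$, hence $x \mathrel{(r_3 \cdot r_2 \cdot r_1)} x'$, so $x = x'$ by the hypothesis; as $L(a)$ and $R(b)$ are nonempty this forces both to be singletons (and equal). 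Symmetrically, using that $r_2$ is surjective, every $R(b)$ is a singleton. I thus obtain functions $\ell \colon X_1 \to X$ and $\rho \colon X_2 \to X$ with $L(a) = \{\ell(a)\}$ and $R(b) = \{\rho(b)\}$, and the very same computation shows the compatibility condition $a \mathrel{r_2} b \implies \ell(a) = \rho(b)$.

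To transfer this compatibility from $r_2$ to its difunctional closure, I would use the pushout description of $\hat{r}_2$ recorded after \autoref{p:48}: writing $r_2 = \pi_2 \cdot \pi_1^\circ$ for a span $X_1 \xfrom{\pi_1} R_2 \xto{\pi_2} X_2$ and letting $X_1 \xto{p_1} O \xfrom{p_2} X_2$ be its pushout, one has $\hat{r}_2 = p_2^\circ \cdot p_1$, so that $y_1 \mathrel{\hat{r}_2} y_2$ holds exactly when $p_1(y_1) = p_2(y_2)$. The compatibility condition says precisely that $\ell \cdot \pi_1 = \rho \cdot \pi_2$ (each $w \in R_2$ witnesses an edge $\pi_1(w) \mathrel{r_2} \pi_2(w)$), so the universal property of the pushout yields a unique $u \colon O \to X$ with $u \cdot p_1 = \ell$ and $u \cdot p_2 = \rho$. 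Now given $x \mathrel{r_1} y_1$, $y_1 \mathrel{\hat{r}_2} y_2$, $y_2 \mathrel{r_3} x'$, we obtain $x = \ell(y_1) = u(p_1(y_1)) = u(p_2(y_2)) = \rho(y_2) = x'$, as required.

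I expect the main obstacle to be the singleton argument of the second paragraph: the crux is recognising that totality and surjectivity of the three relations, together with $r_3 \cdot r_2 \cdot r_1 \leq 1_X$, collapse every fibre $L(a)$ and $R(b)$ to a single point, which is exactly what turns the relational data into honest functions $\ell$ and $\rho$. Once these functions and their $r_2$-compatibility are in place, the passage to $\hat{r}_2$ is a short diagram chase through the pushout; alternatively, one could avoid the universal property and argue directly along a zigzag witnessing $y_1 \mathrel{\hat{r}_2} y_2$ via the explicit formula $\hat{r}_2 = \bigvee_{n} r_2 \cdot (r_2^\circ \cdot r_2)^n$, propagating the equalities $\ell(a_i) = \rho(b_i)$ step by step along the path.
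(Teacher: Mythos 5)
Your proof is correct, but it takes a genuinely different route from the paper's. The paper argues entirely within the calculus of relations: from surjectivity of $r_1$ and totality of $r_3$ it derives $1_{X_1} \leq r_1 \cdot r_1^\circ$ and $1_{X_2} \leq r_3^\circ \cdot r_3$, hence $r_3 \cdot r_2 \leq r_1^\circ$ and $r_2 \cdot r_1 \leq r_3^\circ$, and then proves $r_3 \cdot r_2 \cdot (r_2^\circ \cdot r_2)^n \cdot r_1 \leq 1_X$ by induction on $n$, using the explicit formula $\hat{r}_2 = \bigvee_{n} r_2 \cdot (r_2^\circ \cdot r_2)^n$; the inductive step is a one-line computation in which the two inequalities turn the new term into the converse of the term covered by the induction hypothesis. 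You instead argue elementwise: the hypothesis collapses the fibres $L(a)$ and $R(b)$ to singletons, yielding functions $\ell \colon X_1 \to X$ and $\rho \colon X_2 \to X$ with $\ell \cdot \pi_1 = \rho \cdot \pi_2$, and then the map $u \colon O \to X$ induced on the pushout that the paper (after \autoref{p:48}) identifies with the difunctional closure finishes the job. All your steps check out: the singleton argument correctly invokes totality and surjectivity of $r_2$ to cover every $a \in X_1$ and $b \in X_2$, with nonemptiness supplied by surjectivity of $r_1$ and totality of $r_3$, and the compatibility needed for the pushout's universal property is exactly the computation you give. As for what each approach buys: the paper's induction is shorter and shows that only surjectivity of $r_1$ and totality of $r_3$ are actually needed (totality and surjectivity of $r_2$ are never used), whereas your route is more conceptual, exhibiting the entire configuration as governed by a single map out of the pushout and thereby making transparent why the difunctional closure cannot create new pairs; your closing remark about propagating equalities along a zigzag via the explicit formula is, in pointwise form, essentially the paper's induction.
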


\begin{proposition}
	Let $\ftF \colon \SET \to \SET$ be a functor that preserves 1/4-iso pullbacks, and let $r_1,\ldots,r_n$ be a composable sequence of total and surjective relations.
	If $r_n \cdot \ldots \cdot r_1 = 1_X$ for some set $X$, then $\ftbF r_n \cdot \ldots \cdot \ftbF r_1 \leq 1_{\ftF X}$.
\end{proposition}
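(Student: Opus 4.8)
The plan is to argue by induction on the length $n$, shrinking the sequence by one relation at each step with \autoref{p:610} while keeping the composite equal to $1_X$ and the Barr composite dominated, until we reach the three-relation case settled by \autoref{p:301}. For the base cases $n \le 3$: when $n = 1$ we have $r_1 = 1_X$ and $\ftbF 1_X = 1_{\ftF X}$; the case $n = 2$ is the third clause of \autoref{p:300} and the case $n = 3$ is the third clause of \autoref{p:301}, both available because $\ftF$ preserves 1/4-iso pullbacks.

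For the inductive step I would assume $n \ge 4$ and apply \autoref{p:610} to the first three relations $r_1 \colon X \relto X_1$, $r_2 \colon X_1 \relto X_2$, $r_3 \colon X_2 \relto X_3$, obtaining relations $r_1' \colon X \relto O$ and $r_3' \colon O \relto X_3$ with $r_3' \cdot r_1' = r_3 \cdot \hat r_2 \cdot r_1$ and $\ftbF r_3 \cdot \ftbF r_2 \cdot \ftbF r_1 \le \ftbF r_3' \cdot \ftbF r_1'$, where $O$ is the pushout of the span of $r_2$ used to form $\hat r_2$. Post-composing the latter inequality with $\ftbF r_n \cdots \ftbF r_4$ exhibits the shorter sequence $r_1', r_3', r_4, \dots, r_n$ of length $n-1$ as a Barr upper bound of the original one, so it suffices to establish the desired inequality for this sequence.

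It then remains to check that the shorter sequence again satisfies the hypotheses. For the composite, observe that $r_n \cdots r_4 \cdot r_3' \cdot r_1' = r_n \cdots r_3 \cdot \hat r_2 \cdot r_1$; writing $A = r_1$, $B = r_2$ and $C = r_n \cdots r_3$ — all total and surjective, with $C$ a composite of such relations — we have $C \cdot B \cdot A = 1_X$, so \autoref{p:900} yields $C \cdot \hat B \cdot A = 1_X$, that is, exactly $r_n \cdots r_3 \cdot \hat r_2 \cdot r_1 = 1_X$. For totality and surjectivity, the pushout injections $p_1 \colon X_1 \to O$ and $p_2 \colon X_2 \to O$ are surjective because $r_2$ is total and surjective and surjections are stable under pushouts; hence $r_1' = p_1 \cdot r_1$ and $r_3' = r_3 \cdot p_2^\circ$ are again total and surjective, while $r_4, \dots, r_n$ are left untouched. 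The inductive hypothesis now gives $\ftbF r_n \cdots \ftbF r_4 \cdot \ftbF r_3' \cdot \ftbF r_1' \le 1_{\ftF X}$, and combining this with the Barr-upper-bound inequality closes the step.

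The main obstacle is ensuring that the composite remains equal to $1_X$ after replacing $r_2$ by its difunctional closure, since a priori this move only enlarges the composite — indeed \autoref{p:920} shows that the analogous manoeuvre genuinely fails in the absence of surjectivity. The decisive idea is to fold the entire tail $r_n \cdots r_3$ into a single total surjective relation and invoke the three-relation statement \autoref{p:900}; this is precisely why one must track the pushout-stability of surjections, so that the hypotheses of \autoref{p:900} survive each reduction.
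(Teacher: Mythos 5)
Your proof is correct and takes essentially the same route as the paper: the paper's (condensed) argument likewise combines \autoref{p:610}, \autoref{p:900} and the pushout-stability of surjections to reduce any total-and-surjective sequence composing to $1_X$ to a three-relation Barr upper bound, which is then settled by \autoref{p:301}. Your explicit induction merely packages that reduction one step at a time, with the same key verifications — folding the tail $r_n \cdots r_3$ into a single total surjective relation so that \autoref{p:900} applies, and checking that $r_1' = p_1 \cdot r_1$ and $r_3' = r_3 \cdot p_2^\circ$ remain total and surjective via surjectivity of the pushout maps $p_1,p_2$.
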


Therefore,

\begin{theorem}
	\label{thm:main}
	Every \(\SET\)-functor that preserves 1/4-mono pullbacks admits a normal lax extension.
\end{theorem}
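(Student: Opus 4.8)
The plan is to read off the theorem from \autoref{p:401} together with the proposition stated immediately before it, so that almost all of the work has already been localized in the preceding lemmas. Recall that \autoref{p:401} asserts that a functor preserving 1/4-mono pullbacks admits a normal lax extension \emph{provided} the criterion of \autoref{cor:least_flat} holds in the restricted form where the composable sequence $r_1,\ldots,r_n$ consists of total and surjective relations. Thus, given the hypothesis that $\ftF$ preserves 1/4-mono pullbacks, the entire theorem reduces to supplying exactly that restricted criterion; nothing else needs to be constructed.

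The first step I would carry out is to note that preservation of 1/4-mono pullbacks implies preservation of 1/4-iso pullbacks. By the remark following \autoref{p:18}, every 1/4-iso 2/4-mono pullback is a special inverse image, so a functor preserving inverse images preserves 1/4-iso 2/4-mono pullbacks; by \autoref{p:910} preserving inverse images is the same as preserving 1/4-mono pullbacks, and by \autoref{p:64} preserving 1/4-iso 2/4-mono pullbacks is the same as preserving 1/4-iso pullbacks. Having 1/4-iso preservation in hand, the required restricted criterion is precisely the content of the proposition immediately preceding the theorem: for a functor preserving 1/4-iso pullbacks and any composable total-and-surjective sequence with $r_n \cdot \ldots \cdot r_1 = 1_X$, one has $\ftbF r_n \cdot \ldots \cdot \ftbF r_1 \leq 1_{\ftF X}$. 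Combining this criterion with the standing hypothesis of 1/4-mono preservation and feeding both into \autoref{p:401} yields that $\ftF$ admits a normal lax extension, which completes the argument.

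I expect the main obstacle to lie not in this theorem, whose proof is a two-line assembly, but in the two reductions it draws on, which I would therefore flag as the places where the genuine reasoning resides. The preceding proposition cuts total-and-surjective sequences of arbitrary length down to triples using the difunctional-closure step of \autoref{p:610} together with \autoref{p:900} (exploiting that surjections are stable under pushout), and then closes the triple case via \autoref{p:301}; meanwhile \autoref{p:401} itself is the heart of the 1/4-mono case, implementing the ``look ahead and behind'' removal of elements outside the relevant domains and codomains, with the verification that this yields a Barr upper bound resting squarely on \autoref{p:100}, the characterization of 1/4-mono preservation as commutation of $\ftbF$ with composition along partial functions and their converses. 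Consequently, the only genuinely new micro-step at the level of the theorem is the implication from 1/4-mono to 1/4-iso preservation, after which the earlier machinery applies verbatim.
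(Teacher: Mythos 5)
Your proposal is correct and follows essentially the same route as the paper, which likewise obtains the theorem as an immediate assembly of \autoref{p:401} with the proposition preceding it, using that preservation of 1/4-mono pullbacks implies preservation of 1/4-iso pullbacks (via \autoref{p:910}, the observation that 1/4-iso 2/4-mono pullbacks are special inverse images, and \autoref{p:64}). Your identification of where the real work lies --- \autoref{p:401} resting on \autoref{p:100}, and the length-reduction via \autoref{p:610}, \autoref{p:900} and \autoref{p:301} --- also matches the paper's structure exactly.
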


In particular, since in \autoref{expl:functors}(\ref{p:403}) we have seen that for (commutative) monoid-valued functors preserving 1/4-mono pullbacks is equivalent to preserving 1/4-iso pullbacks, as a consequence of \autoref{thm:main} and \autoref{p:450} we obtain:

\begin{corollary}\label{cor:monoid-lax}
	A (commutative) monoid-valued functor admits a normal lax extension iff the monoid is positive.
\end{corollary}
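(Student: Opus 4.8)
The plan is to read off the corollary directly from the two principal theorems together with the positivity characterization established in \autoref{expl:functors}(\ref{p:403}). The key point—and the reason the necessary and sufficient conditions meet with no gap—is that for a monoid-valued functor $M^{(-)}$ the pullback-preservation properties in play coincide: preservation of 1/4-iso pullbacks, preservation of 1/4-mono pullbacks (equivalently, of inverse images, via \autoref{p:910}), and positivity of $M$ are all mutually equivalent, as recorded in \autoref{expl:functors}(\ref{p:403}).

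For the direction from positivity to existence of a normal lax extension, I would assume $M$ positive; then $M^{(-)}$ preserves inverse images, hence 1/4-mono pullbacks, and \autoref{thm:main} supplies the desired normal lax extension. For the converse direction, I would assume that $M^{(-)}$ admits a normal lax extension; \autoref{p:450} then yields preservation of 1/4-iso pullbacks, which by \autoref{expl:functors}(\ref{p:403}) is in turn equivalent to $M$ being positive.

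Since each step is a direct invocation of an already-proved result, there is no genuine computational obstacle in this argument; all the substantive content resides in \autoref{thm:main} and \autoref{p:450}. The only subtlety worth flagging is that the gap between the sufficient condition (1/4-mono) and the necessary condition (1/4-iso) generally remains open—indeed \autoref{expl:functors} exhibits functors separating the two—so it is specifically the collapse of both to positivity in the monoid-valued case that allows the two theorems to meet exactly and pin down the criterion.
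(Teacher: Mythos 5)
Your proof is correct and follows exactly the paper's route: the corollary is obtained by combining \autoref{thm:main} (sufficiency via 1/4-mono pullback preservation), \autoref{p:450} (necessity via 1/4-iso pullback preservation), and the equivalence of both conditions with positivity of $M$ recorded in \autoref{expl:functors}(\ref{p:403}). Your remark that the general gap between the two pullback conditions collapses precisely in the monoid-valued case is exactly the observation the paper relies on.
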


\begin{remark}
  The above result may be equivalently stated as saying that a
  monoid-valued functor has a separating set of monotone predicate
  liftings iff the monoid is positive. In this formulation, it
  improves on a previous result effectively stating the same
  equivalence under the additional assumption that the monoid is
  refinable~\cite[Proposition 22]{SeifanEA17}. 
  For every monoid~$M$, one has a preorder on~$M$ given by $m\ge n$
  iff $\exists k \in M.\,m=n+k$, which is a partial order whenever the monoid is cancellative and positive.
  It is then clear that one has a separating
  set of monotone predicate liftings $\Diamond_m$, for~$m\in M$, defined by
  $\Diamond_m(A)=\{\mu\in M^{(X)}\mid \mu(A)\ge m\}$ where we write
  $\mu(A)=\sum_{x\in A}\mu(x)$. The arising normal lax extension is
  given for $r\colon X\relto Y$, $\mu\in M^{(X)}$, $\nu\in M^{(Y)}$ by
  $\mu\mathrel{Lr}\nu$ iff $\nu(r[A])\ge \mu(A)$ for all
  $A\subseteq X$ and symmetrically, much like for probabilistic
  transition systems (\autoref{expl:lax}(\ref{item:lax-prob})). For
  non-cancellative positive monoids, the description of the normal lax
  extension and the separating set of monotone predicate liftings
  whose existence are guaranteed by \autoref{cor:monoid-lax} is in
  general more involved. In particular, the predicate liftings
  $\Diamond_m$ described above may fail to be separating, as
  witnessed, for instance, by the commutative additive monoid $\{0,1,2\}$ with $1+2=1$. 
  Specifically, $\mu,\nu\in M^{(\{\star\})}$ given by $\mu(\star)=1$ and $\nu(\star)=2$ cannot be
  distinguished.
\end{remark}

The class of $\SET$-functors that admit a normal lax extension is closed under subfunctors and several natural constructions such as the sum of functors.
This makes it easy to extend the reach of our sufficient conditions, but it also shows that it is easy to provide examples of functors that admit a normal lax extension and do not weakly preserve 1/4-mono pullbacks nor 4/4-epi pullbacks.
A quick example is the functor given by the sum of the functor $(-)^3_2$ and the monotone neighbourhood functor.
To conclude this section, we present a less obvious example that is constructed analogously to \autoref{p:920}.
Notice that, as we have seen in \autoref{p:860}(\ref{p:861}), the class of functors that admit a normal lax extension is not closed under quotients.

\begin{example}
For any set $X$, let $\ftF X$
be the
quotient of\/ $X^3$ under the equivalence relation~$\sim$ defined by the
clauses $(x,x,y)\sim (x,x,x)\sim (y,x,x)$. This yields a functor $\ftF \colon \SET \to \SET$
that neither weakly preserves 1/4-mono pullbacks nor 4/4-epi pullbacks, however,
$\ftF$ admits a normal lax extension.
\pnnote{@Sergey: add details to the appendix}
\end{example}

\section{Conclusions}

\noindent Normal lax extensions of functors play a dual role in the
coalgebraic modelling of reactive systems, on the one hand allowing
for good notions of bisimulations on functor coalgebras and on the
other hand guaranteeing the existence of expressive temporal
logics. We have shown on the one hand that every functor admitting a
lax extension preserves 1/4-iso pullbacks, and on the other hand that
a functor admits a normal lax extension if it weakly preserves either
1/4-iso pullbacks and 4/4-epi pullbacks or inverse images. These
results improve on previous
results~\cite{KurzLeal12,MartiVenema12,MV15}, which combine to imply
that weak-pullback-preserving functors admit normal lax
extensions. One application of our results implies, roughly, that a
given type of monoid-weighted transition systems admits a good notion
of bisimulation iff the monoid is positive.

The most obvious issue for future work is to close the remaining gap,
i.e.\ to give a necessary and sufficient criterion for the existence
of normal lax extensions in terms of limit preservation. Additionally,
the structure of the lattice of normal lax extensions of a functor merits
attention, in the sense that larger lax extensions induce more
permissive notions of bisimulation.

\bibliography{flatbibl}

\providecommand{\noopsort}[1]{}
\begin{thebibliography}{10}

\bibitem{AM89}
Peter Aczel and Nax~Paul Mendler.
\newblock A final coalgebra theorem.
\newblock In David~H. Pitt, David~E. Rydeheard, Peter Dybjer, Andrew~M. Pitts,
  and Axel Poign{\'{e}}, editors, {\em Category Theory and Computer Science,
  Manchester, UK, September 5-8, 1989, Proceedings}, volume 389 of {\em Lecture
  Notes in Computer Science}, pages 357--365. Springer, 1989.
\newblock \href {https://doi.org/10.1007/BFB0018361}
  {\path{doi:10.1007/BFB0018361}}.

\bibitem{AHS90}
Ji{\v{r}\'i} Ad{\'a}mek, Horst Herrlich, and George~E. Strecker.
\newblock {\em Abstract and concrete categories: {T}he joy of cats}.
\newblock John Wiley \& Sons Inc., 1990.
\newblock Republished in: Reprints in Theory and Applications of Categories,
  No. 17 (2006) pp.~1--507.
\newblock URL: \url{http://tac.mta.ca/tac/reprints/articles/17/tr17abs.html}.

\bibitem{BackhouseEA91}
Roland~Carl Backhouse, Peter~J. de~Bruin, Paul~F. Hoogendijk, Grant Malcolm,
  Ed~Voermans, and Jaap van~der Woude.
\newblock Polynomial relators (extended abstract).
\newblock In Maurice Nivat, Charles Rattray, Teodor Rus, and Giuseppe Scollo,
  editors, {\em Algebraic Methodology and Software Technology, AMAST 1991},
  Workshops in Computing, pages 303--326. Springer, 1991.

\bibitem{BaldanEA18}
Paolo Baldan, Filippo Bonchi, Henning Kerstan, and Barbara K{\"{o}}nig.
\newblock Coalgebraic behavioral metrics.
\newblock {\em Log.\ Methods Comput.\ Sci.}, 14(3), 2018.
\newblock \href {https://doi.org/10.23638/LMCS-14(3:20)2018}
  {\path{doi:10.23638/LMCS-14(3:20)2018}}.

\bibitem{Barr70}
Michael Barr.
\newblock Relational algebras.
\newblock In {\em Reports of the Midwest Category Seminar IV}, number 137 in
  Lect.\ Notes Math., pages 39--55. Springer, 1970.
\newblock \href {https://doi.org/10.1007/BFb0060439}
  {\path{doi:10.1007/BFb0060439}}.

\bibitem{Bar70}
Michael Barr.
\newblock Relational algebras.
\newblock In {\em Reports of the Midwest Category Seminar IV}, pages 39--55.
  Springer, 1970.
\newblock \href {https://doi.org/10.1007/bfb0060439}
  {\path{doi:10.1007/bfb0060439}}.

\bibitem{Barr93}
Michael Barr.
\newblock Terminal coalgebras in well-founded set theory.
\newblock {\em Theor.\ Comput. Sci.}, 114(2):299--315, 1993.
\newblock \href {https://doi.org/10.1016/0304-3975(93)90076-6}
  {\path{doi:10.1016/0304-3975(93)90076-6}}.

\bibitem{BirdDeMoor97}
Richard~S. Bird and Oege de~Moor.
\newblock {\em Algebra of programming}.
\newblock Prentice Hall International series in computer science. Prentice
  Hall, 1997.

\bibitem{BonchiEA12}
Filippo Bonchi, Marcello~M. Bonsangue, Michele Boreale, Jan J. M.~M. Rutten,
  and Alexandra Silva.
\newblock A coalgebraic perspective on linear weighted automata.
\newblock {\em Inf.\ Comput.}, 211:77--105, 2012.
\newblock \href {https://doi.org/10.1016/J.IC.2011.12.002}
  {\path{doi:10.1016/J.IC.2011.12.002}}.

\bibitem{Chellas80}
Brian~F. Chellas.
\newblock {\em Modal Logic - An Introduction}.
\newblock Cambridge University Press, 1980.
\newblock \href {https://doi.org/10.1017/CBO9780511621192}
  {\path{doi:10.1017/CBO9780511621192}}.

\bibitem{CirsteaEA11}
Corina C{\^{\i}}rstea, Clemens Kupke, and Dirk Pattinson.
\newblock {EXPTIME} tableaux for the coalgebraic mu-calculus.
\newblock {\em Log.\ Methods Comput.\ Sci.}, 7(3), 2011.
\newblock \href {https://doi.org/10.2168/LMCS-7(3:3)2011}
  {\path{doi:10.2168/LMCS-7(3:3)2011}}.

\bibitem{ClementinoEA14}
Maria~Manuel Clementino, Dirk Hofmann, and George Janelidze.
\newblock The monads of classical algebra are seldom weakly cartesian.
\newblock {\em J.\ Homotopy and Related Structures}, 9(1):175--197, 2014.

\bibitem{Gavazzo18}
Francesco Gavazzo.
\newblock Quantitative behavioural reasoning for higher-order effectful
  programs: Applicative distances.
\newblock In Anuj Dawar and Erich Gr{\"{a}}del, editors, {\em Logic in Computer
  Science, {LICS} 2018}, pages 452--461. {ACM}, 2018.
\newblock \href {https://doi.org/10.1145/3209108.3209149}
  {\path{doi:10.1145/3209108.3209149}}.

\bibitem{GoncharovEA23}
Sergey Goncharov, Dirk Hofmann, Pedro Nora, Lutz Schröder, and Paul Wild.
\newblock A point-free perspective on lax extensions and predicate liftings.
\newblock {\em Mathematical Structures in Computer Science}, page 1–30, 2023.
\newblock \href {https://doi.org/10.1017/S096012952300035X}
  {\path{doi:10.1017/S096012952300035X}}.

\bibitem{GorinSchroder13}
Daniel Gor{\'{\i}}n and Lutz Schr{\"{o}}der.
\newblock Simulations and bisimulations for coalgebraic modal logics.
\newblock In Reiko Heckel and Stefan Milius, editors, {\em Algebra and
  Coalgebra in Computer Science, {CALCO} 2013}, volume 8089 of {\em LNCS},
  pages 253--266. Springer, 2013.
\newblock \href {https://doi.org/10.1007/978-3-642-40206-7\_19}
  {\path{doi:10.1007/978-3-642-40206-7\_19}}.

\bibitem{Gumm20}
H.~Peter Gumm.
\newblock Free-algebra functors from a coalgebraic perspective.
\newblock In Daniela Petrisan and Jurriaan Rot, editors, {\em Coalgebraic
  Methods in Computer Science, {CMCS} 2020}, volume 12094 of {\em LNCS}, pages
  55--67. Springer, 2020.
\newblock \href {https://doi.org/10.1007/978-3-030-57201-3\_4}
  {\path{doi:10.1007/978-3-030-57201-3\_4}}.

\bibitem{GS00}
H.~Peter Gumm and Tobias Schr{\"{o}}der.
\newblock Coalgebraic structure from weak limit preserving functors.
\newblock In Horst Reichel, editor, {\em Coalgebraic Methods in Computer
  Science, {CMCS} 2000, Berlin, Germany, March 25-26, 2000}, volume~33 of {\em
  Electronic Notes in Theoretical Computer Science}, pages 111--131. Elsevier,
  2000.
\newblock \href {https://doi.org/10.1016/S1571-0661(05)80346-9}
  {\path{doi:10.1016/S1571-0661(05)80346-9}}.

\bibitem{GummSchroder01}
H.~Peter Gumm and Tobias Schr{\"{o}}der.
\newblock Monoid-labeled transition systems.
\newblock In Andrea Corradini, Marina Lenisa, and Ugo Montanari, editors, {\em
  Coalgebraic Methods in Computer Science, {CMCS} 2001}, volume 44(1) of {\em
  ENTCS}, pages 185--204. Elsevier, 2001.
\newblock \href {https://doi.org/10.1016/S1571-0661(04)80908-3}
  {\path{doi:10.1016/S1571-0661(04)80908-3}}.

\bibitem{GS05}
H.~Peter Gumm and Tobias Schr{\"{o}}der.
\newblock Types and coalgebraic structure.
\newblock {\em Algebra universalis}, 53(2–3):229--252, August 2005.
\newblock \href {https://doi.org/10.1007/s00012-005-1888-2}
  {\path{doi:10.1007/s00012-005-1888-2}}.

\bibitem{GZ14}
H.~Peter Gumm and Mehdi Zarrad.
\newblock Coalgebraic simulations and congruences.
\newblock In Marcello~M. Bonsangue, editor, {\em Coalgebraic Methods in
  Computer Science - 12th {IFIP} {WG} 1.3 International Workshop, {CMCS} 2014,
  Colocated with {ETAPS} 2014, Grenoble, France, April 5-6, 2014, Revised
  Selected Papers}, volume 8446 of {\em Lecture Notes in Computer Science},
  pages 118--134. Springer, 2014.
\newblock \href {https://doi.org/10.1007/978-3-662-44124-4\_7}
  {\path{doi:10.1007/978-3-662-44124-4\_7}}.

\bibitem{HansenKupke04}
Helle~Hvid Hansen and Clemens Kupke.
\newblock A coalgebraic perspective on monotone modal logic.
\newblock In Jir{\'{\i}} Ad{\'{a}}mek and Stefan Milius, editors, {\em
  Coalgebraic Methods in Computer Science, {CMCS} 2004}, volume 106 of {\em
  ENTCS}, pages 121--143. Elsevier, 2004.
\newblock \href {https://doi.org/10.1016/j.entcs.2004.02.028}
  {\path{doi:10.1016/j.entcs.2004.02.028}}.

\bibitem{HT00}
Wim~H. Hesselink and Albert Thijs.
\newblock Fixpoint semantics and simulation.
\newblock {\em Theor. Comput. Sci.}, 238(1-2):275--311, 2000.
\newblock \href {https://doi.org/10.1016/S0304-3975(98)00176-5}
  {\path{doi:10.1016/S0304-3975(98)00176-5}}.

\bibitem{HST14}
Dirk Hofmann, Gavin~J. Seal, and Walter Tholen, editors.
\newblock {\em Monoidal {T}opology. {A} {C}ategorical {A}pproach to {O}rder,
  {M}etric, and {T}opology}, volume 153 of {\em Encyclopedia of Mathematics and
  its Applications}.
\newblock Cambridge University Press, Cambridge, July 2014.
\newblock Authors: Maria Manuel Clementino, Eva Colebunders, Dirk Hofmann,
  Robert Lowen, Rory Lucyshyn-Wright, Gavin J.\ Seal and Walter Tholen.
\newblock \href {https://doi.org/10.1017/cbo9781107517288}
  {\path{doi:10.1017/cbo9781107517288}}.

\bibitem{HughesJacobs04}
Jesse Hughes and Bart Jacobs.
\newblock Simulations in coalgebra.
\newblock {\em Theor.\ Comput.\ Sci.}, 327(1-2):71--108, 2004.
\newblock \href {https://doi.org/10.1016/J.TCS.2004.07.022}
  {\path{doi:10.1016/J.TCS.2004.07.022}}.

\bibitem{Gumm03}
Thomas Ihringer.
\newblock {\em Algemeine Algebra. Mit einem Anhang \"{u}ber Universelle
  Coalgebra von H.~P.~Gumm}, volume~10 of {\em Berliner Studienreihe zur
  Mathematik}.
\newblock Heldermann Verlag, 2003.

\bibitem{Klin09}
Bartek Klin.
\newblock Structural operational semantics for weighted transition systems.
\newblock In Jens Palsberg, editor, {\em Semantics and Algebraic Specification,
  Essays Dedicated to Peter D. Mosses on the Occasion of His 60th Birthday},
  volume 5700 of {\em LNCS}, pages 121--139. Springer, 2009.
\newblock \href {https://doi.org/10.1007/978-3-642-04164-8\_7}
  {\path{doi:10.1007/978-3-642-04164-8\_7}}.

\bibitem{KupkeEA12}
Clemens Kupke, Alexander Kurz, and Yde Venema.
\newblock Completeness for the coalgebraic cover modality.
\newblock {\em Log. Methods Comput. Sci.}, 8(3), 2012.
\newblock \href {https://doi.org/10.2168/LMCS-8(3:2)2012}
  {\path{doi:10.2168/LMCS-8(3:2)2012}}.

\bibitem{KurzLeal12}
Alexander Kurz and Raul~Andres Leal.
\newblock Modalities in the stone age: {A} comparison of coalgebraic logics.
\newblock {\em Theor.\ Comput.\ Sci.}, 430:88--116, 2012.
\newblock \href {https://doi.org/10.1016/J.TCS.2012.03.027}
  {\path{doi:10.1016/J.TCS.2012.03.027}}.

\bibitem{Lev11}
Paul~Blain Levy.
\newblock Similarity quotients as final coalgebras.
\newblock In Martin Hofmann, editor, {\em Foundations of Software Science and
  Computational Structures, {FOSSACS} 2011}, volume 6604 of {\em LNCS}, pages
  27--41. Springer, 2011.
\newblock \href {https://doi.org/10.1007/978-3-642-19805-2\_3}
  {\path{doi:10.1007/978-3-642-19805-2\_3}}.

\bibitem{MartiVenema12}
Johannes Marti and Yde Venema.
\newblock Lax extensions of coalgebra functors.
\newblock In Dirk Pattinson and Lutz Schr{\"{o}}der, editors, {\em Coalgebraic
  Methods in Computer Science, CMCS 2021}, volume 7399 of {\em LNCS}, pages
  150--169. Springer, 2012.
\newblock \href {https://doi.org/10.1007/978-3-642-32784-1\_9}
  {\path{doi:10.1007/978-3-642-32784-1\_9}}.

\bibitem{MV15}
Johannes Marti and Yde Venema.
\newblock Lax extensions of coalgebra functors and their logic.
\newblock {\em Journal of Computer and System Sciences}, 81(5):880--900, 2015.
\newblock \href {https://doi.org/10.1016/j.jcss.2014.12.006}
  {\path{doi:10.1016/j.jcss.2014.12.006}}.

\bibitem{Milner89}
Robin Milner.
\newblock {\em Communication and concurrency}.
\newblock {PHI} Series in computer science. Prentice Hall, 1989.

\bibitem{Parikh83}
Rohit Parikh.
\newblock Propositional game logic.
\newblock In {\em Foundations of Computer Science, FOCS 1983}, pages 195--200.
  {IEEE} Computer Society, 1983.
\newblock \href {https://doi.org/10.1109/SFCS.1983.47}
  {\path{doi:10.1109/SFCS.1983.47}}.

\bibitem{Pattinson04}
Dirk Pattinson.
\newblock Expressive logics for coalgebras via terminal sequence induction.
\newblock {\em Notre Dame J.\ Formal Log.}, 45(1):19--33, 2004.
\newblock \href {https://doi.org/10.1305/ndjfl/1094155277}
  {\path{doi:10.1305/ndjfl/1094155277}}.

\bibitem{Peleg85}
David Peleg.
\newblock Concurrent dynamic logic (extended abstract).
\newblock In Robert Sedgewick, editor, {\em Symposium on Theory of Computing,
  STOC 1985}, pages 232--239. {ACM}, 1985.
\newblock \href {https://doi.org/10.1145/22145.22172}
  {\path{doi:10.1145/22145.22172}}.

\bibitem{Rig48}
Jacques Riguet.
\newblock Relations binaires, fermetures, correspondances de {Galois}.
\newblock {\em Bulletin de la Société Mathématique de France}, 76:114--155,
  1948.
\newblock \href {https://doi.org/10.24033/bsmf.1401}
  {\path{doi:10.24033/bsmf.1401}}.

\bibitem{Rutten98}
Jan J. M.~M. Rutten.
\newblock Relators and metric bisimulations.
\newblock In Bart Jacobs, Larry Moss, Horst Reichel, and Jan J. M.~M. Rutten,
  editors, {\em Coalgebraic Methods in Computer Science, {CMCS} 1998},
  volume~11 of {\em ENTCS}, pages 252--258. Elsevier, 1998.
\newblock \href {https://doi.org/10.1016/S1571-0661(04)00063-5}
  {\path{doi:10.1016/S1571-0661(04)00063-5}}.

\bibitem{Rutten00}
Jan J. M.~M. Rutten.
\newblock Universal coalgebra: a theory of systems.
\newblock {\em Theor.\ Comput.\ Sci.}, 249(1):3--80, 2000.
\newblock \href {https://doi.org/10.1016/S0304-3975(00)00056-6}
  {\path{doi:10.1016/S0304-3975(00)00056-6}}.

\bibitem{Schroder08}
Lutz Schr{\"{o}}der.
\newblock Expressivity of coalgebraic modal logic: The limits and beyond.
\newblock {\em Theor.\ Comput.\ Sci.}, 390(2-3):230--247, 2008.
\newblock \href {https://doi.org/10.1016/J.TCS.2007.09.023}
  {\path{doi:10.1016/J.TCS.2007.09.023}}.

\bibitem{Sch08}
Lutz Schr{\"{o}}der.
\newblock Expressivity of coalgebraic modal logic: The limits and beyond.
\newblock {\em Theoretical Computer Science}, 390(2-3):230--247, jan 2008.
\newblock \href {https://doi.org/10.1016/j.tcs.2007.09.023}
  {\path{doi:10.1016/j.tcs.2007.09.023}}.

\bibitem{SS08}
Christoph Schubert and Gavin~J. Seal.
\newblock Extensions in the theory of lax algebras.
\newblock {\em Theory and Applications of Categories}, 21(7):118--151, 2008.

\bibitem{Sea05}
Gavin~J. Seal.
\newblock Canonical and op-canonical lax algebras.
\newblock {\em Theory and Applications of Categories}, 14(10):221--243, 2005.
\newblock URL: \url{http://www.tac.mta.ca/tac/volumes/14/10/14-10abs.html}.

\bibitem{SeifanEA17}
Fatemeh Seifan, Lutz Schr{\"{o}}der, and Dirk Pattinson.
\newblock Uniform interpolation in coalgebraic modal logic.
\newblock In Filippo Bonchi and Barbara K{\"{o}}nig, editors, {\em Algebra and
  Coalgebra in Computer Science, {CALCO} 2017}, volume~72 of {\em LIPIcs},
  pages 21:1--21:16. Schloss Dagstuhl -- Leibniz-Zentrum f{\"{u}}r Informatik,
  2017.
\newblock \href {https://doi.org/10.4230/LIPICS.CALCO.2017.21}
  {\path{doi:10.4230/LIPICS.CALCO.2017.21}}.

\bibitem{ThijsThesis}
Albert Thijs.
\newblock {\em Simulation and fixpoint semantics}.
\newblock PhD thesis, University of Groningen, 1996.

\bibitem{Trnkova80}
V{\v{e}}ra Trnkov{\'a}.
\newblock General theory of relational automata.
\newblock {\em Fund.\ Inform.}, 3(2):189--233, 1980.
\newblock \href {https://doi.org/10.3233/FI-1980-3208}
  {\path{doi:10.3233/FI-1980-3208}}.

\bibitem{Trn69}
Věra Trnková.
\newblock Some properties of set functors.
\newblock {\em Commentationes Mathematicae Universitatis Carolinae},
  010(2):323--352, 1969.
\newblock URL: \url{http://eudml.org/doc/16330}.

\bibitem{WildSchroder20}
Paul Wild and Lutz Schr{\"{o}}der.
\newblock Characteristic logics for behavioural metrics via fuzzy lax
  extensions.
\newblock In Igor Konnov and Laura Kov{\'{a}}cs, editors, {\em Concurrency
  Theory, {CONCUR} 2020}, volume 171 of {\em LIPIcs}, pages 27:1--27:23.
  Schloss Dagstuhl -- Leibniz-Zentrum f{\"{u}}r Informatik, 2020.
\newblock \href {https://doi.org/10.4230/LIPICS.CONCUR.2020.27}
  {\path{doi:10.4230/LIPICS.CONCUR.2020.27}}.

\bibitem{WildSchroder22}
Paul Wild and Lutz Schr{\"{o}}der.
\newblock Characteristic logics for behavioural hemimetrics via fuzzy lax
  extensions.
\newblock {\em Log.\ Methods Comput.\ Sci.}, 18(2), 2022.
\newblock \href {https://doi.org/10.46298/LMCS-18(2:19)2022}
  {\path{doi:10.46298/LMCS-18(2:19)2022}}.

\bibitem{Worrell00}
James Worrell.
\newblock Coinduction for recursive data types: partial orders, metric spaces
  and {$\Omega$}-categories.
\newblock In Horst Reichel, editor, {\em Coalgebraic Methods in Computer
  Science, {CMCS} 2000}, volume~33 of {\em ENTCS}, pages 337--356. Elsevier,
  2000.
\newblock \href {https://doi.org/10.1016/S1571-0661(05)80356-1}
  {\path{doi:10.1016/S1571-0661(05)80356-1}}.

\end{thebibliography}

\onecolumn
\appendix

\section{Omitted Proofs}

\subsection{Proof of \autoref{p:48}}

\begin{itemize}[wide]
	\item[]\ref{p:51} \(\Rightarrow\) \ref{p:53}.  Given
	      $r=g^\circ\cdot f$, the hypothesis of \ref{p:53} means that
	      $f(x_1)=g(y_1)=f(x_2)=g(y_2)$, so $x_1\mathrel{r} y_2$. %
	\item[]\ref{p:53} \(\Rightarrow\) \ref{p:54}.  The pullback of the
	      pushout is a relation $X\relto Y$ that relates $x\in X$ to $y\in Y$
	      iff~$x$ and~$y$ are equivalent under the equivalence relation on
	      $X+Y$ generated by~$r$. By \ref{p:53}, such elements $x,y$ are
	      already related by~$r$; that is, the pullback is~$r$. If
	      $r=\pi_2\cdot\pi_1^\circ$ as in \ref{p:54}, then~$R$ maps surjectively
	      onto~$r$, hence is a weak pullback.
	\item[]\ref{p:54} \(\Rightarrow\) \ref{p:51}.
	      Immediate, since every relation~$r$ can be written in the form
	      $r=\pi_2\cdot\pi_1^\circ$. \qed
\end{itemize}

\subsection{Proof of \autoref{p:910}}

Let $\ftF \colon \SET \to \SET$ be a functor.
It is clear that if $\ftF$ preserves 1/4-mono pullbacks, then it preserves inverse images.
To see that the converse statement holds, suppose that $\ftF$ preserves inverse images and consider a cospan $X \xto{f} B \xfrom{g} Y$ in $\SET$.
Then a pullback of the cospan $(f,g)$ can be obtained by pasting the following pullbacks where the bottom horizontal arrows are given by the image factorization of $f$.
\begin{center}
  \begin{tikzcd} %
    P & P'   & Y \\
    X & f[X] & B
    \ar[from=1-1, to=2-2, phantom, very near start, "\lrcorner"]
    \ar[from=1-1, to=1-2]
    \ar[from=1-1, to=2-1, "m"']
    \ar[from=1-3, to=2-3, "g"]
    \ar[from=2-1, to=2-3, bend right, "f"']
    \ar[from=2-1, to=2-2, twoheadrightarrow]
    \ar[from=2-2, to=2-3, tail]
    \ar[from=1-2, to=2-3, phantom, very near start, "\lrcorner"]
    \ar[from=1-2, to=2-2, "m'"]
    \ar[from=1-2, to=1-3, tail]
  \end{tikzcd}
\end{center}
Moreover, if $m$ is injective, i.e., if we have a 1/4-mono pullback
(square), then by the way pullbacks are formed in~$\SET$, $m'$ is also
injective. Therefore, in this case, the pullback of $(f,g)$ is
preserved because $\ftF$ preserves inverse images. \qed

\subsection{Details of \autoref{expl:functors}}

\begin{itemize}[wide]
	\item To see that the functor $(-)^3_2$ preserves 1/4-mono pullbacks,  consider a pullback
	\begin{displaymath}
		\begin{tikzcd} %
			P        \ar[from=1-1, to=2-2, phantom, very near start, "\lrcorner"] & Y \\
			X & Z.
			\ar[from=1-1, to=1-2, "p_2"]
			\ar[from=1-1, to=2-1, tail, "p_1"']
			\ar[from=1-2, to=2-2, "g"]
			\ar[from=2-1, to=2-2, "f"']
		\end{tikzcd}
	\end{displaymath}
	Note that as $p_1 \colon P \rightarrowtail X$ is injective, for every $x \in X$ such that $f(x) \in g[Y]$ there is one and only one element $y \in Y$ such that $f(x) = g(y)$.
	Now, let $x = (x_1,x_2,x_3)$ and $y = (y_1,y_2,y_3)$ be elements of $X^3_2$ and $Y^3_2$, respectively, such that $(f(x_1),f(x_2),f(x_3)) = (g(y_1),g(y_2),g(y_3))$.
	Then, from the fact that $x$ consists of at most two elements of $X$, we conclude that $((x_1,y_1), (x_2,y_2), (x_3,y_3))$ consists of at most two elements of $P$ and it is clear that projecting this element to $X$ and $Y$ yields $x$ and $y$, respectively.

	\item To see that the monotone neighbourhood functor and the clique functor weakly preserve 4/4-epi pullbacks, consider a pullback
\begin{equation}
	\label{p:951}
		\begin{tikzcd} %
			P        \ar[from=1-1, to=2-2, phantom, very near start, "\lrcorner"] & Y \\
			X & Z.
			\ar[from=1-1, to=1-2, twoheadrightarrow, "p_2"]
			\ar[from=1-1, to=2-1, twoheadrightarrow, "p_1"']
			\ar[from=1-2, to=2-2, twoheadrightarrow, "g"]
			\ar[from=2-1, to=2-2, twoheadrightarrow, "f"']
		\end{tikzcd}
\end{equation}
Suppose that there are $\calA \in \calM X$ and $\calB \in \calM Y$ such that $\calM f (\calA) = \calM f (\calB)$.
We have to show that there is $\calE \in \calM P$ such that $\calM p_1 (\calE) = \calA$ and $\calM p_2 (\calE) = \calB$.
Put
\[
	\calE = \upc \{ p_1^\circ[A] \mid A \in \calA\} \cup \upc \{ p_2^\circ[B] \mid B \in \calB\},
\]
where, given $r \colon U \relto V$ and $C \subseteq U$, $r[C]$ denotes the relational image.
Then, it is clear that $\calE$ is monotone.
Moreover, as $p_1$ is surjective, $\calA \subseteq \calM p_1 (\calE)$, since for every $A \in \calA$, $A = (p_1 \cdot p_1^\circ)[A] \in \ftM p_1(\calE)$.
On the other hand, for every set $B \subseteq Y$, $(p_1 \cdot p_2^\circ)[B] = (f^\circ \cdot g) [B] = f^\circ [g [B]]$, since \ref{p:951} is a (weak) pullback.
In particular, for every $B \in \calB$ we obtain $(p_1 \cdot p_2^\circ)[B] \in \calA$, since $g[B] \in \calM g(\calB) = \calM f(\calA)$.
Thus, as $\calA$ is monotone, for every set $C \subseteq P$ such that $p_2^\circ[B] \subseteq C$, for some $B \in \calB$, $p_1[C] \in \calA$.
This means that, $\calA \supseteq \calM p_1 (\calE)$, and, hence, $\calM p_1 (\calE) = \calA$.
By analogous reasoning we obtain $\calM p_2(\calE) = \calB$.
Therefore, the monotone neighborhood functor weakly preserves 4/4-epi pullbacks.
Now, suppose that $\calA$ and $\calB$ are cliques.
Let $A \in \calA$ and $B \in \calB$, we show that $p_1^\circ[A] \cap p_2^\circ[B] \neq \varnothing$, the other cases follow from this one or from the fact that $\calA$ and $\calB$ are cliques.
Note that, as $f[A], g[B] \in \calM f(\calA) = \calM g(\calB)$ which is a clique, we have $f[A] \cap g[B] \neq \varnothing$.
Hence, there is $a \in A$ and $b \in B$ such that $f(a) = g(b)$ and, therefore, $(a,b) \in p_1^\circ[A] \cap p_2^\circ[B]$, by definition of pullback.
	\item To see that the functor $\ftF \colon \SET \to \SET$ that maps a set~$X$ to the free semigroup over~$X$ quotiented by the equation $x x x = x x$ does not preserve 4/4-epi pullbacks weakly, consider the following pullback, where $2=\{a,b\}$.
\begin{displaymath}
		\begin{tikzcd} %
			P        \ar[from=1-1, to=2-2, phantom, very near start, "\lrcorner"] & 2 \\
			2 & 1
			\ar[from=1-1, to=1-2, twoheadrightarrow, "p_2"]
			\ar[from=1-1, to=2-1, twoheadrightarrow, "p_1"']
			\ar[from=1-2, to=2-2, twoheadrightarrow, "!_2"]
			\ar[from=2-1, to=2-2, twoheadrightarrow, "!_2"']
		\end{tikzcd}
\end{displaymath}
Then,  $\ftF!_2(aba) = \ftF!_2(ab)$ but there is no element $p$ in $\ftF P$ such that $\ftF p_1(p) = aba$ and $\ftF p_2(p) = ab$ since the words have different length and do not contain the pattern $xx$.
\end{itemize}

\subsection{Proof of \autoref{p:62}}

	Note that \(g^\circ \cdot f \geq g'^\circ \cdot f'\) means precisely that for all \(x \in X\) and \(y \in Y\), if \(h \cdot f(x) = f'(x) = g'(y) = h \cdot g(y)\), then \(f(x) = g(y)\).
	\begin{enumerate}[wide]
		\item Suppose that \(g^\circ \cdot f \geq g'^\circ \cdot f'\). Let \(a' \in f'[X]
		      \cap g'[Y]\), that is, we have \(x \in X\) and \(y \in Y\) such that \(h \cdot
		      f(x) = f'(x) = g'(y) = h \cdot g'(y)\). Then \(a: = f(x) = g(y) \in f[X] \cap
		      g[Y]\) and \(h(a) = a'\), so \(h'\) is surjective. On the other hand, let
		      \(a_1, a_2 \in f[X] \cap g[Y]\) such that \(h'(a_1) = h'(a_2)\). Then there are
		      \(x_1, x_2 \in X\) and \(y_1,y_2 \in Y\) such that \(a_1 = f(x_1) = g(y_1)\),
		      \(a_2 = f(x_2) = g(y_2)\) and, hence, in particular we obtain \(h \cdot f(x_1)
		      = h \cdot g(y_2)\). Therefore, \(a_1 = f(x_1) = g(y_2) = a_2\).
		\item Let \(a\in A\) with \(h(a) \in f'[X] \cap g'[Y]\). Then there are \(x \in X\)
		      and \(y \in Y\) such that \(h(a) = h \cdot f(x) = h \cdot g(y)\). Moreover,
		      since the cospan \(X\xrightarrow{\;f\;}A\xleftarrow{\;g\;}Y\)
		      is epi, w.l.o.g, there is \(x' \in X\) such that \(f(x') = a\). Hence, \(h
		      \cdot f (x') = h(a) = h \cdot g(y)\). Therefore, as \(g^\circ \cdot f \geq
		      g'^\circ \cdot f'\), \(a = f(x') = g(y) \in f[X] \cap g[Y]\).
		\item Let \(x \in X\) and \(y \in Y\) such that \(f'(x)=h \cdot f(x) = h \cdot
		      g(y)=g'(y)\). Since \eqref{d:eq:1} is a pullback,
		      it follows that $f(x),g(y)\in f[X] \cap g[Y]$, and since~$h'$ is injective, we
		      obtain $f(x)=g(y)$. \qed
	\end{enumerate}

\subsection{Proof of \autoref{p:64}}

	\begin{itemize}[wide]
		\item
		      \ref{p:49} \(\Rightarrow\) \ref{p:52}.
		      Let \(g^\circ \cdot f \colon X \relto Y\) be a difunctional relation determined by a cospan
		      $X\xto{f} A\xfrom{g} Y$. Consider the pushout $X\xto{p_1} O\xfrom{p_2} Y$
		      of the pullback of the cospan $X\xto{f} A\xfrom{g} Y$.
		      Note that every cospan that determines the relation \(g^\circ \cdot f\) gives rise to the
		      same pushout.  Furthermore, by \autoref{p:48},
		      \(p_2^\circ \cdot p_1 = g^\circ \cdot f\), and, hence, to show
		      that the claim holds it suffices to show
		      \((\ftF p_2)^\circ \cdot \ftF p_1 = (\ftF g)^\circ \cdot \ftF
		      f\).
		      By the universal property of $(p_1,p_2)$ as a pushout, we
		      have $h$ such that $h\cdot p_1=f$ and $h\cdot p_2=g$.  The
		      inequality
		      \((\ftF p_2)^\circ \cdot \ftF p_1 \leq (\ftF g)^\circ \cdot \ftF
		      f\) is then immediate from $\ftF h\cdot \ftF p_1=\ftF f$ and
		      $\ftF h\cdot \ftF p_2=\ftF g$.
		      To see the inequality \((\ftF p_2)^\circ \cdot \ftF p_1 \geq (\ftF g)^\circ \cdot \ftF f\) we consider first the case where \(g^\circ \cdot f\) is non-empty.
		      Note that, as \((p_1,p_2)\) is an epicocone, by \autoref{p:62} the inequality \(p_2^\circ \cdot p_1 \geq g^\circ \cdot f\) entails that we have the following pullback square
		      \begin{center}
			      \begin{tikzcd}[column sep=small]
				      p_1[X] \cap p_2[Y] & f[X] \cap g[Y] \\
				      O                  & A,
				      \ar[from=1-1, to=1-2, "\simeq"]
				      \ar[from=1-1, to=2-1, tail, "i_O"']
				      \ar[from=1-2, to=2-2, tail, "i_A"]
				      \ar[from=2-1, to=2-2, "h"']
				      \ar[from=1-1, to=2-2, very near start, "\lrcorner", phantom]
			      \end{tikzcd}
		      \end{center}
		      where $i_O$ and $i_A$ are the corresponding inclusions into $O$ and $A$, respectively.
		      Hence, since \(\ftF\) is 1/4-iso preserving, its image under \(\ftF\) is also a pullback.
		      Moreover, as $\SET$-functors preserve epimorphisms, by applying $\ftF$ to the commutative diagram
		      \begin{center}
			      \begin{tikzcd}[column sep=small]
			      	P &                    & Y \\
			      	  & p_1[X] \cap p_2[Y] & p_2[Y] \\
				      X & p_1[X]             & O
				      \ar[from=1-1, to=2-2, very near start, "\lrcorner", phantom]
				      \ar[from=1-1, to=2-2, dotted, twoheadrightarrow, bend left=20]
				      \ar[from=1-1, to=1-3]
				      \ar[from=1-1, to=3-1]
				      \ar[from=2-2, to=3-2, tail]
				      \ar[from=2-2, to=2-3, tail]
				      \ar[from=2-2, to=3-3, "i_O"', bend left=15]
				      \ar[from=2-3, to=3-3, tail]
				      \ar[from=3-2, to=3-3, tail]
				      \ar[from=2-2, to=3-3, very near start, "\lrcorner", phantom]
				      \ar[from=1-3, to=2-3, twoheadrightarrow]
				      \ar[from=1-3, to=3-3, bend left=45, "p_2"]
				      \ar[from=3-1, to=3-2, twoheadrightarrow]
				      \ar[from=3-1, to=3-3, bend right, "p_1"']
			      \end{tikzcd}
		      \end{center}
		      we conclude that $\ftF i_O \colon \ftF(p_1[X] \cap p_2[Y]) \to \ftF O$ corestricts to $\ftF p_1[\ftF X] \cap \ftF p_2[\ftF Y]$.
		      And, as $p_1[X] \cap p_2[Y]$ is non-empty since the relation $p_2 \cdot p_1^\circ$ is non-empty, $\ftF i_O$ is a monomorphism because every $\SET$-functor preserves monomorphisms with non-empty domain.
		      On the other hand, as $p_1[X] \cap p_2[Y]$ in non-empty and every $\SET$-functor preserves non-empty intersections \cite{Trn69}, we have $\ftF(p_1[X] \cap p_2[Y]) \simeq \ftF(p_1[X]) \cap \ftF(p_2[Y]) \simeq \ftF p_1[\ftF X] \cap \ftF p_2[\ftF Y]$, with the second isomorphism holding due to the fact that for every function $q \colon X \to Y$ with non-empty domain, the sets $\ftF(q[X])$ and $\ftF q[\ftF X]$ are isomorphic because each of them is the codomain of an epimorphism and the domain of a monomorphism of an epi-mono factorizations of $\ftF q$.
		      Hence, $\ftF i_O \colon \ftF (p_1[X] \cap p_2[Y]) \to \ftF O$ corestricts to an isomorphism $\ftF i_O \colon \ftF (p_1[X] \cap p_2[Y]) \to \ftF p_1[\ftF X] \cap \ftF p_2[\ftF Y]$.
		      And, by analogous reasoning for the morphism $\ftF i_A \colon \ftF (f[X] \cap g[Y]) \to \ftF A$, we obtain the commutative diagram
		      \begin{center}
			      \begin{tikzcd}[column sep=small]
			      	\ftF (p_1[X] \cap p_2[Y])              & \ftF (f[X] \cap g[Y])            \\
			      	\ftF p_1[\ftF X] \cap \ftF p_2[\ftF Y] & \ftF f[\ftF X] \cap \ftF g[\ftF Y] \\
				      \ftF O                                 & \ftF A.
				      \ar[from=1-1, to=1-2, "\simeq"', "\ftF h'"]
				      \ar[from=1-1, to=2-1, "\simeq"]
			        \ar[from=1-1, to=3-1, tail, bend right=70, "\ftF i_O"', end anchor={[yshift=-2ex]north west}]
			        \ar[from=1-2, to=3-2, tail, bend left=70, "\ftF i_A", end anchor={[yshift=-2ex]north east}]
			        \ar[from=1-2, to=2-2, "\simeq"']
			        \ar[from=2-1, to=2-2, dotted, "\simeq"]
			        \ar[from=2-1, to=3-1, tail]
			        \ar[from=2-2, to=3-2, tail]
			        \ar[from=3-1, to=3-2, "\ftF h"']
			      \end{tikzcd}
		      \end{center}

		      Thus, as the outer square is a pullback, the square
		      \begin{center}
			      \begin{tikzcd}[column sep=small]
				      \ftF p_1 [\ftF X] \cap \ftF p_2 [\ftF Y] & \ftF f[\ftF X] \cap \ftF g[\ftF Y] \\
				      \ftF O                                   & \ftF A
				      \ar[from=1-1, to=1-2, "\simeq"]
				      \ar[from=1-1, to=2-1, tail]
				      \ar[from=1-2, to=2-2, tail]
				      \ar[from=2-1, to=2-2, "\ftF h"']
				      \ar[from=1-1, to=2-2, very near start, "\lrcorner", phantom]
			      \end{tikzcd}
		      \end{center}
		      is a pullback, where the top morphism is given by restricting $\ftF h$ to $\ftF p_1 [\ftF X] \cap \ftF p_2 [\ftF Y]$.
		      Therefore, from \autoref{p:62}(\ref{item:difun-incl}), \((\ftF p_2)^\circ \cdot \ftF p_1 \geq (\ftF g)^\circ \cdot \ftF f\).
		      Now suppose that \(g^\circ \cdot f\) is empty.
		      Consider the functions \(f_{+1} \colon X+1 \to A+1\) and \(g_{+1} \colon Y+1 \to A+1\) that are defined as \(f\) and \(g\) on \(X\) and \(Y\), respectively, and send the element added to \(X\) and \(Y\), respectively, to the element added to \(A\), and the functions \(p_{1,+1} \colon X+1 \to O+1\) and \(p_{2,+1} \colon Y+1 \to O+1\) defined analogously.
		      Then, \(p_{2,+1}^\circ \cdot p_{1,+1} = g_{+1}^\circ \cdot f_{+1}\) is non-empty.
		      Hence, by the previous argument, \((\ftF p_{2,+1})^\circ \cdot \ftF p_{1,+1} = (\ftF g_{+1})^\circ \cdot \ftF f_{+1}\).
		      Now, let \(\fx \in \ftF X\), \(\fy \in \ftF Y\) such that \(\ftF f(\fx) = \ftF g(\fy)\).
		      Then, as the diagram
		      \begin{displaymath}
			      \begin{tikzcd}%
				      X & X + 1 \\
				      A & A + 1 \\
				      Y & Y + 1
				      \ar[from=1-1, to=1-2, tail, "i_X"]
				      \ar[from=1-1, to=2-1, "f"']
				      \ar[from=1-2, to=2-2, "f_{+1}"]
				      \ar[from=2-1, to=2-2, tail]
				      \ar[from=3-1, to=2-1, "g"]
				      \ar[from=3-1, to=3-2, tail, "i_Y"']
				      \ar[from=3-2, to=2-2, "g_{+1}"']
			      \end{tikzcd}
		      \end{displaymath}
		      commutes (with the horizontal arrows denoting coprojections), \(\ftF p_{1,+1}(\ftF i_X(\fx)) = \ftF p_{2,+1} (\ftF i_Y(\fy))\).
		      Hence, as the diagram
		      \begin{displaymath}
			      \begin{tikzcd}%
				      X     & X + 1     \\
				      O     & O + 1 \\
				      Y     & Y + 1
				      \ar[from=1-1, to=1-2, tail, "i_X"]
				      \ar[from=1-1, to=2-1, "p_1"']
				      \ar[from=1-2, to=2-2, "p_{1,+1}"]
				      \ar[from=2-1, to=2-2, tail, "i_O"]
				      \ar[from=3-1, to=2-1, "p_2"]
				      \ar[from=3-1, to=3-2, tail, "i_Y"']
				      \ar[from=3-2, to=2-2, "p_{2,+1}"']
			      \end{tikzcd}
		      \end{displaymath}
		      commutes (with the horizontal arrows denoting coprojections), \(\ftF i_O (\ftF p_1(\fx)) = \ftF i_O (\ftF p_2 (\fy))\).
		      Therefore, as \(\ftF\) preserves monomorphisms \(\ftF p_1(\fx) = \ftF p_2 (\fy)\) which entails \((\ftF p_2)^\circ \cdot \ftF p_1 \geq (\ftF g)^\circ \cdot \ftF f\).
		\item \ref{p:52} \(\Leftrightarrow\) \ref{p:110} The implication \ref{p:110} \(\Rightarrow\) \ref{p:52} is trivial.
		      To show \ref{p:52} \(\Rightarrow\) \ref{p:110}, let  \(g^\circ \cdot f \colon X \relto Y\) and \(g'^\circ \cdot f' \colon X \relto Y\) be difunctional relations, given by cospans
		      $X\xto{f} O\xfrom{g} Y$ and $X\xto{f'} A\xfrom{g'} Y$ respectively, such that \(g^\circ \cdot f \leq g'^\circ \cdot f'\).
		      Since \(\ftF\) is well-defined on difunctional relations, by \autoref{p:48} we can assume that the cospan $X\xto{f} O\xfrom{g} Y$ is the pushout of its pullback
		      $X\xto{\pi_1} R\xfrom{\pi_2} Y$.
		      Then, the condition \(g^\circ \cdot f \leq g'^\circ \cdot f'\) entails \(f' \cdot \pi_1 = g' \cdot \pi_2\).
		      Hence, by the pushout property, there is a map \(h \colon O \to A\) such that \(f' =  h \cdot f\) and \(g'= h \cdot g\).
		      Therefore, \((\ftF g)^\circ \cdot \ftF f \leq (\ftF g)^\circ \cdot (\ftF h)^\circ \cdot \ftF h \cdot \ftF f = (\ftF g')^\circ \cdot \ftF g'\).

		\item \ref{p:52} \(\Rightarrow\) \ref{p:10}.
		      Consider a pullback square of the form
		      \begin{equation*}
			      \begin{tikzcd}%
				      P & Y \\
				      X & Z.
				      \ar[from=1-1, to=2-2, phantom, very near start, "\lrcorner"]
				      \ar[from=1-1, to=1-2, "\simeq"]
				      \ar[from=1-1, to=1-2, "j"']
				      \ar[from=1-1, to=2-1, "i"']
				      \ar[from=1-2, to=2-2, "l"]
				      \ar[from=2-1, to=2-2, "f"']
			      \end{tikzcd}
		      \end{equation*}
		      Then, \(i \cdot j^{-1} = i \cdot j^\circ = f^\circ \cdot l\), so \(1_X^\circ \cdot (i \cdot j^{-1}) = f^\circ \cdot l\).
		      Hence, as \(\ftF\) is well-defined on difunctional relations, \((\ftF 1_X)^\circ \cdot \ftF(i\ \cdot j^{-1}) = (\ftF f)^\circ \cdot \ftF l\).
		      Thus, \(\ftF i \cdot (\ftF j)^\circ = \ftF i \cdot (\ftF j)^{-1} = \ftF(i \cdot j^{-1}) = (\ftF f)^\circ \cdot \ftF l\), i.e, \((\ftF i, \ftF j)\) is a weak-pullback of \((\ftF f, \ftF l)\).
		\item \ref{p:10} \(\Rightarrow\) \ref{p:49} trivial.
		      \qed
	\end{itemize}

\subsection{Proof of \autoref{p:34}}

	We begin by showing that, given a relax extension \(\eR\) of \(\ftF\), \(\laxif{\eR} \colon \REL \to \REL\) is a lax extension of \(\ftF\).
	\begin{itemize}[wide]
		\item[\ref{p:61}] Trivial.
		\item[\ref{p:26}] Let \(r \colon X \relto Y\) and \(s \colon Y \relto X\) be relations.
		      Moreover, suppose that \(r_1,\ldots, r_m\) and \(s_1, \ldots, s_n\) are finite sequences of relations such that \(r_1 \cdot \ldots \cdot r_m \leq r\) and \(s_1 \cdot \ldots \cdot s_n \leq s\).
		      Then, \(s_1,\ldots,s_n,r_1,\ldots,r_m\) is a finite sequence of relations such that \(s_n \cdot \ldots \cdot s_1 \cdot r_m \cdot \ldots \cdot r_1 \leq s \cdot r\).
		      Therefore, as relational composition preserves suprema in each variable,
		      \begin{align*}
			      \laxif{\eR} s \cdot\laxif{\eR} r & = \bigvee_{\substack{s_1, \ldots, s_n:    \\ s_n \cdot \ldots \cdot s_1 \leq s}} \bigvee_{\substack{r_1, \ldots, r_m: \\ r_m \cdot \ldots \cdot r_1 \leq r}} \eR s_n \cdot \ldots \cdot \eR s_1 \cdot\eR r_m \cdot \ldots \cdot \eR r_1 \\*
			                                       & \leq \bigvee_{\substack{t_1, \ldots, t_k: \\ t_k \cdot \ldots \cdot t_1 \leq s \cdot r}} \eR t_k \cdot \ldots \cdot \eR t_1 = \laxif{\eR} (s \cdot r).
		      \end{align*}
		\item[\ref{p:0}] Trivial.
	\end{itemize}
	Now, it is clear that \(\laxif{(-)} \colon \ReLaxF \to \LaxF\) is a monotone
	map and that the laxification of a relax extension produces a relax extension
	that is greater than or equal to the starting one, and that, since lax extensions
	preserve composition laxly, equality is attained precisely when the starting
	relax extension is a lax extension. Furthermore, suppose that \(\eR\) preserves
	converses. Let \(r \colon X \relto Y\) be a relation. Then, since \(r_1,
	\ldots, r_n\) is a composable sequence of relations such that \(r_n \cdots r_1
	\leq r\) iff \(s_1 = r_n^\circ, \ldots, s_n = r_1^\circ\) is a composable
	sequence of relations such that \(s_n \cdots s_1 \leq r^\circ\) and \(\eR\)
	preserves converses, we obtain:
	\begin{align*}
		(\laxif{\eR}r)^\circ & = \bigvee_{\substack{r_1, \ldots, r_n: \\ r_n \cdot \ldots \cdot r_1 \leq r}} (\eR r_n \cdot \ldots \cdot \eR r_1)^\circ \\
		                     & = \bigvee_{\substack{r_1, \ldots, r_n: \\ r_n \cdot \ldots \cdot  r_1 \leq r}} \eR r_1^\circ \cdot \ldots \cdot \eR r_n^\circ \\
		                     & = \bigvee_{\substack{s_1, \ldots, s_n: \\ s_n \cdots \ldots \cdot s_1 \leq r^\circ}} \eR s_n \cdot \ldots \cdot \eR s_1\\
		                     & = \laxif{\eR} (r^\circ) \tag*{\qed}
	\end{align*}

\subsection{Proof of \autoref{p:605}}

	Let $r \colon X_0 \relto X_n$ be a relation, and let $r_1, \ldots, r_n$ be a composable sequence of relations such that $r_n \cdot \ldots \cdot r_1 \leq r$.
	For $i=1,\ldots,n$, let $X_{i-1} \xfrom{\pi_i} R_i \xto{\rho_i} X_i$
	be a span in $\SET$ such that $r_i = \rho_i \cdot \pi_i^\circ$ and let $X_0 \xfrom{\pi_r} R \xto{\rho_r} X_n$ be a span in $\SET$ such that $r = \rho_r \cdot \pi_r^\circ$.
	We construct a sequence $r_1', \ldots, r_n'$ with the desired properties as follows.
	If $n=1$, we just take $r'_1 = r$, otherwise, with $[f,g] \colon X + Y \to Z$ denoting the copairing of $f \colon X \to Z$ and $g \colon Y \to Z$,
	\begin{itemize}[wide]
		\item $r'_1$ is given by the span $X_0 \xfromlong{[\pi_1,\pi_r]} R_1 + R \xtolong{\rho_1 + 1_R} X_1 + R$;
		\item for $i=2,\ldots,n-1$, $r'_i$ is given by the span $X_{i-1} + R \xfromlong{\pi_i + 1_R} R_i + R \xtolong{\rho_i + 1_R} X_i + R$;
		\item $r_n \colon (X_{n-1} + R) \relto X_n$ is given by the span $X_{n-1} \xfromlong{\pi_n + 1_R} R_n + R \xtolong{[\rho_n,\rho_r]} X_n$.
	\end{itemize}
	Then it is clear that by construction we have $r_n' \cdot \ldots \cdot r_1'=r$ and, hence, the claim follows from the fact that the diagram below commutes
	\begin{center}
		\begin{tikzcd}[column sep=small]
			X_0 & R_1     & X_1     & \ldots & X_{n-1}     & R_n    & X_n, \\
				  & R_1 + R & X_1 + R & \ldots & X_{n-1} + R & R_n+ R &
			\ar[from=1-2, to=1-1, "\pi_1"']
			\ar[from=1-2, to=1-3, "\rho_1"]
			\ar[from=1-6, to=1-5, "\pi_n"']
			\ar[from=1-6, to=1-7, "\rho_n"]
			\ar[from=2-2, to=1-1, "{[\pi_1,\pi_r]}"]
			\ar[from=1-2, to=2-2, tail]
			\ar[from=2-2, to=2-3, "\rho_1 + 1_R"']
			\ar[from=1-3, to=2-3, tail]
			\ar[from=1-5, to=2-5, tail]
			\ar[from=2-6, to=2-5, "\pi_n + 1_R"]
			\ar[from=1-6, to=2-6, tail]
			\ar[from=2-6, to=1-7, "{[\rho_n,\rho_r]}"']
		\end{tikzcd}
	\end{center}
	where the vertical arrows denote the corresponding coprojections. \qed

\subsection{Proof of \autoref{p:300}}

	Note that for all relations $r_1 \colon X \relto Y$ and $r_2 \colon Y \relto X$ such that $r_1 = \rho_1 \cdot \pi_1^\circ$ and $r_2 = \rho_2 \cdot \pi_2^\circ$, for spans $X \xfrom{\pi_1} R_1 \xto{\rho_1} Y$ and $Y \xfrom{\pi_2} R_2 \xto{\rho_2} X$ in $\SET$, $r_2 \cdot r_1 \leq 1_X \iff \pi_2^\circ \cdot \rho_1 \leq \rho_2^\circ \cdot \pi_1$.
	Therefore, the equivalence between \ref{p:601} and \ref{p:602} follows from the fact that preserving 1/4-iso pullbacks is equivalent to being monotone on difunctional relations ( \autoref{p:64}).
	The equivalence between \ref{p:602} and \ref{p:603} is an immediate consequence of \autoref{p:605}.
\qed

\subsection{Proof of \autoref{p:610}}

Let $X_1 \xto{\pi_2'} P \xfrom{\rho_2'}$ be the pushout of the cospan $(p_1,p_2)$.
So, $\hat r_2 = \rho_2' \cdot \pi_2^\circ$ and, as $\ftF$ is monotone and $p_1 \cdot \pi_2' = p_2 \cdot \rho_2'$,  $\ftbF r_3 \cdot \ftbF r_2 \cdot \ftbF r_1 \leq \ftbF r_3 \cdot \ftbF \hat r_2 \cdot \ftbF r_1 \leq \ftbF r_3 \cdot (\ftF p_2)^\circ \cdot \ftF p_1 \cdot \ftbF r_1 = \ftbF r_3' \cdot \ftbF r_1'$.

\subsection{Proof of \autoref{p:19}}

For \(i = 1,2,3\), let \(\rho_{i} \cdot {\pi_{i}}^\circ\) be the canonical factorization of \(r_i\).
We will show that ``splitting the elements'' of the domain of \(r_2\) that do not belong to the codomain of \(r_1\) and the elements of the codomain of \(r_2\) that do not belong to the domain of \(r_3\) yields a sequence of relations with the desired properties.
W.l.o.g.\ assume that \(Y \cap R_2=\emptyset\) and \(Z \cap R_2 = \emptyset\), and consider \(Y' = X_1 \cup (R_2\setminus \pi_2^{-1}[\rho_1[R_1]])\), \(Z' = X_2 \cup (R_2 \setminus \rho_2^{-1}[\pi_3[R_3]])\), which are then disjoint unions.
Consider the functions \(f \colon Y' \to Y\) and \(g \colon Z' \to Z\) that act identically on $Y$ and $Z$ and as \(\pi_2\) and \(\rho_2\) on \(R_2 \setminus \pi_2^{-1}[\rho_1[R_1]]\) and \(R_2 \setminus \rho_2^{-1}[\pi_3[R_3]]\) respectively.
Moreover, let \(p \colon R_2 \to Y'\) be the function that sends \((y,z) \in R_2\) to \(y \in Y\) if \(y \in \cod(r_1)\) and acts identically otherwise, and let \(q \colon R_2 \to Z'\) be the function that sends \((y,z) \in R_2\) to \(z \in Z\) if \(z \in \dom(r_3)\) and acts identically otherwise.
Then, we have the following commutative diagram
	\begin{center}
		\begin{tikzcd}%
			X & R_1 & Y & Y'   & R_2 & Z'   & Z & R_3 & W \\
			X & R_1 &     & Y    & R_2 & Z    &     & R_3 & W,
			\ar[from=1-2, to=1-1, "\pi_1"']
			\ar[from=1-1, to=2-1, equal]
			\ar[from=1-2, to=1-3, "\rho_1"]
			\ar[from=1-2, to=1-4, bend left, "\rho_1'"]
			\ar[from=1-2, to=2-2, equal]
			\ar[from=1-3, to=1-4, tail]
			\ar[from=1-2, to=2-4, phantom, "\lrcorner", very near start]
			\ar[from=1-4, to=2-4, "f"]
			\ar[from=1-8, to=2-6, phantom, "\llcorner", very near start]
			\ar[from=1-5, to=1-4, "p"']
			\ar[from=1-5, to=2-5, equal]
			\ar[from=1-5, to=1-6, "q"]
			\ar[from=2-2, to=2-1, "\pi_1"]
			\ar[from=2-2, to=2-4, "\rho_1"']
			\ar[from=2-5, to=2-4, "\pi_2"]
			\ar[from=2-5, to=2-6, "\rho_2"']
			\ar[from=1-6, to=2-6, "g"]
			\ar[from=1-7, to=1-6, tail]
			\ar[from=1-8, to=2-8, equal]
			\ar[from=2-8, to=2-6, "\pi_3"]
			\ar[from=2-8, to=2-9, "\rho_3"']
			\ar[from=1-8, to=1-7, "\pi_3"']
			\ar[from=1-8, to=1-6, bend right, "\pi_3'"']
			\ar[from=1-8, to=1-9, "\rho_3"]
			\ar[from=1-9, to=2-9, equal]
		\end{tikzcd}
	\end{center}
	where the arrows $Y \rightarrowtail Y'$ and $Z \rightarrowtail Z'$ denote inclusions.
	Since the second and the fourth squares are pullbacks, we obtain equations
	\(\rho_1' = f^\circ \cdot \rho_1\) and \(\pi_3'^\circ = \pi_3^\circ \cdot g\).
	Hence,
	\begin{align*}
		\rho_3 \cdot \pi_3^\circ \cdot \rho_2 \cdot \pi_2^\circ \cdot \rho_1 \cdot \pi_1^\circ & = \rho_3 \cdot \pi_3^\circ \cdot g \cdot q \cdot p^\circ \cdot f^\circ \cdot \rho_1 \cdot \pi_1^\circ \\
		                                                                                       & = \rho_3 \cdot \pi_3'^\circ \cdot q \cdot p^\circ \cdot \rho_1' \cdot \pi_1^\circ.
	\end{align*}
	Moreover, as \(\ftF\) is 1/4-iso pullback preserving, by applying \(\ftF\) to
	the commutative diagram above and reasoning analogously, we have
	\begin{align*}
		\ftbF r_3 \cdot \ftbF r_2 \cdot \ftbF r_1 & = \ftF \rho_3 \cdot (\ftF \pi_3)^\circ \cdot \ftF \rho_2 \cdot (\ftF \pi_2)^\circ \cdot \ftF \rho_1 \cdot (\ftF \rho_1)^\circ \\
		                                          & = \ftF \rho_3 \cdot (\ftF \pi_3')^\circ \cdot \ftF q \cdot (\ftF p)^\circ \cdot \ftF \rho_1 \cdot (\ftF \pi_1)^\circ .
	\end{align*}
	Therefore, with \(r_1' = \rho_1' \cdot \pi_1^\circ\), \(r'_2 = q \cdot
	p^\circ\) and \(r_3' = \rho_3 \cdot \pi_3'\),
	\[
		\ftbF r_3 \cdot \ftbF r_2 \cdot \ftbF r_1 = \ftbF r_3' \cdot \ftbF r_2' \cdot \ftbF r_1.
	\]
	Note that as \(\cod(r_1') = \cod(r_1)\) and \(\dom(r_3') = \dom(r_3)\), by
	construction, for all \(y,y' \in X_1\) and \(z,z' \in X_2\):
	\begin{enumerate}
		\item if \(y \neq y'\), \(y \mathrel{r'_2} z\) and \(y'\mathrel{r'_2} z\), then \(z
		      \in \dom(r'_3)\), and
		\item if \(z \neq z'\), \(y \mathrel{r'_2} z\) and \(y \mathrel{r'_2} z'\), then \(y
		      \in \cod(r'_1)\).
		      \qed
	\end{enumerate}

\subsection{Proof of \autoref{p:301}}

	\begin{itemize}[wide]
		\item \ref{p:701} $\Rightarrow$ \ref{p:702}
		Let $r_1 \colon X \relto Y$, $r_2 \colon Y \relto Z$ and $r_3 \colon Z \relto X$ be relations such that $r_3 \cdot r_2 \cdot r_1 \leq 1_X$.
		By \autoref{p:19} we can assume w.l.o.g that for all \(y,y'\) in \(X_1\) and \(z,z' \in X_2\):
		\begin{enumerate}
			\item \label{p:500} if \(y \neq y'\), \(y \mathrel{r_2} z\) and \(y'\mathrel{r_2} z\), then \(z \in \dom(r_3)\), and
			\item \label{p:501} if \(z \neq z'\), \(y \mathrel{r_2} z\) and \(y \mathrel{r_2} z'\), then \(y \in \cod(r_1)\).
		\end{enumerate}

		Let \(\hat{r}_2 \colon X_1 \relto X_2\) denote the difunctional closure of \(r_2\).
		We claim that \(r_3 \cdot \hat{r}_2 \cdot r_1 \leq 1_X\).
		Let \(x\mathrel{r_1}y_0\mathrel{r_2} z_1\mathrel{r_2^\circ} y_1 \mathrel{r_2} z_2 \dots \mathrel{r_2^\circ} y_{n-1}\mathrel{r_2} z_n\mathrel{r_3}x'\), with \(n \ge 1\);
		we have to show that \(x=x'\).
		We assume w.l.og that for \(i=0,\dots,n-2\), \(y_i \neq y_{i+1}\), and for \(j=1,\dots,n-1\), \(z_j \neq z_{j+1}\).
		The reason is that whenever \(y_i = y_{i+1}\), correspondingly $z_j = z_{j+1}$, we can remove $y_i \mathrel{r_2} z_{i+1}$ and $z_{i+1} \mathrel{r_2^\circ} y_{i+1}$ from the chain of related elements and still obtain a chain of related elements from $x$ to $x'$.

		We proceed by induction on~\(n\), with the base case \(n=1\) holding because $r_3 \cdot r_2 \cdot r_1 \leq 1_X$, by hypothesis.
		Let \(x\mathrel{r_1}y_0\mathrel{r_2} z_1\mathrel{r_2^\circ} y_1 \mathrel{r_2} z_2 \dots \mathrel{r_2^\circ} y_{n-1}\mathrel{r_2} z_n\mathrel{r_3}x'\).
		Since $n \geq 2$, we have $y_0 \neq y_1$ by assumption, and $y_0 \mathrel{r_2} z_1$ and $y_1 \mathrel z_1$.
		Hence, by item \ref{p:500} and the fact that \(r_3 \cdot r_2 \cdot r_1 \leq 1_X\), we obtain \(z_1 \mathrel{r_3} x\).
		This entails, by analogous reasoning using item \ref{p:501}, that \(x \mathrel{r_1} y_1\).
	Thus, we obtain a chain of related elements \(x\mathrel{r_1} y_1 \mathrel{r_2} z_2 \dots \mathrel{r_2^\circ} y_{n-1}\mathrel{r_2} z_n\mathrel{r_3}x'\);
	so
	\(x=x'\) by the inductive hypothesis.
	Therefore, by \autoref{p:610} and \autoref{p:300} we obtain $\ftbF r_3 \cdot \ftbF r_2 \cdot \ftbF r_1  \leq 1_{\ftF X}$.
	\item \ref{p:702} $\Rightarrow$ \ref{p:703} Immediate consequence of \autoref{p:605}.
	\item \ref{p:703} $\Rightarrow$ \ref{p:701} Immediate consequence of \autoref{p:300} since $\ftbF$ is normal. \qed
	\end{itemize}

\subsection{Details of \autoref{p:920}}

	The symmetric-reflexive closure of $\sim$ is already transitive. The property
	that~$\ftF$ preserves 1/4-iso pullbacks can be equivalently formulated as follows: for every rank-1 term~$t$,
	$t[x/x_1,\ldots,x/x_n] \sim t[x/x_1,\ldots,x/x_n,z_1/y_1,\ldots,z_m/y_m]$ implies $t\sim t[z_1/y_1,\ldots,z_m/y_m]$
	where $\{x_1,\ldots,x_n,y_1,\ldots,y_m\}$ are all the variables of $t$, and
	$t$ being rank-1 means that it contains precisely one occurrence either of $f$ or of $g$.
	Preservation of 1/4-iso pullbacks then follows by case-by-case analysis.
	That $\ftbF r_4 \cdot \ftbF r_3 \cdot \ftbF r_2 \cdot \ftbF r_1 \not\leq 1_X$  follows from the fact that
	we can define $h(x,y) = f(x,x,x,y,y)$, $u(x,y) = g(x,x,y,y,y)$, and then $h(x,y)\nsim u(x,y)$,
	which easily follows by inspecting the above clauses that define $\sim$. \qed

\subsection{Proof of \autoref{p:600}}

Clearly, the condition is necessary due to \autoref{cor:least_flat} and to see that is also sufficient first we note that by \autoref{p:301} it suffices to consider composable sequences of four or more relations.
Now, let \(r_1, \ldots, r_n\) be a composable sequence of relations such that $n \geq 4$, \(r_n \cdot \ldots \cdot r_1 = 1_X\), and \(r_i = X_{i-1} \relto X_i\) for \(i=1,\ldots,n\) and \(X_0 = X_n = X\).
We  show that this sequence admits a Barr upper bound such that all relations other than the first and the last are total and surjective.
Then, the claim follows by \autoref{cor:least_flat}.

For simplicity of notation let us assume that \(0,1 \notin X_i\), for \(i=2, \ldots, n-1\), and let \(X'_i = X_i \cup \{0,1\}\) for \(i=2,\ldots,n-1\).
Consider the sequence of relations \(r_1', \ldots, r_n'\) defined as follows:
\begin{itemize}[wide]
	\item The elements related by $r_1' \colon X \relto X_1'$ and $r_n' \colon X_{n-1}' \relto X$ are precisely the ones related by $r_1$ and $r_n$, respectively.
	\item For $i=2,\ldots,n-1$,  $r'_i \colon X'_{i-1}\relto X_i'$ consist of the following pairs
		\begin{itemize}[wide]
			\item \((0, x)\) if \(x=0\) or \(x \in X_i\setminus\cod(r_i)\);
			\item \((x, 1)\) if \(x=1\) or \(x \in X_{i-1}\setminus\dom(r_i)\);
			\item \((x,y)\)  if \(x \mathrel{r_i} y\).
		\end{itemize}
\end{itemize}

Then, by construction, for $i=2, \ldots, n-1$, \(r'_i\) is total and surjective and, with $\rho_i \cdot \pi_i^\circ$ and $\rho'_i \cdot \pi_i'^\circ$ denoting the canonical factorizations of $r_i$ and $r_i'$, respectively, the following diagram where the vertical arrows denote inclusions commutes

\begin{center}
	\begin{tikzcd}[column sep=small]
		X & R'_1 & X'_1 & R_2' & X'_2 &\ldots &X'_{n-2} & R_{n-1}' & X'_{n-1} & R_n' & X \\
		X & R_1 & X_1 & R_2 & X_2 & \ldots & X_{n-2} & R_{n-1} & X_{n-1} & R_n & X
		\ar[from=1-2, to=1-1, "\pi_1'"', twoheadrightarrow]
		\ar[from=1-2, to=1-3, "\rho_1'"]
		\ar[from=2-1, to=1-1, equal]
		\ar[from=2-2, to=1-2, equal]
		\ar[from=2-3, to=1-3, tail]
		\ar[from=2-5, to=1-5, tail]
		\ar[from=1-4, to=1-3, "\pi_2'"', twoheadrightarrow]
		\ar[from=1-4, to=1-5, "\rho_2'", twoheadrightarrow  ]
		\ar[from=2-4, to=2-3, "\pi_2"]
		\ar[from=2-4, to=2-5, "\rho_2"']
		\ar[from=2-4, to=1-4, tail]
		\ar[from=2-7, to=1-7, tail]
		\ar[from=2-2, to=2-1, "\pi_1", twoheadrightarrow]
		\ar[from=2-2, to=2-3, "\rho_1"']
		\ar[from=2-10, to=2-9, "\pi_n"]
		\ar[from=2-10, to=2-11, "\rho_n"', twoheadrightarrow]
		\ar[from=2-11, to=1-11, equal]
		\ar[from=2-10, to=1-10, equal]
		\ar[from=1-10, to=1-9, "\pi_n'"']
		\ar[from=1-10, to=1-11, "\rho_n'", twoheadrightarrow]
		\ar[from=1-8, to=1-7, "\pi_{n-1}'"', twoheadrightarrow]
		\ar[from=1-8, to=1-9, "\rho_{n-1}'", twoheadrightarrow]
		\ar[from=2-8, to=2-7, "\pi_{n-1}"]
		\ar[from=2-8, to=2-9, "\rho_{n-1}"']
		\ar[from=2-8, to=1-8, tail]
		\ar[from=2-9, to=1-9, tail]
	\end{tikzcd}
\end{center}

This entails that $r_n \cdot \ldots \cdot r_1 \leq r'_n \cdot \ldots \cdot r'_1$ and by applying $\ftF$ to the diagram we conclude that $\ftbF r_n \cdot \ldots \cdot \ftbF r_1 \leq \ftbF r'_n \cdot \ldots \cdot \ftbF r'_1$.
To see that $r_n \cdot \ldots \cdot r_1 \geq r'_n \cdot \ldots \cdot r'_1$ first note that by construction for $i=2,\ldots,n-1$, $x \mathrel{(r_i' \cdot \ldots \cdot r_2')} 0$ iff $x=0$ and $1 \mathrel{(r_{n-1}' \cdot \ldots \cdot r_i')} x$ iff $x=1$.
Now, suppose that $x \mathrel{r_1'} x_1 \mathrel{r_2'} x_2 \ldots \mathrel{r_{n-1}'} x_{n-1} \mathrel{r_n'} y$, then, as $0 \notin \cod(r_1')$ and $1 \notin \dom(r_n')$, for $i=1,\ldots,n-1$, $x_i \in X_i$.
Therefore, $x \mathrel{(r_n \cdot \ldots \cdot r_1)} y$.
\qed

\subsection{Proof of \autoref{p:960}}

Note that a relation $s \colon Y \relto Z$ is total iff it can be factorized as $f \cdot e^\circ$, for some function $f \colon A \to Z$ and some surjective map $e \colon A \to Y$.
And, dually, a relation $r \colon X \relto Y$ is surjective iff it can be factorized as $h \cdot g^\circ$, for some function $g \colon A \to X$ and some surjective map $h \colon A \to Z$.

\subsection{Proof of \autoref{p:100}}

Note that a relation $s \colon Y \relto Z$ is a partial function iff it can be factorized as $f \cdot i^\circ$, for some function $f \colon A \to Z$ and some injective map $i \colon A \to Y$.
And, dually, a relation $r \colon X \relto Y$ is the converse of partial function iff it can be factorized as $j \cdot g^\circ$, for some function $A \to X$ and some injective map $j \colon A \to Z$.
\qed

\subsection{Proof of \autoref{p:401}}

\begin{lemma}
  \label{p:101}
  Let \(r_1, \ldots, r_n\) be a composable sequence of  relations.
  Consider the following composable sequences of  relations:
  \begin{enumerate}
    \item \(s_1,\ldots,s_n\) defined by \(s_n = r_n\), and  \(s_i = \domr{s_{i+1}} \cdot r_i\), for \(i = 1,\ldots,n-1\);
    \item \(t_1,\ldots,t_n\) defined by \(t_1 = s_1\), and \(t_i = s_i \cdot \codr{t_{i-1}}\), for \(i = 2,\ldots, n\).
  \end{enumerate}
  Then, \(t_n \cdots t_1 = s_n \cdots s_1 = r_n \cdots r_1\) and for every \(i=2,\ldots,n\), \(\cod(t_{i-1}) = \dom(t_i)\).
\end{lemma}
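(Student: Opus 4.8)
The plan is to run everything off two elementary subidentity identities, $r\cdot\domr{r}=r$ and $\codr{r}\cdot r=r$, valid for any relation $r$, together with the pointwise description $\dom(q\cdot p)=\{x\mid\exists y.\,x\mathrel{p}y,\ y\in\dom(q)\}$ and its dual for codomain. First I would record two book-keeping identities that describe exactly what the backward sweep $s_1,\ldots,s_n$ and the forward sweep $t_1,\ldots,t_n$ compute:
\begin{equation*}
\dom(s_i)=\dom(r_n\cdots r_i)\qquad\text{and}\qquad\cod(t_i)=\cod(s_i\cdots s_1).
\end{equation*}
The first I would prove by downward induction on $i$, starting from $s_n=r_n$ and using $s_i=\domr{s_{i+1}}\cdot r_i$ together with the inductive value of $\dom(s_{i+1})$. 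The second I would prove by upward induction from $t_1=s_1$, using $t_i=s_i\cdot\codr{t_{i-1}}$ and the fact that $\codr{t_{i-1}}$ is the subidentity on $\cod(t_{i-1})$, so that $\cod(t_i)=s_i[\cod(t_{i-1})]=\cod(s_i\cdots s_1)$. Neither identity uses the composite claims, so there is no circularity.

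Next I would establish $s_n\cdots s_1=r_n\cdots r_1$ by downward induction, proving $s_n\cdots s_i=r_n\cdots r_i$. The base case $i=n$ is $s_n=r_n$. For the step I would rewrite $s_n\cdots s_i=(r_n\cdots r_{i+1})\cdot\domr{s_{i+1}}\cdot r_i$ using the inductive hypothesis, and then observe that, since the first book-keeping identity gives $\dom(s_{i+1})=\dom(r_n\cdots r_{i+1})$, the subidentity $\domr{s_{i+1}}$ is absorbed, i.e.\ $(r_n\cdots r_{i+1})\cdot\domr{s_{i+1}}=r_n\cdots r_{i+1}$. This yields $s_n\cdots s_i=r_n\cdots r_i$, and in particular the full claim for $i=1$. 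Symmetrically, I would prove $t_i\cdots t_1=s_i\cdots s_1$ by upward induction: with $t_i=s_i\cdot\codr{t_{i-1}}$ and the hypothesis $t_{i-1}\cdots t_1=s_{i-1}\cdots s_1$, the step reduces to $\codr{t_{i-1}}\cdot(s_{i-1}\cdots s_1)=s_{i-1}\cdots s_1$, which holds because the second book-keeping identity gives $\cod(s_{i-1}\cdots s_1)=\cod(t_{i-1})$, so the subidentity again acts as the identity on the relevant codomain. Combining the two chains gives $t_n\cdots t_1=s_n\cdots s_1=r_n\cdots r_1$.

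Finally, for $\cod(t_{i-1})=\dom(t_i)$, since $\codr{t_{i-1}}$ is a subidentity on $\cod(t_{i-1})$ a direct computation gives $\dom(t_i)=\dom(s_i\cdot\codr{t_{i-1}})=\cod(t_{i-1})\cap\dom(s_i)$. The one genuinely load-bearing observation is the containment $\cod(t_{i-1})\subseteq\dom(s_i)$: by construction $s_{i-1}=\domr{s_i}\cdot r_{i-1}$ can only end inside $\dom(s_i)$, so $\cod(s_{i-1})\subseteq\dom(s_i)$, whence $\cod(t_{i-1})=\cod(s_{i-1}\cdots s_1)\subseteq\cod(s_{i-1})\subseteq\dom(s_i)$. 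With this, the intersection collapses to $\cod(t_{i-1})$ and $\dom(t_i)=\cod(t_{i-1})$.

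The main obstacle is not any single computation but the bookkeeping: keeping the two inductions pointed in opposite directions (the $s$-sweep downward, the $t$-sweep upward) and isolating the book-keeping identities as a preliminary step so that the composite-preservation arguments and the domain identity become pure subidentity absorption. The conceptual heart is the realization that the backward sweep is precisely what enforces the consistency condition $\cod(s_{i-1})\subseteq\dom(s_i)$ that the forward sweep relies on; everything else is routine relational algebra.
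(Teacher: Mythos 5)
Your proof is correct and follows essentially the same route as the paper's: the load-bearing step is identical, namely $\dom(t_i)=\dom(s_i)\cap\cod(t_{i-1})$ together with $\cod(t_{i-1})\subseteq\cod(s_{i-1})\subseteq\dom(s_i)$ (the second inclusion by construction of the backward sweep), which collapses the intersection to $\cod(t_{i-1})$. Your explicit two-directional inductions and the book-keeping identities $\dom(s_i)=\dom(r_n\cdots r_i)$ and $\cod(t_i)=\cod(s_i\cdots s_1)$ merely spell out the composite equality $t_n\cdots t_1=s_n\cdots s_1=r_n\cdots r_1$ that the paper dispatches as ``clear'', so the two arguments coincide in substance.
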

\begin{proof}
  Clearly, \(t_n \cdots t_1 = s_n \cdots s_1 = r_n \cdots r_1\). %
  Moreover, note that \(\dom(t_i) = \dom(s_i) \cap \cod(t_{i-1})\), and \(\cod(t_{i-1}) = s_{i-1}[\cod(t_{i-2})]\) if \(i>2\) and \(\cod(t_{i-1}) = \cod(s_1)\) if \(i=2\).
  Thus, we have $\cod(t_{i-1})\le \cod(s_{i-1})$ for $i=2,\ldots,n$, and since \(\cod(s_{i-1}) \leq \dom(s_i)\) for \(i = 2,\ldots,n\), it follows that \(\dom(t_i) = \dom(s_i) \cap \cod(t_{i-1}) = \cod(t_{i-1})\).
\end{proof}

  Let  \(r_1, \ldots, r_n\) be a composable sequence of  relations such that \(r_n \cdot \ldots \cdot r_1 = 1_X\), for some set $X$.
  We will show that there is a Barr upper bound $t_1', \ldots, t_n'$ of $r_1, \ldots, r_n$ consisting only of total and surjective relations.
  Then, the claim follows immediately by \autoref{cor:least_flat}.
  By \autoref{p:101}, we obtain sequences \(s_1, \ldots, s_n\) and \(t_1,\ldots, t_n\) such that \(t_n \cdots t_1 = 1_X\) and for \(i=2, \ldots, n\), \(\cod(t_{i-1}) = \dom(t_i)\).
  Therefore, since \(\ftF\) preserves 1/4-mono pullbacks, by \autoref{p:100},
  \begin{flalign*}
    &&  & \ftbF r_n \cdots \ftbF r_1 \\
    &&  &\; = \ftbF s_n \cdot \ftbF \domr{s_n} \cdot \ftbF r_{n-1} \cdots \ftbF r_1  &\\
    &&  &\; = \ftbF s_n \cdot \ftbF (\domr{s_n} \cdot r_{n-1}) \cdots \ftbF r_1  &\\
    &&  &\; = \ftbF s_n \cdot \ftbF s_{n-1} \cdots \ftbF r_1  &\\
    &&  &\; = \ftbF s_n \cdot \ftbF s_{n-1} \cdot \ftbF\domr{s_{n-1}} \cdot \ftbF r_{n-2} \cdots \ftbF r_1  &\\
    &&  &\;\quad\vdots\\
    &&  &\; =\ftbF s_n \cdots \ftbF s_1 \\
    &&  &\; = \ftbF s_n \cdots \ftbF s_2 \cdot \ftbF \codr{t_1} \cdot \ftbF t_1  &\\
    &&  &\; = \ftbF s_n \cdot \ldots \ftbF (s_2 \cdot \codr{t_1}) \cdot \ftbF t_1  &\\
    &&  &\; = \ftbF s_n \cdot \ldots \ftbF t_2 \cdot \ftbF t_1  &\\
    &&  &\; = \ftbF s_n \cdots \ftbF s_3 \cdot \ftbF \codr{t_2} \cdot \ftbF t_2 \cdot \ftbF t_1  &\\
    &&  &\;\quad\vdots\\
    &&  &\;= \ftbF t_n \cdots \ftbF t_1.
  \end{flalign*}
  Furthermore, given spans $X \xfrom{f} R \xto{g} Y$, $Y \xfrom{f'} R \xto{g'} Z$ and a monomorphism $i \colon Y \rightarrowtail A$, we have $g'\cdot f'^\circ \cdot i^\circ \cdot i \cdot g \cdot f^\circ = g'\cdot f'^\circ \cdot g \cdot f^\circ$.
  Therefore, as $\ftF$ preserves monomorphisms, we obtain a Barr upper bound $t_1', \ldots, t_n'$ of $r_1, \ldots, r_n$ consisting only of total and surjective relations by (co)restricting $t_i$ to its (co)domain
  (Note that as $t_n \cdot \ldots \cdot t_1 =1_X$, $t_1$ is total and $t_n$ is surjective).
\qed

 \subsection{Proof of \autoref{p:900}}

  First note that \(r_3\cdot r_2\leq r_1^{\circ}\) and \(r_2\cdot r_1\leq r_3^{\circ}\) since \(1_{X_1}\leq r_1\cdot r_1^{\circ}\) and \(1_{X_2}\leq r_3^{\circ}\cdot r_3\). Since $r_2\leq\hat r_2 = \bigvee_{n\in\nat} r_2\cdot (r_2^\circ\cdot r_2)^n$, we show that \(r_3\cdot r_2\cdot (r_2^\circ\cdot r_2)^n\cdot r_1\leq 1_X \), for all \(n\in\nat\). We proceed by induction on~$n$, with the base case $n=0$ being the hypothesis of the lemma. Assuming that the assertion is true for \(n\in\nat\), we calculate
  \begin{displaymath}
    r_3\cdot r_2\cdot (r_2^\circ\cdot r_2)^n\cdot r_2^{\circ}\cdot r_2\cdot r_1
    \leq r_1^{\circ}\cdot (r_2^\circ\cdot r_2)^n\cdot r_2^{\circ}\cdot r_3^{\circ}
    =(r_3\cdot r_2\cdot (r_2^\circ\cdot r_2)^n\cdot r_1)^{\circ}
    \leq 1_X^{\circ}=1_X.\qedhere
  \end{displaymath}
 \qed

\end{document}